\newcommand{\rev}[1]{\text{rev}(#1)}
\newcommand{\pend}[1]{\text{end}(#1)}
\newcommand{\pstart}[1]{\text{start}(#1)}
\newcommand{\rvec}{\boldsymbol{r}}
\newcommand{\cost}{\boldsymbol{c}}
\newcommand{\interior}{\text{int}}
\newenvironment{enumtight}
{
  \begin{enumerate}
    \setlength{\itemsep}{0pt}
    \setlength{\parsep}{0pt}
    \setlength{\topsep}{0pt}
    \setlength{\partopsep}{0pt}
    \setlength{\parskip}{0pt}
  }
  { \end{enumerate} }
\newcounter{tmp}
\newcounter{tmp2}
\newenvironment{smquotehelper}{\begin{list}{}{\setlength{\itemsep}{.25ex}\setlength{\labelsep}{0pt}
      \setlength{\topsep}{.25ex}\setlength{\leftmargin}{3em}
      \setlength{\parsep}{.1ex}}}{\end{list}}
\newtheorem{lemma}{Lemma}[section] 
\newtheorem{definition}[lemma]{Definition}
\newtheorem{corollary}[lemma]{Corollary}
\newtheorem{theorem}[lemma]{Theorem}
\newcommand{\set}[1]{\ensuremath{\{#1\}}}
\renewcommand{\setminus}{-}
\newcommand{\cvec}{\bm{c}}
\newcommand{\dist}{\ensuremath{{dist}}}
\newcommand{\pathcat}{\ensuremath{\circ}}
\newcommand{\opt}{\ensuremath{{\rm OPT}}}
\newcommand{\BC}{{\cal B}^+}
\newcommand{\MG}{\ensuremath{\text{\it MG}}}
\newcommand{\PRG}{\ensuremath{\BC(\MG)}}
\newcommand{\CC}{{\cal B}^\div}
 \newcounter{xcount}
\begin{document}
\title{The two-edge connectivity survivable-network design problem in planar
  graphs}
\author{Glencora Borradaile\\Oregon State University \and Philip
  Klein\\ Brown University}

\maketitle

\begin{abstract} 
  Consider the following problem: given a graph with edge costs and
  a subset $Q$ of vertices, find a minimum-cost subgraph in which
  there are two edge-disjoint paths connecting every pair of vertices
  in $Q$.  The problem is a failure-resilient analog of the Steiner
  tree problem arising, for example, in telecommunications
  applications.  We study a more general mixed-connectivity
  formulation, also employed in telecommunications optimization.
  Given a number (or {\em requirement}) $r(v)\in \set{0,1,2}$ for each
  vertex $v$ in the graph, find a minimum-cost subgraph in which
  there are $\min\set{r(u), r(v)}$ edge-disjoint $u$-to-$v$ paths for
  every pair $u,v$ of vertices.

  We address the problem in planar graphs, considering a popular
  relaxation in which the solution is allowed to use multiple copies
  of the input-graph edges (paying separately for each copy).  The
  problem is max SNP-hard in general graphs and strongly NP-hard in
  planar graphs.  We give the first polynomial-time approximation
  scheme in planar graphs.  The running time is $O(n \log n)$.

  Under the additional restriction that the requirements are only
  non-zero for vertices on the boundary of a single face of a planar
  graph, we give a polynomial-time algorithm to find the optimal
  solution.
\end{abstract}

\section{Introduction}

In the field of telecommunications network design, an important
requirement of networks is resilience to link
failures~\cite{TelecommunicationsHandbook}.  The goal of the
survivable network problem is to find a graph that provides multiple
routes between pairs of terminals.  In this work we address a problem
concerning edge-disjoint paths. 
 For a set $S$ of non-negative
integers, an instance of the {\em $S$-edge connectivity design} problem is a pair
$(G, \rvec)$ where $G = (V,E)$ is a 
undirected graph with edge costs $\cost\; : \; V \rightarrow \Re^+$ and {\em connectivity requirements} $\rvec\; : \; V \rightarrow S$.  The goal is to find a
minimum-cost subgraph of $G$ that, for each pair $u,v$ of
vertices, contains at least $\min\set{\rvec(u),\rvec(v)}$ edge-disjoint
$u$-to-$v$ paths.

In telecommunication-network design, failures are rare; for this
reason, there has been much research on {\em low-connectivity network design}
  problems, in which the maximum connectivity requirement is two.
 Resende and Pardalos~\cite{TelecommunicationsHandbook}
survey the literature, which includes heuristics, structural results,
polyhedral results, computational results using cutting planes, and
approximation algorithms.  This work focuses on $\set{0,1,2}$-edge
connectivity problems in planar graphs.

We consider the previously studied variant wherein the solution
subgraph is allowed to contain multiple copies of each edge of the
input graph (a {\em multi}-subgraph); the costs of the edges in the
solution are counted according to multiplicity.  For
$\set{0,1,2}$-connectivity, at most two copies of an edge are needed.
We call this the {\em relaxed} version of the problem and use the
term {\em strict} to refer to the version of the problem in which
multiple copies of edges of the input graph are disallowed.

A {\em polynomial-time approximation scheme} (PTAS) for an
optimization problem is an algorithm that, given a fixed constant
$\epsilon > 0$, runs in polynomial time and returns a solution within
$1 +\epsilon$ of optimal. The algorithm's running time need not be
polynomial in $\epsilon$.  The PTAS is {\em efficient} if the running
time is bounded by a polynomial whose degree is independent of
$\epsilon$.  In this paper, we focus on designing a PTAS for
$\set{0,1,2}$-edge connectivity.

\paragraph{Two edge-connected spanning subgraphs} 
A special case that has received much attention is the problem of
finding a minimum-cost subgraph of $G$ in which {\em every} pair of
vertices is two edge-connected.  Formally this is the strict
$\set{2}$-edge connectivity design problem.  This problem is
NP-hard~\cite{ET76} (even in planar graphs, by a reduction from
Hamiltonian cycle) and max-SNP hard~\cite{CL99} in general graphs. 
In general graphs, 
Frederickson and J\'{a}J\'{a}~\cite{FJ81} gave an approximation ratio of 3 which was later improved to 2
(and 1.5 for unit-cost graphs) by Khuller and Vishkin~\cite{KV94}.
In planar graphs, 
Berger et~al.~\cite{BCGZ05} gave a polynomial-time
approximation scheme (PTAS) for the relaxed $\set{1,2}$-edge connectivity design
problem and Berger and Grigni~\cite{BG07} gave a PTAS for the
strict $\set{2}$-edge connectivity design problem.  
Neither of these
algorithms is efficient; the degree of the polynomial bounding the
running time grows with $1/\epsilon$. 
For the relaxed version of spanning planar 2-edge connectivity, the techniques of
Klein~\cite{Klein08} can be used to 
obtain a linear-time approximation scheme.

\paragraph{Beyond spanning} 
When a vertex can be assigned a requirement of zero, edge-connectivity
design problems include the {\em Steiner tree problem}: given a graph
with edge costs and given a subset of vertices (called the {\em
  terminals}), find a minimum-cost connected subgraph that includes
all vertices in the subset.  More generally, we refer to any vertex with a non-zero connectivity requirement as a terminal.
For $\set{0,2}$-edge connectivity design problem, in general graphs,
Ravi~\cite{Ravi92} showed that Frederickson and J\'aJ\'a's approach
could be generalized to give a 3-approximation algorithm (in general
graphs).  Klein and Ravi~\cite{KR93} gave a 2-approximation for the
$\set{0,1,2}$-edge connectivity design problem.  (In fact, they solve
  the even more general version in which requirements $\rvec({u,v})$ are
  specified for {\em pairs} $u,v$ of vertices.)  This result was
generalized to connectivity requirements higher than two by Williamson
et~al.~\cite{WGMV93}, Goemans et~al.~\cite{GGPSTW94}, and
Jain~\cite{Jain01}.  These algorithms each handle the strict version of the problem.

\bigskip

In their recent paper on the spanning case~\cite{BG07}, Berger and
Grigni raise the question of whether there is a PTAS for the
$\set{0,2}$-edge connectivity design problem in planar graphs.  In this
paper, we answer that question in the affirmative for the
relaxed version.  The question in the case of the strict version is
still open.

\subsection{Summary of new results}

Our main result is a PTAS for the relaxed $\set{0,1,2}$-edge
connectivity problem in planar graphs:

\begin{theorem} \label{thm:main}
For any $\epsilon$, there is an $O(n \log n)$ algorithm that, given a
planar instance $(G, \rvec)$ of relaxed $\set{0,1,2}$-edge
connectivity, finds a solution whose cost is at most $1+\epsilon$
times optimal. 
\end{theorem}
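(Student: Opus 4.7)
The plan is to follow the by-now standard three-phase framework for planar PTASs: construct a light subgraph (a \emph{spanner}) that contains a nearly-optimal feasible solution, decompose the spanner into pieces of bounded branchwidth using a shifted BFS-like slicing, and solve the problem optimally on each piece by dynamic programming, then take the union. The $O(n \log n)$ running time is expected to follow once the spanner has size $O(n)$ and each piece admits a linear-time DP whose constants depend only on $\epsilon$.

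More concretely, the first phase would be to build, in $O(n \log n)$ time, a subgraph $H$ of $G$ such that (i) $H$ contains a feasible solution of cost at most $(1+\epsilon/2)\cdot\text{OPT}$ and (ii) the total edge cost of $H$ is bounded by $f(\epsilon)\cdot\text{OPT}$ for some function $f$. For the Steiner-tree case this is done via mortar graphs and portal connections; here I would try to adapt that construction to $\set{0,1,2}$-connectivity by first reducing to a \emph{core} structure (e.g., a near-optimal Steiner-like skeleton on the terminals with $\rvec(v)\geq 1$, augmented with $2$-edge-connected blocks on the vertices with $\rvec(v)=2$), and then showing that the mortar-plus-portals construction preserves feasibility of such augmentations up to a $(1+\epsilon/2)$ factor.

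The second phase would invoke the shifting technique to slice $H$ along concentric BFS levels with a random offset in $\set{0,1,\dots,\lceil c/\epsilon\rceil}$; a standard averaging argument identifies an offset whose cut edges contribute at most $(\epsilon/2)\cdot\text{OPT}$, since the total cost of $H$ is only $f(\epsilon)\cdot\text{OPT}$. The resulting annular pieces have branchwidth $O(1/\epsilon)$, so on each piece the relaxed $\set{0,1,2}$-edge connectivity problem can be solved exactly in time linear in the size of the piece via a branch-decomposition DP. The DP state records, for each boundary vertex, the pattern of connectivity classes (singletons, $1$-connected groups, $2$-connected groups) that the partial solution induces on the boundary together with the outstanding demands; this state space has size bounded by a function of the branchwidth alone. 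Reassembling the per-piece solutions and adding the cut edges yields a feasible multi-subgraph of cost $(1+\epsilon)\cdot\text{OPT}$.

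The main obstacle, in my view, is the first phase: producing a spanner whose total cost is bounded in terms of $\text{OPT}$ for the $\set{0,1,2}$ problem. The difficulty is that unlike a Steiner tree, a $2$-edge-connected augmentation is not acyclic, so the mortar-graph argument (which relies on shortcutting paths along a tree and paying only for the short bridging detours) does not transfer verbatim. I would try to decompose an optimal solution into a Steiner forest serving the $\set{0,1}$-requirements plus an ear-decomposition-like structure implementing the $2$-requirements, apply a mortar construction separately to each component, and then show that the portals suffice to reconstruct both the tree-like and the $2$-edge-connected parts with only a $(1+\epsilon)$ cost blowup. Once this structural result is in place, the shifting and DP phases should combine in the usual way to yield the claimed $O(n\log n)$ PTAS.
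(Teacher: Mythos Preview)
Your three-phase outline (mortar-graph-style light skeleton, shifting into bounded-complexity pieces, dynamic programming) is exactly the framework the paper uses, and you correctly locate the difficulty in the structural result of phase one. However, the approach you sketch for that structural result is where the gap lies, and it diverges substantially from what actually works.

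The paper does \emph{not} decompose $\opt$ into a Steiner forest for the $\{0,1\}$-part plus an ear-like structure for the $2$-part, nor does it build separate mortar constructions for the two. A single mortar graph is built from a $2$-approximate Steiner tree on all terminals; the Structure Theorem is then proved brick-by-brick, showing that $\opt$'s intersection with each brick can be replaced by a bounded number of non-crossing trees (and possibly one cycle) touching the brick boundary only at $O_\epsilon(1)$ points, while preserving global $2$-edge-connectivity. The machinery needed for this is rather specific and not hinted at in your proposal: (i) an \emph{augment} step that duplicates east/west supercolumns so that all brick-boundary interaction happens on the north/south sides; (ii) a \emph{cleaving} step that splits vertices so that $2$-edge-connectivity in $\opt$ becomes $2$-\emph{vertex}-connectivity, after which one can invoke two key lemmas about minimal $S$-biconnected graphs (every cycle contains a terminal; every $C$-to-$C$ path contains a terminal) to force the in-brick part of $\opt$ to be a forest with leaves on the boundary; and (iii) a new ``forest-cycle'' lemma that, when many disjoint trees cross a brick north-to-south, replaces all but $O(1/\epsilon)$ of them by a single cycle at controlled cost---this is what preserves the edge-disjointness needed for $2$-connectivity after restructuring.

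Your ear-decomposition idea does not obviously yield any of this. An ear can be long and thread through many bricks, and there is no a priori bound on how many ears meet a single brick; without the biconnectivity reduction and the cycle/path-must-contain-a-terminal lemmas, you have no handle on the number of joining vertices per brick, and without the forest-cycle lemma you have no way to collapse many parallel crossings while keeping two edge-disjoint routes between terminals on opposite sides. So while phases two and three of your plan are essentially right (and the paper's DP indeed tracks a block-cut-tree-like ``configuration'' on the cut edges, matching your description), phase one as you describe it is not a proof: the structural theorem requires the cleave/biconnectivity/forest-cycle toolkit, which is the paper's main technical contribution.
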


This result builds on the work of Borradaile, Klein and
Mathieu~\cite{BKK07,BKM07,BKM09} which gives a PTAS for the Steiner
tree (i.e.~$\set{0,1}$-edge connectivity) problem.
  This is the first
PTAS for a non-spanning two-edge connectivity problem in planar graphs.

Additionally, we give an exact, polynomial-time algorithm for the
special case where are the vertices with non-zero requirement are on
the boundary of a common face:

\begin{theorem} \label{thm:2ec-exact} There is an $O(k^3n)$-time
  algorithm that finds an optimal solution to any planar instance $(G,
  \rvec)$ of 
relaxed $\set{0,1,2}$-edge-connectivity in which only $k$ vertices are assigned nonzero requirements and all of
 them are on the boundary of a single face.  For instances of relaxed
 $\set{0,2}$-edge connectivity (i.e. all requirements are 0 or 2), the
 algorithm runs in linear time.
\end{theorem}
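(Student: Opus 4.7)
Label the terminals $t_1,\ldots,t_k$ in cyclic order around the designated face $f$. The plan is to extract a canonical structural form from any optimum and then assemble it combinatorially. I would first prove that, in any optimal relaxed multi-subgraph, the 2-terminals all lie on a single cycle $C$ of the multi-subgraph (since they must share a 2-edge-connected block) and, by planarity, appear along $C$ in the same cyclic order as on $\partial f$. Each 1-terminal lies on $\partial f$ in an arc between two consecutive 2-terminals $a,b$ of $C$, and by planarity its attachment to $C$ is confined to the planar region bounded by that arc and the $a$-to-$b$ segment of $C$; the 1-terminal attachments therefore decouple across arcs.

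With this structure in hand, the $\set{0,2}$-case reduces to finding a minimum-cost cycle through $t_1,\ldots,t_k$ in cyclic $\partial f$-order. I would handle it by the standard planar dual trick: slice $G$ open along $\partial f$, duplicating the boundary vertices, so that the sought cycle becomes a path between the two copies of a chosen boundary point. A single linear-time planar shortest-path computation then gives the claimed $O(n)$ bound.

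For the $\set{0,1,2}$-case, I would run a dynamic program over intervals of $\partial f$: state $(i,j)$ represents the arc from $t_i$ to $t_j$, and its value is the minimum cost of a Steiner subgraph spanning $t_i$, $t_j$, and every terminal strictly between them along this arc, with $t_i$ and $t_j$ serving as open endpoints of the eventual cycle. A transition chooses an intermediate terminal $t_\ell$ as a split point and combines the two sub-intervals with a short connecting path computed by a planar shortest-path subroutine; base cases are arcs with no intermediate terminals. With $O(k^2)$ states, $O(k)$ split choices per state, and $O(n)$ work per combine step, the total running time is $O(k^3 n)$.

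The main obstacle is the structural theorem. I plan to prove it by a planar uncrossing argument: whenever two pieces of an optimal solution cross in the plane, one can uncross them without increasing cost and while preserving every required connectivity; iterating produces the canonical ``single cycle with arc-decoupled trees'' form. Once this is in place, the two algorithms are routine shortest-path and dynamic-programming exercises.
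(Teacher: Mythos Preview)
Your structural picture matches the paper's: an optimal solution contains a non-self-crossing cycle through the 2-terminals in their cyclic $\partial f$-order, and the 1-terminals in each arc between consecutive 2-terminals are served by a subgraph confined to that arc's region. The paper establishes this via leftmost paths and an empty-cycle lemma rather than by iterated uncrossing, but either route is workable. The algorithmic parts of your plan, however, both have genuine gaps.

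For the $\{0,2\}$-case, slicing along $\partial f$ and computing a single shortest path between two copies of a boundary point yields a shortest closed walk winding once around $f$---but nothing in that construction forces the walk to pass through any of the terminals, so it does not solve the problem. The paper's route is different: once the structure lemma is in hand, the optimum is simply the multiset union of shortest $q_i$-to-$q_{i+1}$ paths over consecutive 2-terminals $q_i$. Naively this is $O(kn)$; to get linear time the paper computes one shortest-path tree $T$ from $q_1$, uses the tree paths $T[q_{i+1},q_i]$ together with the boundary arcs $\partial f[q_i,q_{i+1}]$ to carve $G$ into regions $G_1,\ldots,G_k$ of total size $O(n)$ (each edge lies in at most two regions), proves that each $q_i$-to-$q_{i+1}$ shortest path can be taken inside $G_i$, and runs a linear-time planar shortest-path computation in each region.

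For the $\{0,1,2\}$-case, your interval recurrence that splits at an intermediate terminal $t_\ell$ does not correctly compute a minimum Steiner tree on an arc: such a tree need not decompose into Steiner trees on the two sub-arcs meeting at a terminal. The basic counterexample is a three-terminal $Y$ with an interior Steiner vertex $s$; the optimal tree has edges $st_i,\,st_\ell,\,st_j$, strictly cheaper than any union of a $t_i$--$t_\ell$ path and a $t_\ell$--$t_j$ path. The correct boundary-Steiner DP (Erickson--Monma--Veinott) carries a vertex of $G$ in the state in addition to the interval, precisely to accommodate Steiner vertices. The paper invokes that algorithm as a black box on each arc's terminal set $X_i=\{q_i\}\cup\{\text{1-terminals in the arc}\}\cup\{q_{i+1}\}$, and reuses the same shortest-path-tree region decomposition so the graphs on which it is run have total size $O(n)$; this is what yields the $O(k^3 n)$ bound.
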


\subsection{Organization}

We start by proving Theorem~\ref{thm:2ec-exact} in
Section~\ref{sec:exact}.  The proof of this result is less involved
and provides a good warm-up for the proof of Theorem~\ref{thm:main}.
The algorithm uses the linear-time shortest-path algorithm for planar
graphs~\cite{HKRS97} and a polynomial-time algorithm for the
equivalent boundary Steiner-tree problem~\cite{EMV87} as black boxes.

In order to prove Theorem~\ref{thm:main}, we need to review the
framework developed for the Steiner tree problem in planar graphs.  We
give an overview of this framework in Section~\ref{sec:ptas-framework} and show
how to use it to solve the relaxed $\set{0,1,2}$-edge connectivity
problem.  The correctness of the PTAS relies on a Structure Theorem
(Theorem~\ref{thm:structure}) which bounds the number of interactions of a solution between
different regions of the graph while paying only a small
relative penalty in cost.  We prove this Structure Theorem in
Section~\ref{sec:structure-theorem}.  The algorithm itself requires a
dynamic program; we give the details for this in Section~\ref{sec:dp}.

\section{Basics}

We consider graphs and multi-subgraphs.  A multi-subgraph is a
subgraph where edges may be included with multiplicity.  
In proving the Structure Theorem we will replace subgraphs of a solution with other subgraphs.  In doing so, two of the newly introduced subgraphs may share an edge.

For a subgraph $H$ of a graph $G$, we use $V(H)$ to denote the set of
vertices in $H$.  For a graph $G$ and set of edges $E$, $G / E$
denotes the graph obtained by contracting the edges $E$.

For a path $P$, $P[x,y]$ denotes the $x$-to-$y$ subpath of $P$ for
vertices $x$ and $y$ of $P$; $\pend{P}$ and $\pstart{P}$ denote the
first and last vertices of $P$; $\rev{P}$ denotes the reverse of path
$P$.  For paths $A$ and $B$, $A\pathcat B$ denotes the concatenation
of $A$ and $B$.  See Figure~\ref{fig:crossing} for an illustration of
the notion of paths crossing.  A cycle is non-self-crossing if every
pair of subpaths of the cycle do not cross.

We employ the usual definitions of planar embedded graphs.  For a face
$f$, the cycle of edges making up the boundary of $f$ is denoted
$\partial f$.  We assume the planar graph $G$ is connected and is
embedded in the plane, so there is a single infinite face, and we
denote its boundary by $\partial G$.

For a cycle $C$ in a planar embedded graph, $C[x,y]$ denotes an
$x$-to-$y$ path in $C$ for vertices $x$ and $y$ of $C$.  There are two
such paths and the choice between the two possibilities will be
disambiguated by always choosing the subpath in the clockwise
direction.  A cycle $C$ is said to {\em enclose} the faces that are
embedded inside it.  $C$ encloses an edge/vertex if the edge/vertex is
embedded inside it or on it.  In the former case, $C$ {\em strictly
  encloses} the edge/vertex.  For non-crossing $x$-to-$y$ paths $P$
and $Q$, $P$ is said to be left of $Q$ if $P \pathcat \rev{Q}$ is a
clockwise cycle.

\begin{figure}[ht]
  \centering
     \subfigure[]{\includegraphics[scale=0.8]{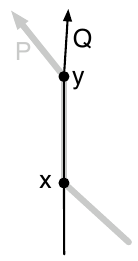}}
     \subfigure[]{\includegraphics[scale=0.8]{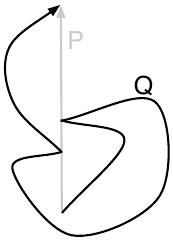}}
     \subfigure[]{\includegraphics[scale=1]{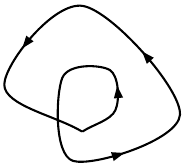}}
     \subfigure[]{\includegraphics[scale=1]{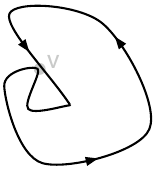}}
    \caption[Crossing paths and cycles]{(a) $P$ crosses $Q$. (b) $P$
      and $Q$ are noncrossing; $Q$ is left of $P$  (c) A self-crossing cycle. (d) A
      non-self-crossing cycle (non-self-crossing allows for repeated
      vertices, i.e.~$v$.)}
  \label{fig:crossing}
\end{figure}

We will use the following  as a subroutine:

\begin{theorem}\label{thm:emv87}{Erickson, Monma and Veinott~{\bf \cite{EMV87}}} 
  Let $G$ be a planar embedded graph with edge-costs and let $Q$ be a set of
  $k$ terminals that all lie on the boundary of a single face. Then
  there is an algorithm to find an minimum-cost Steiner tree of $G$
  spanning $Q$ in time $O(nk^3 + (n \log n)k^2)$ or $O(nk^3)$ time using the algorithm of~\cite{HKRS97}.
\end{theorem}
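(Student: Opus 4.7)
My plan is to reduce the problem to an interval dynamic program that exploits a cyclic-order property of optimal Steiner trees. First I would fix the embedding so that the face containing the terminals is the outer face, and label the $k$ terminals as $q_1,\ldots,q_k$ in the clockwise order in which they appear on $\partial G$. The key structural lemma is that there is an optimal Steiner tree $T^\star$ whose DFS traversal visits the terminals in this same cyclic order. This follows from planarity: augmenting $T^\star$ with an artificial ``apex'' placed inside the outer face and joined by zero-cost edges to each $q_i$ produces a plane graph in which the face-cycle around the apex dictates the cyclic order of terminals, and any optimum whose traversal disagrees with the boundary order can be uncrossed by a standard planar exchange argument without increasing cost.

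Given this structure, I would define for $1\le i\le j\le k$ and each $v\in V(G)$ the table
\[ T[i,j,v] \;=\; \text{min cost of a subtree of $G$ containing $v$ and all of $q_i,q_{i+1},\ldots,q_j$,} \]
with base case $T[i,i,v]=\mathrm{dist}_G(v,q_i)$. The recurrence has two transitions. The \emph{merge} step is $T[i,j,v]\le T[i,m,v]+T[m+1,j,v]$ for $i\le m<j$, corresponding to splitting the optimal subtree at $v$ into two pieces spanning contiguous sub-arcs of terminals; this is exactly where the structural lemma is used, as it guarantees that the split point in the tree corresponds to a split point in the cyclic ordering. The \emph{extend} step is $T[i,j,v]\le T[i,j,u]+\mathrm{dist}_G(u,v)$, handling the case where $v$ is attached to the subtree by a path through a closer vertex $u$. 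The final answer is $\min_v T[1,k,v]$.

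For complexity, there are $O(k^2 n)$ states. The merge step costs $O(k)$ per state and hence $O(k^3 n)$ in total. The extend step is the piece I would implement carefully: for each of the $O(k^2)$ intervals $[i,j]$, the update $T[i,j,v]\leftarrow \min_u\{T[i,j,u]+\mathrm{dist}_G(u,v)\}$ is a single-source shortest-path computation in $G$ with initial labels obtained from the merge step, so it costs $O(n\log n)$ via Dijkstra or $O(n)$ using the linear-time planar shortest-path algorithm of~\cite{HKRS97}. Summing yields the stated $O(nk^3+(n\log n)k^2)$ or $O(nk^3)$ bound. The main obstacle is the structural lemma: one must handle subtrees that meet $\partial G$ at non-terminal vertices and traversals that revisit a single vertex of $T^\star$, arguing by planar uncrossing at the level of sub-arcs rather than sub-trees that the cyclic traversal order of terminals can be taken to agree with the boundary order. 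Once this planarity argument is in place, correctness and complexity of the DP follow by a standard Dreyfus--Wagner style analysis.
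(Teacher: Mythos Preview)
The paper does not prove this theorem; it is quoted as an external result of Erickson, Monma and Veinott and invoked only as a black-box subroutine (see the sentence ``We will use the following as a subroutine'' immediately preceding the statement). So there is no proof in the paper to compare your attempt against.

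That said, your sketch is essentially the standard argument behind the cited result: the key structural fact is that when all terminals lie on a common face, an optimal Steiner tree can be taken to meet the terminals in their boundary cyclic order, which collapses the Dreyfus--Wagner subset DP to an interval DP over contiguous arcs $[i,j]$ with $O(k^2 n)$ states. Your merge/extend recurrences and the running-time accounting are correct, including the observation that the extend step is a single shortest-path computation per interval, giving the $(n\log n)k^2$ term, or $O(n)$ per interval via the planar linear-time SSSP of~\cite{HKRS97}. The one place to be careful in a full write-up is the uncrossing lemma: you need that for any rooted decomposition of $T^\star$ at a vertex $v$, the terminal sets of the resulting subtrees form contiguous arcs, which is exactly what planarity of $T^\star\cup\partial G$ gives you; once that is stated cleanly the DP correctness is routine.
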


\subsection{Edge-connectivity basics}

Since we are only interested in connectivity up to and including
two-edge connectivity, we define the following: For a graph $H$ and
vertices $x,y$, let
$$c_H(x,y)=\min \set{2, \text{maximum number of edge-disjoint $x$-to-$y$
    paths in $H$}}.$$ For two multi-subgraphs $H$ and $H'$ of a common
graph $G$ and for a subset $S$ of the vertices of $G$, we say $H'$
{\em achieves the two-connectivity of $H$} for $S$ if $c_{H'}(x,y)
\geq c_H(x,y)$ for every $x,y\in S$.  We say $H'$ achieves the {\em
  boundary} two-connectivity of $H$ if it achieves the
two-connectivity of $H$ for $S = V(\partial G)$.

Several of the results in the paper build on observations of the
structural property of two-edge connected graphs.  The first is a well-known property:

\begin{lemma}[Transitivity] \label{lem:transitive} For any graph $H$, for vertices
  $u,v,w\in V(H)$, $c_H(u,w) \geq \min\{c_H(u,v), c_H(v,w)\}$
\end{lemma}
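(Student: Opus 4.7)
The plan is to reduce to a statement about edge cuts via Menger's theorem and then observe that a small cut separating $u$ from $w$ would induce an equally small cut on one of the pairs $(u,v)$ or $(v,w)$.

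First I would unfold the definition of $c_H$. By Menger's theorem, the maximum number of edge-disjoint $x$-to-$y$ paths equals the minimum number of edges whose removal disconnects $x$ from $y$. Hence, for $k \in \{1,2\}$, $c_H(x,y) \geq k$ if and only if every edge subset $F \subseteq E(H)$ with $|F| < k$ fails to separate $x$ from $y$ in $H$, i.e.\ $x$ and $y$ lie in the same connected component of $H \setminus F$.

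Next, set $k = \min\{c_H(u,v), c_H(v,w)\}$. If $k = 0$ the statement is vacuous. Otherwise $k \in \{1,2\}$, and I would take an arbitrary edge set $F \subseteq E(H)$ with $|F| < k$ and argue it cannot separate $u$ from $w$. Indeed, since $|F| < k \leq c_H(u,v)$, the vertices $u$ and $v$ lie in the same component of $H \setminus F$; likewise $v$ and $w$ lie in the same component. By transitivity of the ``same component'' relation, $u$ and $w$ lie in the same component of $H \setminus F$, so $F$ does not separate $u$ from $w$. Applying the equivalence above in the other direction yields $c_H(u,w) \geq k$, as required.

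I do not anticipate any real obstacle; the only subtlety is remembering that the capping at $2$ in the definition of $c_H$ is harmless because the argument quantifies only over cuts of size $<k \leq 2$, and Menger's theorem applies unchanged. If one prefers to avoid invoking Menger, one can argue the two nontrivial cases directly: for $k=1$ this is just transitivity of connectivity, and for $k=2$ one observes that any single-edge cut separating $u$ from $w$ would put $v$ on one side, yielding a single-edge cut separating $v$ from whichever of $u,w$ is on the opposite side, contradicting $c_H(u,v) \geq 2$ or $c_H(v,w) \geq 2$.
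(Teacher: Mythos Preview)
Your argument is correct. The reduction to edge cuts via Menger's theorem is clean, and the observation that ``same component of $H\setminus F$'' is transitive immediately gives the bound; the capping at $2$ is indeed harmless because you only ever instantiate $k\le 2$. The alternative direct case split you sketch at the end is also fine.

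As for comparison with the paper: there is nothing to compare. The paper states this lemma as ``a well-known property'' and gives no proof at all. Your write-up therefore supplies strictly more than the paper does; either the Menger-based argument or the two-case direct argument would be an acceptable justification to include.
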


Note that in the following, we can replace ``strictly encloses no/strictly enclosed'' with
``strictly encloses all/strictly not enclosed'' without loss of generality (by viewing a face
enclosed by $C$ as the infinite face).

\begin{lemma}[Empty Cycle] \label{lem:empty-cycle} Let $H$ be a
  (multi-)subgraph of $G$ and let $C$ be a non-self-crossing cycle of
  $H$ that strictly encloses no terminals.  Let $H'$ be the subgraph
  of $H$ obtained by removing the edges of $H$ that are strictly
  enclosed by $C$.  Then $H'$ achieves the two-connectivity of $H$.
\end{lemma}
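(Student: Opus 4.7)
The plan is to show $c_{H'}(x,y) \geq c_H(x,y)$ for every pair of vertices $x,y$ not strictly enclosed by $C$ (terminals satisfy this by hypothesis). The core tool is a \emph{substitution} operation: given any $x$-to-$y$ walk $P$ in $H$, build a walk $P^*$ in $H'$ by replacing each maximal subwalk of $P$ consisting entirely of edges strictly enclosed by $C$ with a walk in $C$ between its two endpoints. Each such excursion must have both endpoints on $V(C)$, because any internal vertex strictly inside $C$ has all its incident edges strictly enclosed by $C$ (so it cannot end the maximal subwalk), and $x,y$ themselves are not strictly inside $C$. Since $E(C) \subseteq E(H')$, the substituted walk $P^*$ uses only edges of $H'$.

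The case $c_H(x,y) \geq 1$ is then immediate: apply the substitution to any $x$-to-$y$ path in $H$ to produce an $x$-to-$y$ walk in $H'$. For the case $c_H(x,y) = 2$, I argue by contradiction. Suppose $c_{H'}(x,y) \leq 1$; by the previous case $c_{H'}(x,y) = 1$, so there exists an edge $e \in E(H')$ whose removal disconnects $x$ from $y$ in $H'$. Since $c_H(x,y) = 2$, the graph $H \setminus \{e\}$ still contains an $x$-to-$y$ path $Q$. Apply the substitution to $Q$, but now choose each excursion replacement to avoid $e$: for an excursion with endpoints $u,v$, pick a $u$-to-$v$ walk in $C \setminus \{e\}$. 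Such a walk always exists, since either $e \notin E(C)$ (any walk in $C$ works) or $e \in E(C)$, in which case $C \setminus \{e\}$ is a connected sub-multigraph (a cycle minus one edge) that still contains a $u$-to-$v$ walk. The resulting walk lies in $E(H') \setminus \{e\}$, contradicting that $e$ separates $x$ from $y$ in $H'$.

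The only subtlety is that $C$ need not be a simple cycle (cf. Figure~\ref{fig:crossing}(d)). The non-self-crossing hypothesis is used twice: it makes the strict interior of $C$ a well-defined topological disk, so that ``endpoints on $V(C)$" of each excursion is meaningful, and it lets us treat $C$ as a single closed walk that remains connected after any single edge deletion, which is exactly what the avoidance step needs. Everything else is bookkeeping.
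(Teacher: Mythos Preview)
Your proof is correct but takes a genuinely different route from the paper's. The paper argues constructively: for the case $c_H(x,y)=2$ it takes two edge-disjoint $x$-to-$y$ paths $P,Q$ in $H$, records the first and last points $x_P,y_P,x_Q,y_Q$ where each meets $C$, and then splices in \emph{complementary} arcs of $C$ to produce two explicit edge-disjoint paths in $H'$; this requires a case analysis on the cyclic order of $x_P,x_Q,y_P,y_Q$ along $C$ (the paper does one case and waves at the others). Your argument instead goes by contradiction via a bridge: if $c_{H'}(x,y)<2$ there is a single edge $e$ of $H'$ whose removal separates $x$ from $y$, and you re-route a path of $H\setminus\{e\}$ through $C\setminus\{e\}$, using only that a cycle minus one edge is still connected. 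This sidesteps the ordering case analysis entirely and is arguably cleaner, especially when $C$ is non-simple; the price is that it is non-constructive (you do not exhibit the two disjoint paths, you only certify that no single-edge cut exists). Both approaches rest on the same geometric fact---excursions into the strict interior of $C$ begin and end on $V(C)$---but exploit it differently: the paper pairs up arcs of $C$ explicitly, while you use the $2$-edge-connectivity of $C$ itself.
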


\begin{proof} See Figure~\ref{fig:cycle-2ec}(b).  Without loss of
  generality, view $C$ as a clockwise cycle.  Consider two terminals
   $x$ and $y$.  We show that there are $c_H(x,y)$
  edge-disjoint $x$-to-$y$ paths in $H$ that do not use edges strictly
  enclosed by $C$.  There are two nontrivial cases:

  \begin{description}
  \item[$c_H(x,y)=1:$] Let $P$ be an $x$-to-$y$ path in $H$.  If $P$
    intersects $C$, let $x_P$ be the first vertex of $P$ that is in
    $C$ and let $y_P$ be the last vertex of $P$ that is in $C$.  Let
    $P' = P[x,x_P] \pathcat C[x_P,y_P] \pathcat P[y_P,y]$.  If $P$
    does not intersect $C$, let $P' = P$.  $P'$ is an $x$-to-$y$ path
    in $H$ that has no edge strictly enclosed by $C$.
  \item[$c_H(x,y)=2:$] Let $P$ and $Q$ be edge-disjoint $x$-to-$y$
    paths in $H$.  If $Q$ does not intersect $C$, then $P'$ and $Q$
    are edge-disjoint paths, neither of which has an edge strictly
    enclosed by $C$ (where $P'$ is as defined above).  Suppose that
    both $P$ and $Q$ intersect $C$.  Define $x_Q$ and $y_Q$ as for
    $P$.  Suppose these vertices are ordered $x_P$, $x_Q$, $y_Q$,
    $y_P$ around $C$.  Then $P[x,x_P] \pathcat C[x_P,y_Q] \pathcat
    Q[y_Q,y]$ and $Q[x,x_Q] \pathcat \rev{C[y_P,x_Q]} \pathcat
    P[y_P,y]$ are edge disjoint $x$-to-$y$ paths that do not use any
    edges enclosed by $C$.  This case is illustrated in
    Figure~\ref{fig:cycle-2ec}; other cases (for other orderings of
    $\{x_P, x_Q, y_Q, y_P\}$ along $C$) follow similarly.
  \end{description}

\begin{figure}[ht]
  \centering
   {\includegraphics[scale=1]{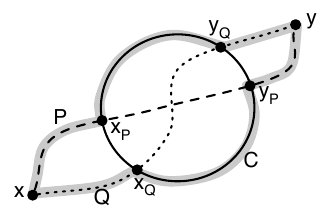}}
   \caption{An illustration of the proof of
     Lemma~\ref{lem:empty-cycle}: there are edge-disjoint $x$-to-$y$
     paths (grey) that do not use edges enclosed by $C$. }
  \label{fig:cycle-2ec}
\end{figure}

  \noindent We have shown that we can achieve the boundary
  two-connectivity of $H$ without using any edges strictly enclosed by a cycle of
  $H$.  The lemma follows.
 \end{proof}

\subsection{Vertex-connectivity basics} 

The observations in
this section do not involve planarity.  Although our results are for
edge connectivity, we use vertex connectivity in
Section~\ref{sec:structure-theorem} to simplify our proofs.

Vertices
$x$ and $y$ are {\em biconnected} (a.k.a.~two-vertex-connected) in a graph $H$ if $H$
contains two $x$-to-$y$ paths that do not share any internal vertices, or, equivalently, if there is a simple cycle in $H$ that contains both $x$ and $y$.  
For a subset $S$ of vertices of $H$, we say $H$ is {\em $S$-biconnected} if
for every pair $x,y$ of vertices of $S$, $H$ contains a simple cycle
through $x$ and $y$.  We refer to the vertices of $S$ as {\em terminals}.

\begin{lemma} \label{lem:His2VC}
A minimal $S$-biconnected graph is biconnected.
\end{lemma}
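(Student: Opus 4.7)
The plan is to prove the contrapositive by exploiting the block decomposition of $H$. Recall that the blocks of a graph (maximal subgraphs with no cut vertex) form a tree structure glued along cut vertices, and any simple cycle in the graph is entirely contained in a single block.

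First I would observe that any two vertices of $S$ must lie in a common block of $H$. Indeed, $S$-biconnectedness demands a simple cycle through each pair $x,y \in S$; such a cycle cannot traverse any cut vertex twice, so it lives inside one block. Applying this to all pairs and using that blocks pairwise intersect in at most a cut vertex, I would conclude that there is a single block $B$ of $H$ containing every vertex of $S$.

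Next I would use minimality to argue that $H$ equals $B$. Suppose $H$ contained an edge $e \notin B$. Then $e$ cannot lie on any simple cycle passing through two terminals, because such a cycle is confined to $B$. Consequently $H \setminus e$ still contains, for every pair $x,y \in S$, the same simple cycle through $x$ and $y$ that $H$ did, and so $H \setminus e$ is still $S$-biconnected, contradicting minimality of $H$. Therefore $H$ has no edges outside $B$, so $H = B$ (modulo isolated vertices, which minimality likewise rules out as they contribute no edges needed for any $S$-cycle), and a single block is by definition biconnected.

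The only real subtlety is the degenerate case in which the block $B$ is a bridge, i.e.\ consists of a single edge between two vertices; here $|S|\le 2$, and for $|S|=2$ the two terminals must lie on a common simple cycle, which in the multigraph setting of the paper can be realized only by at least two parallel copies of the edge, so $B$ is still biconnected in the sense used here. The main (and only) obstacle is thus being careful that ``lies on a common simple cycle'' really does force all terminals into one block; once that is established the minimality argument is immediate.
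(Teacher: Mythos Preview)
Your argument is correct, and at its core it is the same idea as the paper's proof, just dressed in more structural language. The paper's proof is a three-line version of yours: assume $H$ has a cut vertex $v$, write $H = H_1 \cup H_2$ with $H_1 \cap H_2 = \{v\}$; if both $H_i$ contain terminals then some pair of terminals is separated by $v$ and hence not biconnected; otherwise one $H_i$ is terminal-free and can be deleted, contradicting minimality. Your ``all terminals lie in a single block'' step is exactly the contrapositive of the first case, and your edge-deletion step is the second case applied one edge at a time.

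The one place you flag as needing care---passing from ``every pair of terminals shares a block'' to ``all terminals share a block''---is genuine but routine: either observe that if some terminal is not a cut vertex it lies in a unique block which must then contain every other terminal, or invoke the Helly property for subtrees of the block--cut tree. Your handling of the degenerate bridge case is also fine in the paper's multigraph setting. So nothing is wrong; the paper's version is simply shorter because it works with a single cut vertex directly rather than the full block decomposition.
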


\begin{proof}
  Suppose $H$ has a cut-vertex $v$: $H=H_1 \cup H_2$ where $H_1 \cap
  H_2 = \{v\}$ and $H_i \ne \{v\}$ for $i = 1,2$.  If $H_1$ and $H_2$ both have terminals then $H$ does
  not biconnect every pair of terminals.  If, say, $H_2$ does
  not have a terminal then $H \setminus (H_2 \setminus \{v\})$ is a
  smaller subgraph that biconnects the terminals.
\end{proof}

For the next two proofs, we use the notion of an (open) ear decomposition.  An ear decomposition of a graph $G$ is a partition of the edges into a cycle $C$ and a sequence of paths $P_1, P_2, \ldots, P_k$ such that the endpoints of $P_i$ are in $C \cup( \bigcup_{j < i} P_j)$.  The ear decomposition is open if the endpoints of $P_i$ are distinct.  A graph is biconnected iff it has an open ear decomposition~\cite{Whitney32}.   Ear decompositions can be built greedily starting with any cycle.  

\begin{theorem} \label{thm:2vc-terminal}
Let $H$ be a minimal $S$-biconnected graph.
Every cycle in $H$ contains a vertex of $S$.
\end{theorem}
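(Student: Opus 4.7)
The plan is to argue by contradiction: suppose $C \subseteq H$ is a cycle with $V(C) \cap S = \emptyset$, and exhibit an edge $e \in E(C)$ such that $H - e$ remains $S$-biconnected, contradicting the minimality of $H$. By Lemma~\ref{lem:His2VC}, $H$ is 2-vertex-connected. The first step pins down where obstructions can live: if $H - e$ fails to be $S$-biconnected for $e = uv \in E(C)$, then by Menger's theorem some cut vertex $z$ of $H - e$ separates a terminal pair; biconnectivity of $H$ forces every $x$-to-$y$ path in $H - z$ to use $e$, so $z$ separates $u$ from $v$ in $H - e$, and since $C - e$ is such a path, $z$ must lie in $V(C) \setminus \{u, v\}$. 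So all candidate ``bad'' cut vertices are confined to $V(C)$.

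Next I would analyse the bridge structure of $C$ in $H$: each bridge has at least two attachment vertices on $V(C)$ (by 2-connectivity), and all terminals live in bridges since $V(C) \cap S = \emptyset$. Let $A \subseteq V(C)$ be the union of attachments; then $|A| \geq 2$ (otherwise $H$ has a cut vertex, or $H = C$ has no terminals and is not minimal). For a candidate bad pair $\{e, z\}$ with $z \in V(C)$, removing $z$ turns $C$ into a path and removing $e$ partitions its vertex set into two sub-paths $I_U \ni u$ and $I_V \ni v$. Call a bridge $B$ \emph{stuck} on one side if $A(B) \setminus \{z\}$ lies entirely in that sub-path, and \emph{spanning} otherwise. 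The cut $\{e, z\}$ separates two terminals precisely when no bridge spans and each side contains a terminal-carrying stuck bridge.

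The crux is to select $e$ admitting no bad $z$. When $|A| = 2$, every bridge has both vertices of $A$ as attachments, so bridges span whenever $z \notin A$ and collapse onto one side whenever $z \in A$; hence every edge of $C$ is removable. For $|A| \geq 3$, I would pick $e$ in an arc $C[a, a']$ between consecutive attachments that is adjacent to a ``hub'' — an attachment lying in the attachment sets of many bridges. For each candidate $z$, either the smaller side of the $e$-split lies inside the interior of $C[a, a']$ (hence contains no attachment and so no terminal), or a bridge whose attachments straddle the hub still spans $\{e, z\}$ and provides a bypass. The main difficulty is the case analysis for $|A| \geq 3$ under overlapping bridge patterns; the combinatorial core is that the cyclic arrangement of $A$ on $C$ together with the $\geq 2$-attachment condition always leaves at least one arc immune to terminal-separating cuts, which a pigeonhole/extremal argument over the chord-like coverage of bridges pinpoints.
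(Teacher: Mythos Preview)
Your reduction is correct --- any obstruction to removing $e = uv \in E(C)$ is a vertex $z \in V(C) \setminus \{u,v\}$, and the bridge/attachment framework is the right setup. But the $|A| \geq 3$ case is where all the content lies, and you have not proved it. The ``hub'' heuristic is undefined (how many bridges make a hub? why must one exist? why do its bridges straddle every $z$?), and the promised ``pigeonhole/extremal argument over the chord-like coverage'' is precisely the missing idea, not a proof. Concretely: with $|A|=4$ and two bridges whose attachment pairs interleave ($\{a_1,a_3\}$ and $\{a_2,a_4\}$), every attachment lies in exactly one bridge, so there is no hub at all; such an $H$ is indeed not minimal, but that is what must be \emph{shown}, and your sketch does not show it.

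The paper sidesteps the case analysis by removing a path rather than a single edge. In your language: take a shortest arc $P$ of $C$ whose endpoints $x,y$ are attachments of a \emph{common} bridge; the key observation is that then no bridge attaches to any internal vertex of $P$. Deleting all of $P$, the remainder is still biconnected: the witnessing bridge together with $C-P$ supplies a cycle through $x$ and $y$, and any path in $H$ that used the interior of $P$ necessarily entered and exited at $x,y$, so it can be rerouted through that cycle. This single extremal choice of $P$ also finishes \emph{your} line of attack: any edge of such a $P$ is removable, since for $z$ internal to $P$ one side of the $(e,z)$-split lies in the attachment-free interior of $P$, while for $z$ outside $P$ the witnessing bridge spans. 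So the step you are missing is exactly the existence of an arc between two attachments of a common bridge whose interior is free of all attachments.
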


\begin{proof} 
  Assume for a contradiction that $H$ contains a cycle $C$ that does not contain any terminals.  By Lemma~\ref{lem:His2VC}, $H$ is biconnected and so has an open ear decomposition starting with $C$; let $C, P_1, P_2, \ldots, P_k$ be an open ear decomposition of $H$.  Define $\mathcal{E}_i$ to be the subgraph composed of $C$ and the first $i$ ears: $C \cup (\bigcup_{j \le i} P_j)$.
  We construct another open ear decomposition with one fewer ear $C', P_2', \ldots, P_k'$ of $H$ as follows (note there is no ear $P_1'$) and use $\mathcal{E}_i'$ to denote $C' \cup (\bigcup_{j \le i} P_j')$.  

Let $x$ and $y$ be the endpoints of $P_1$.  Let $C' = C[y,x] \circ P_1$.  Let $Q_1' = C[x,y]$ be the portion of $C$ that is not used in $C'$. We will maintain the invariant:
  \[Q_i' \text{ contains at least one edge and } Q_i' = \mathcal{E}_i\setminus \mathcal{E}_i'\] Clearly this invariant holds for $i = 1$.  For $i \ge 2$, we define $P_i'$ using $P_i$ and $Q_{i-1}'$.  Note that one or more of the endpoints of $P_i$ may be in $Q_{i-1}'$ and so $\mathcal{E}_{i-1}' \cup P_i$ is not necessarily a valid ear decomposition.  However, by the invariant, $P_i$'s endpoints are in $Q_{i-1}' \cup  \mathcal{E}_{i-1}'$, allowing us to define a new valid ear $P_i'$ by extending $P_i$ along $Q_{i-1}'$ to reach $\mathcal{E}_{i-1}'$ as follows:  $P_i'$ is the minimal path of $P_i \cup Q_{i-1}' \cup  \mathcal{E}_{i-1}'$ whose endpoints are in $\mathcal{E}_{i-1}'$
such that $P_i$ is a subpath of $P_i'$.  Define $Q_i' = Q_{i-1}' \setminus (P_i' \setminus P_i)$.  Since $P_i$ has distinct endpoints, $P_i'$ does not contain all the edges of $Q_{i-1}'$, thus maintaining the invariant.

  By construction, $C', P_2', \ldots, P_k'$ is an open ear decomposition of $H \setminus Q_k'$ and so $H \setminus Q_k'$ is biconnected.  Since $Q_k' \subset C$ and $C$ does not contain any terminals, $H \setminus Q_k'$ is $S$-biconnected and since $Q_k'$ contains at least one edge, $H \setminus Q_k'$ contradicts the minimality of $H$.
\end{proof}

\begin{theorem} \label{thm:cycle-and-path}
Let $H$ be a minimal $S$-biconnected graph.  For any cycle $C$ in $H$,
every $C$-to-$C$ path contains a vertex of $S$.
\end{theorem}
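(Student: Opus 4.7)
I would argue by contradiction, paralleling the proof of Theorem~\ref{thm:2vc-terminal}. Suppose $P$ is a $C$-to-$C$ path in $H$ with $V(P) \cap S = \emptyset$. The endpoints $x, y$ of $P$ must be distinct: otherwise $P$ is a cycle disjoint from $S$, contradicting Theorem~\ref{thm:2vc-terminal}. Let $C_1$ and $C_2$ denote the two arcs of $C$ from $x$ to $y$.

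Among all $C$-to-$C$ paths in $H$ whose vertex set avoids $S$, I would pick $P$ to minimize $|E(P)|$. Minimality implies $P$ has no ``shortcut'': no edge of $H$ joins two non-adjacent vertices of $P$, and no edge joins an internal vertex of $P$ to a vertex of $V(C) \setminus (\{x,y\} \cup S)$, since either sort of edge would yield a strictly shorter $C$-to-$C$ path avoiding $S$. Define $H'$ to be the subgraph of $H$ obtained by deleting the internal vertices of $P$, their incident edges, and also the edge $xy$ itself if $P$ is a single edge. Since $V(P) \cap S = \emptyset$, we have $V(H') \supseteq S$, and $H'$ is a proper subgraph of $H$.

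The rest of the proof mirrors that of Theorem~\ref{thm:2vc-terminal}: I would show $H'$ is still $S$-biconnected, contradicting the minimality of $H$. First, $x$ and $y$ are biconnected in $H'$ via the two arcs $C_1, C_2$, which survive in $H'$. For any $a, b \in S$, it suffices to show that no $w \in V(H')$ is a cut vertex of $H'$ separating $a$ from $b$: given a simple $a$-to-$b$ path $R$ in $H - w$ (existing because $H$ is biconnected by Lemma~\ref{lem:His2VC}), if $R$ avoids the internal vertices and edges of $P$ then $R \subseteq H' - w$; otherwise, substitute the $P$-portion of $R$ with whichever of $C_1, C_2$ avoids $w$. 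Such an arc always exists, because for $w \neq x, y$, $w$ lies on at most one of the two arcs; the cases $w = x$ or $w = y$ are handled by the fact that $R$ then cannot use $P$ at all.

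The main obstacle is establishing the analog of ``Observation~P'' from the proof of Theorem~\ref{thm:2vc-terminal}: namely, that any simple-path traversal of the interior of $P$ in $H - w$ must enter and exit through $x$ or $y$ along edges of $P$, enabling the clean substitution described above. Shortcut-freeness directly rules out ``chord'' entries from vertices of $C$ other than $x, y$ and internal chords within $P$; entries from connected components of $H \setminus (V(C) \cup V(P))$ adjacent to internal vertices of $P$ are more delicate and require either a further refinement of the minimality condition on $P$ (for instance, also minimizing against such external attachments) or a case analysis that reroutes those traversals around $C$. Once this structural observation is in place, the substitution preserves simplicity of $R$ and yields the required cycle through $a$ and $b$ in $H'$, completing the contradiction.
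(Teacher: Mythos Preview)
Your approach---delete the internal vertices of $P$ and show the remainder is still $S$-biconnected by rerouting through the arcs of $C$---is natural, but it diverges from the paper's argument, and the obstacle you flag at the end is precisely where it stalls. The analogue of Observation~\ref{obs:P} is not available here: minimizing $|E(P)|$ among $S$-avoiding $C$-to-$C$ paths does nothing to prevent a component $K$ of $H\setminus(V(C)\cup V(P))$ from attaching only at internal vertices of $P$. If such a $K$ contains a terminal, your $H'$ is not even connected on $S$, so no rerouting can help; if $K$ contains no terminal but the path $R$ passes through it, your ``substitute an arc of $C$'' step does not apply, since $R$ enters and leaves the interior of $P$ without touching $x$ or $y$. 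Neither of your proposed remedies (a refined minimality condition, or a rerouting case analysis) is carried out, and I do not see a clean way to make either work without essentially changing the strategy.

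The paper sidesteps this difficulty with a different device. It keeps $P$ in place and instead searches for an auxiliary ear $Q$: a simple $P$-to-$P$ path whose interior avoids $C_1\cup P$ (the arc $C_2$ witnesses that one exists), chosen to minimize the subpath $P'\subseteq P$ between $Q$'s endpoints. Only the \emph{first edge} $e$ of $P'$ is deleted. One then looks at the biconnected component $B$ of $H-e$ containing $C$; it automatically contains $Q$ and $P\setminus P'$, and the minimality of $P'$ is what rules out any further block hanging off $P'$---any such block would give a $P$-to-$P$ ear with a strictly shorter span. Since $P'\subseteq P$ avoids $S$, every terminal lies in $B$, so $H-e$ is still $S$-biconnected, contradicting minimality of $H$. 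The point is that deleting a single edge, located via the ear $Q$, makes the block--cut structure tractable in a way that deleting all of $P$ does not.
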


\begin{proof} 
  Let $C$ be any cycle.  By Theorem~\ref{thm:2vc-terminal}, $C$ contains a terminal.  We consider an ear decomposition $C, P_1, P_2, \ldots$  of $H$ built as follows.  Consider $s \in S$ not spanned by $C \cup P_1 \cup \cdots \cup P_{i-1}$. Then there are vertex-disjoint paths from $s$ to $C \cup P_1 \cup \cdots \cup P_{i-1}$ since $s$ and the terminal on, for example, $C$ are biconnected.  Let $P_i$ be the ear formed by these vertex-disjoint paths.  Observe that by this construction each ear $P_i$ contains a terminal for every $i$.

  Suppose every path in $\cup_{i \le k} P_i$ with two endpoints in $C$ strictly contains a vertex of $S$. We prove that this is then the case for $\cup_{i \le k+1} P_i$.  Since $P_{k+1}$ is an ear, its endpoints are in  $\cup_{i \le k} P_i$ and so any $C$-to-$C$ path that uses an edge of $P_{k+1}$ would have to contain the entirety of $P_{k+1}$; therefore $P_{k+1}$ cannot introduce a terminal-free path.
\end{proof}

\section{An exact algorithm for boundary $\{0,1,2\}$-edge
  connectivity} \label{sec:exact}  
Our algorithm for the boundary case of $\{0,1,2\}$-edge connectivity,
as formalized in Theorem~\ref{thm:2ec-exact} is based on the
observation that there is an optimal solution to the problem that is
the union of Steiner trees whose terminals are boundary vertices,
allowing us to employ the boundary-Steiner-tree algorithm of
Theorem~\ref{thm:emv87}.

When terminals are restricted to the boundary of the graph, no cycle
can strictly enclose a terminal.  By Lemma~\ref{lem:empty-cycle}, we
get:

\begin{corollary} \label{cor:empty-cycle} Let $H$ be a subgraph of $G$
  and let $H'$ be a minimal subgraph of $H$ that achieves the boundary
  two-connectivity of $H$.   Then in $H'$ every cycle $C$ strictly
  encloses no edges.
\end{corollary}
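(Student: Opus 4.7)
The plan is to derive the corollary as a direct consequence of Lemma~\ref{lem:empty-cycle} together with the boundary restriction on terminals. The main point is that, when all terminals lie on $\partial G$, the hypothesis of Lemma~\ref{lem:empty-cycle} is automatically satisfied for every cycle in $H'$.

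First I would fix an arbitrary cycle $C$ in $H'$. Since $H'$ is a subgraph of a planar embedded graph $G$ and $C$ is a simple cycle, its image in the embedding is a non-self-crossing closed curve (two parallel copies of a single edge are handled by the convention stated in Section~\ref{sec:basics} that they are embedded next to each other, so they also bound a non-self-crossing curve). Next I would observe that $C$ strictly encloses no terminals: every terminal lies on $\partial G$, which is the boundary of the infinite face, so no terminal can sit in a bounded face of the embedding and in particular cannot be strictly enclosed by any cycle of $G$.

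With these two facts in hand, Lemma~\ref{lem:empty-cycle} applies to $H'$ and $C$: the subgraph $H''$ obtained from $H'$ by deleting every edge strictly enclosed by $C$ still achieves the two-connectivity of $H'$, hence also the boundary two-connectivity of $H$ (since $H'$ does). But $H'$ is by assumption a \emph{minimal} such subgraph, so we cannot have $H'' \subsetneq H'$. Therefore no edges are strictly enclosed by $C$, as claimed. Since $C$ was arbitrary, the conclusion holds for every cycle of $H'$.

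I do not anticipate any real obstacle here: the only subtlety worth flagging is the non-self-crossing hypothesis of Lemma~\ref{lem:empty-cycle}, which is why I would spend one sentence noting that simple cycles in the planar embedding (including length-two multi-edge cycles, under the paper's embedding convention) are automatically non-self-crossing.
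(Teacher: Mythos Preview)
Your proposal is correct and mirrors the paper's own argument, which consists of the single sentence preceding the corollary: terminals on $\partial G$ cannot be strictly enclosed by any cycle, so Lemma~\ref{lem:empty-cycle} applies, and minimality of $H'$ then forbids strictly enclosed edges. The only quibble is that you restrict to simple cycles when justifying the non-self-crossing hypothesis, whereas Lemma~\ref{lem:empty-cycle} (and the corollary) already works for any non-self-crossing cycle directly; this restriction is harmless but unnecessary.
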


In the following we will assume that the boundary of the graph $G$ is
a simple cycle; that is, a vertex appears at most once along $\partial
G$.  Let us see why this is a safe assumption.  Suppose the boundary
of $G$ is not simple: there is a vertex $v$ that appears at least
twice along $\partial G$.  Partition $G$ into two graphs $G_1$ and
$G_2$ such that $G_1 \cap G_2 = v$, $v$ appears exactly once along $\partial G_1$ and
$E(\partial G) = E(\partial G_1) \cup E(\partial G_2)$.  Let $x$ be a
vertex of $\partial G_1$ and let $y$ be a vertex of $\partial G_2$.
Then $c_G(x,y) = \min \set{c_{G_1}(x,v), c_{G_2}(v,y)}$, allowing us
to define new connectivity requirements and solve the problem
separately for $G_1$ and $G_2$.

\begin{lemma} \label{lem:leftmost-paths-not-cross} Let $P$ and $Q$ be
 leftmost non-self-crossing $x_P$-to-$y_P$ and $x_Q$-to-$y_Q$ paths,
 respectively, where $x_P$, $y_P$, $x_Q$, and $y_Q$ are vertices in clockwise
 order on $\partial G$.  Then $P$ does not cross $Q$.
\end{lemma}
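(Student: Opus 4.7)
The plan is to assume for contradiction that $P$ crosses $Q$ and to build from $P$ and $Q$ a new $x_P$-to-$y_P$ path strictly to the left of $P$, contradicting leftmostness. The setup: because the endpoints appear on $\partial G$ in clockwise order $x_P, y_P, x_Q, y_Q$, the path $P$ drawn inside the disk bounded by $\partial G$ partitions the interior of the disk into a ``left'' open region, bounded by $P$ together with the clockwise arc $\partial G[x_P, y_P]$, and a ``right'' open region, bounded by $P$ together with the complementary arc (whose boundary contains both $x_Q$ and $y_Q$). By the paper's convention that $P'$ is left of $P$ whenever $P' \pathcat \rev{P}$ is a clockwise cycle, any $x_P$-to-$y_P$ path that detours through the left region is strictly left of $P$.

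Since both endpoints of $Q$ lie on the boundary of the right region, each crossing of $Q$ with $P$ toggles which region $Q$ is currently in, so the total number of crossings is even; in particular, if $Q$ crosses $P$ at all, it does so at least twice. Let $v_1, v_2$ be the first two crossings of $Q$ with $P$ encountered along $Q$ from $x_Q$. Then $Q[v_1, v_2]$, with endpoints removed, lies entirely in the left region of $P$. Now splice a new $x_P$-to-$y_P$ walk: if $v_1$ precedes $v_2$ along $P$ from $x_P$, set $P^* = P[x_P, v_1] \pathcat Q[v_1, v_2] \pathcat P[v_2, y_P]$; otherwise set $P^* = P[x_P, v_2] \pathcat \rev{Q[v_1, v_2]} \pathcat P[v_1, y_P]$. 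In either case $P^*$ replaces a subpath of $P$ with a piece of $Q$ that lies in the left region, so after shortcutting any self-crossings, $P^*$ is an $x_P$-to-$y_P$ path strictly to the left of $P$, contradicting leftmostness.

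The main obstacle is the topological bookkeeping: verifying rigorously that $Q[v_1, v_2]$ lies strictly in the left region (not merely tangent to $P$), that shortcutting the spliced walk to a non-self-crossing path preserves the ``strictly left of $P$'' property, and that shared vertices or edges between $P$ and $Q$ (which under the paper's definition do not count as crossings) do not disrupt the region-switching parity argument. The non-self-crossing hypothesis on $P$ is what ensures that its ``left'' and ``right'' sides are well-defined to begin with. Note that only the leftmostness of $P$ is used here; the hypothesis that $Q$ is also leftmost simply provides a symmetric alternative route to the same contradiction.
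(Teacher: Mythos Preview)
Your argument is sound and uses the same splicing idea as the paper, but with the roles of $P$ and $Q$ reversed: you insert a piece of $Q$ into $P$ to contradict the leftmostness of $P$, whereas the paper inserts a piece of $P$ into $Q$ and contradicts a hypothesis on $Q$. Concretely, the paper forms the cycle $C = P \pathcat \rev{\partial G[x_P,y_P]}$ (your ``left region''), lets $x$ and $y$ be the \emph{first and last} vertices of $Q$ on $P$ (not crossings), and case-splits on their order along $P$. If $x$ follows $y$ on $P$, then $Q[x_Q,x]\pathcat \rev{P[y,x]}\pathcat Q[y,y_Q]$ is an $x_Q$-to-$y_Q$ path strictly left of $Q$, contradicting leftmostness of $Q$. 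If $x$ precedes $y$, the cycle $Q[x_Q,x]\pathcat \rev{P[x_P,x]}\pathcat\partial G[x_P,x_Q]$ strictly encloses $y$ but not $y_Q$, forcing $Q$ to be self-crossing. So the paper actually uses both the leftmostness and the non-self-crossing hypotheses on $Q$, one per case.

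Your parity-of-crossings argument eliminates this case split and, as you observe, never invokes the leftmostness of $Q$ at all. The trade-off is that more weight falls on the shortcutting step. One clean way to dispatch it: the walk $P^* = P[x_P,v_1]\pathcat Q[v_1,v_2]\pathcat P[v_2,y_P]$ omits every edge of $P[v_1,v_2]$, so any simple $x_P$-to-$y_P$ path extracted from $P^*$ lies in the closure of the left region yet cannot equal $P$; hence it is strictly left of $P$. The paper's choice of first/last intersection points sidesteps this shortcutting wrinkle at the cost of the two-case analysis.
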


\begin{proof}
  For a contradiction, assume that $Q$ crosses $P$.  Refer to
  Figure~\ref{fig:leftmost-paths-not-cross}(a). Let $C$ (interior shaded)
  be the cycle $P \pathcat \rev{\partial G[x_P,y_P]}$.  $C$ strictly
  encloses neither $x_Q$ nor $y_Q$.  If $Q$ crosses $P$, there must be
  a subpath of $Q$ enclosed by $C$.  Let $x$ be the first vertex of
  $Q$ in $P$ and let $y$ be the last.  There are two cases:
 \begin{description}
 \item[$x \in P{[y,y_P]}:$] Refer to
   Figure~\ref{fig:leftmost-paths-not-cross}(a). In this case, $\rev{P[y,x]}$ is left of
     $Q[x,y]$ and so $Q[x_Q,x] \pathcat \rev{P[y,x]} \pathcat
     Q[y,y_Q]$ (grey path) is left of $Q$, contradicting the leftmostness
     of $Q$.
 \item[$x \in P{[x_P,y]}:$] Refer to
   Figure~\ref{fig:leftmost-paths-not-cross}(b). In this case,
   $Q[x_Q,x] \pathcat \rev{P[x_P,x]} \pathcat
   \partial G[x_P,x_Q]$ (shaded interior) is a cycle that strictly
   encloses $y$ and does not enclose $y_Q$.  Since $y$ is the last
   vertex of $Q$ on $P$, $Q$ must cross itself, a contradiction. \qedhere
 \end{description}
\end{proof}

\begin{figure}[ht]
  \centering
  \subfigure[]{\includegraphics{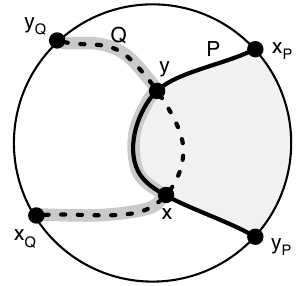}}
  \subfigure[]{\includegraphics{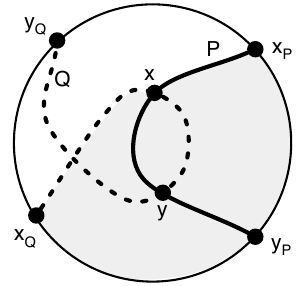}}
  \caption{Illustration of Lemma~\ref{lem:leftmost-paths-not-cross}:
    there exist leftmost paths that do not cross.}
  \label{fig:leftmost-paths-not-cross}
\end{figure}

\begin{lemma} \label{lem:clique-cycle} Let $H$ be a subgraph of $G$.
  Let $S$ be a subset of $V(\partial G)$ such that, for every $x,y\in
  S$, $c_H(x,y) = 2$.  Then there is a
  non-self-crossing cycle $C$ in $H$ such that $S \subseteq V(C)$ and
  the order that $C$ visits the vertices in $S$ is the same as their
  order along $\partial G$.
\end{lemma}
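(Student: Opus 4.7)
The plan is to construct $C$ as the concatenation of leftmost paths between consecutive elements of $S$. Label $S$ as $s_1,\ldots,s_k$ in clockwise order along $\partial G$ (the case $k\le 1$ is trivial). For each $i$ modulo $k$, let $P_i$ be a leftmost non-self-crossing $s_i$-to-$s_{i+1}$ path in $H$; such a path exists since $c_H(s_i,s_{i+1})\ge 2$ provides at least one $s_i$-to-$s_{i+1}$ path in $H$. Define $C := P_1\pathcat P_2\pathcat \cdots\pathcat P_k$. By construction, $C$ is a closed walk in $H$ whose traversal visits $s_1,s_2,\ldots,s_k$ in the clockwise order along $\partial G$, so only non-self-crossingness of $C$ remains to be verified.

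To verify non-self-crossingness I would show that no two subpaths of $C$ cross, splitting according to which $P_i,P_j$ contain them. Each $P_i$ is non-self-crossing by choice. For distinct $i,j$ with $\{s_i,s_{i+1}\}\cap\{s_j,s_{j+1}\}=\emptyset$, the four endpoints $s_i,s_{i+1},s_j,s_{j+1}$ appear in clockwise order along $\partial G$ (in one cyclic rotation or the other depending on whether $j>i+1$ or $j<i$), so Lemma~\ref{lem:leftmost-paths-not-cross} applies directly and yields that $P_i$ and $P_j$ do not cross. Together these observations rule out crossings of any pair of subpaths of $C$ except those lying in two consecutive leftmost paths $P_i,P_{i+1}$.

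The principal obstacle is the remaining case: two consecutive paths $P_i$ and $P_{i+1}$ share the endpoint $s_{i+1}$ and hence fall outside the hypothesis of Lemma~\ref{lem:leftmost-paths-not-cross}. I would adapt that lemma's proof. Supposing $P_{i+1}$ crosses $P_i$, a subpath of $P_{i+1}$ is enclosed by the cycle formed from $P_i$ together with a suitable arc of $\partial G$; rerouting $P_{i+1}$ by substituting a subpath of $\rev{P_i}$ (and part of $\partial G$ where necessary) produces either an alternative non-self-crossing $s_{i+1}$-to-$s_{i+2}$ path strictly to the left of $P_{i+1}$, contradicting its leftmost choice, or else forces a self-crossing inside $P_{i+1}$ or $P_i$, again a contradiction. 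A two-subcase split on the relative positions along $P_i$ of the first and last intersection points of $P_{i+1}$ with $P_i$ parallels the case analysis in the proof of Lemma~\ref{lem:leftmost-paths-not-cross}, and I expect this rerouting step to be the most delicate part of the argument.
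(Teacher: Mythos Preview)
Your approach is essentially the paper's: list $S$ in clockwise order, take the leftmost non-self-crossing path between each consecutive pair, concatenate, and invoke Lemma~\ref{lem:leftmost-paths-not-cross} to rule out crossings. The paper's proof simply applies that lemma to \emph{all} pairs $P_i,P_j$ with $i\neq j$ without singling out consecutive paths; it implicitly reads ``$x_P,y_P,x_Q,y_Q$ in clockwise order'' as allowing the degeneracy $y_P=x_Q$, so the separate adaptation you flag as the ``principal obstacle'' is not treated as a distinct case there. Your extra caution is not wrong---the degenerate case does fall through the same rerouting argument---but it is not an additional idea beyond what the paper already claims.
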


\begin{proof}
  Assume that the vertices of $S$ are in the clockwise order $s_0,
  s_1, \ldots, s_{k-1}$ along $\partial G$.

  Let $P_i$ be the leftmost non-self-crossing $s_{i-1}$-to-$s_i$ path
  in $H$, taking the indices modulo $k$.  Let $C = P_1 \pathcat P_2
  \pathcat \cdots \pathcat P_{k-1}$.  Certainly $C$ visits each of the
  vertices $s_0, s_1, \ldots$ in order.  By
  Lemma~\ref{lem:leftmost-paths-not-cross}, $P_i$ does not cross
  $P_j$ for all $i \neq j$.  Therefore, $C$ is non-self-crossing, proving the lemma.
\end{proof}

We now give an algorithm for the following problem: given a planar
graph $G$ with edge costs
and an assignment $\rvec$ of requirements such that $r(v)>0$ only
for vertices $v$ of $\partial G$, find a minimum-cost multi-subgraph $H$ of $G$ that
satisfies the requirements (i.e.~such that there are at least $\min\set{r(x),r(y)}$
edge-disjoint $x$-to-$y$ paths in $H$).

\begin{center}
  \fbox{
    \begin{minipage}[h]{.9\linewidth}
      \noindent {\sc Boundary2EC}$(G,r)$
      \begin{enumerate}
        \setlength{\itemsep}{0pt}
        \setlength{\parsep}{0pt}
        \setlength{\topsep}{0pt}
        \setlength{\partopsep}{0pt}
        \setlength{\parskip}{0pt}
      \item Let $q_1, q_2, \ldots$ be the cyclic ordering of
        vertices $\set{v \in V(\partial G) :\ \rvec(v) = 2}$.
      \item For $i = 1, \ldots$, let $X_i = \set{q_i} \cup \set{v \in V(\partial
          G[q_i,q_{i+1}])\ :\ \rvec(v) = 1} \cup \set{q_{i+1}}$.
      \item For $i = 1, \ldots$, let $T_i$ be the minimum-cost Steiner
        tree spanning $X_i$.
      \item Return the disjoint union $\cup_i T_i$.
      \end{enumerate}
    \end{minipage}
  }
\end{center}

We show that {\sc Boundary2EC} correctly finds the minimum-cost multi-subgraph of
$G$ satisfying the requirements.  Let $\opt$ denote an optimal
solution.  By Lemma~\ref{lem:clique-cycle}, $\opt$ contains a
non-self-crossing cycle $C$ that visits $q_1, q_2, \ldots$ (as defined
in {\sc Boundary2EC}).  By Corollary~\ref{cor:empty-cycle}, $C$
strictly encloses no edges of $\opt$.  Let $P_i$ be the leftmost
$q_i$-to-$q_{i+1}$ path in $C$. The vertices in $X_i$ are connected in
$\opt$, by the input requirements.  Let $S_i$ be the subgraph of
$\opt$ that connects $X_i$.  This subgraph is enclosed by $\partial
G[q_i,q_{i+1}]\pathcat C[q_i,q_{i+1}]$.  Replacing $S_i$ by $T_i$
achieves the same connectivity among vertices $v$ with $\rvec(v)>0$ without
increasing the cost.

We will use the following lemma to give an efficient implementation of
{\sc Boundary2EC}.

\begin{lemma} \label{lem:shortest-paths-enclosed} Let $a$, $b$ and $c$
  be vertices ordered along the clockwise boundary $\partial G$ of a
  planar graph $G$.  Let $T_a$ be the shortest-path tree rooted at $a$
  (using edge costs for lengths).  Then for any set of terminals $Q$
  in $\partial G[b,c]$, there is a minimum-cost Steiner tree
  connecting them that enclosed by the cycle $\partial G[b,c] \pathcat
  T_a[c,b]$.
\end{lemma}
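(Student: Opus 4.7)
The plan is to start with any optimal Steiner tree $T^*$ for $Q$ and locally modify it into a Steiner tree enclosed by the cycle $C := \partial G[b,c] \pathcat T_a[c,b]$ without increasing its cost. First I would observe that since $\partial G[b,c]$ lies on the outer face $\partial G$, no edges of $G$ lie strictly outside $C$ on the $\partial G[b,c]$ side; hence the ``outside'' region of $C$ is bounded by $T_a[c,b]$ and $\partial G[c,a] \cup \partial G[a,b]$, and in particular any maximal connected subtree $B$ of $T^*$ whose edges are strictly outside $C$ attaches to the rest of $T^*$ only at vertices of $V(T_a[c,b])$.

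Given such an outside subtree $B$ with attachment set $V_B \subseteq V(T_a[c,b])$, let $u_c$ and $u_b$ be the elements of $V_B$ closest to $c$ and $b$ along $T_a[c,b]$, respectively. My modification is to delete $B$ together with its attachment edges and insert the subpath $T_a[c,b][u_c,u_b]$. Connectivity of $Q$ is preserved because every vertex of $V_B$ lies on the inserted subpath, so any route through $B$ can be rerouted along it. Iterating this modification over all maximal outside subtrees yields a tree enclosed by $C$, and the whole argument reduces to showing $\cost(T_a[c,b][u_c,u_b]) \leq \cost(B)$ for each modification.

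I would split the cost inequality into two cases according to the position of $u_c,u_b$ relative to $\ell := \text{LCA}_{T_a}(b,c)$. When both lie on the same side of $\ell$, the subpath $T_a[c,b][u_c,u_b]$ is a subpath of the shortest path $T_a[a,b]$ or $T_a[a,c]$, hence is itself a shortest $u_c$-to-$u_b$ path in $G$; its cost is therefore at most the length of the unique $u_c$-to-$u_b$ path inside the tree $B$, and so at most $\cost(B)$.

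The main obstacle is the straddling case, where $u_c \in T_a[c,\ell]$ and $u_b \in T_a[\ell,b]$, so $T_a[c,b][u_c,u_b]$ passes through $\ell$ and is typically not a shortest path. Here my plan is to use planarity: the tree path $T_a[a,\ell]$ is strictly outside $C$ except at $\ell$, and acts as a chord separating the outside region of $C$ into a ``$c$-side'' sub-region (with $u_c$ on its boundary) and a ``$b$-side'' sub-region (with $u_b$ on its boundary). Since $B$ lies in the closure of the outside region and its unique simple path $B[u_c,u_b]$ goes from the boundary of one sub-region to the other, this path must pass through some vertex $z$ of $T_a[a,\ell]$; then $B[z,u_c]$ and $B[z,u_b]$ are edge-disjoint subpaths of $B[u_c,u_b]$. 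The shortest-path property of $T_a$ gives $d_B(z,u_c) \geq d(a,u_c) - d(a,z)$ and $d_B(z,u_b) \geq d(a,u_b) - d(a,z)$, and combining these with $d(a,z) \leq d(a,\ell)$ (since $z \in T_a[a,\ell]$) yields
\[
\cost(B) \geq d_B(z,u_c) + d_B(z,u_b) \geq d(a,u_c) + d(a,u_b) - 2\,d(a,\ell) = \cost\!\left(T_a[c,b][u_c,u_b]\right),
\]
which completes the cost comparison and hence the proof.
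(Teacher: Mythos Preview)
Your proposal is correct and follows essentially the same approach as the paper: take a maximal subtree $T'$ of the optimal Steiner tree lying strictly outside $C$, replace it by the minimal subpath of $T_a[c,b]$ spanning its attachment points, and bound the cost of that subpath by the cost of the extremal-to-extremal path inside $T'$ via a case split on whether the two extremal attachment points lie on the same branch of $T_a$ or straddle the LCA $\ell$. Your straddling-case argument (the path in $B$ must hit $T_a[a,\ell]$ by planarity, then use the shortest-path inequalities) is exactly the paper's computation, just with the arithmetic made a bit more explicit.
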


\begin{proof} Refer to Figure~\ref{fig:tree-enclosed-by-cycle}.  Let
  $C = \partial G[b,c] \pathcat T_a[c,b]$.  Let $T$ be a minimum-cost
  Steiner tree in $G$ connecting $Q$.  Suppose some part of $T$ is not
  enclosed by $C$.  Let $T'$ be a maximal subtree of $T$ not enclosed
  by $C$.  The leaves of $T'$ are on $T_a[c,b]$.  Let $P$ be the
  minimum subpath of $T_a[b,c]$ that spans these leaves.  Let $P'$ be
  the $\pstart{P}$-to-$\pend{P}$ path in $T'$.  See
  Figure~\ref{fig:tree-enclosed-by-cycle}.

  We consider the case when $\pstart{P'}$ is a vertex of $T_a[a,b]$
  and $\pend{P'}$ is a vertex of $T_a[a,c]$ (the other cases, when
  $\pstart{P'}$ and $\pend{P'}$ are either both vertices of $T_a[a,b]$
  or both vertices of $T_b[a,c]$, are simpler).  Then $P'$ must cross
  $T_a[a,x]$ where $x$ is the last vertex common to $T_a[a,b]$ and
  $T_a[a,c]$ (i.e. the lowest common ancestor in $T_a$ of $b$ and $c$).  Let $y$ be a vertex of $P' \cap T_a[a,x]$.  Since $T_a$
  is a shortest-path tree in an undirected path, every subpath of
  $T_a[a,z]$ and $T_a[z,a]$, for any vertex $z$, is a shortest path.
  We have that:
  \begin{align*}
    \cost(P') &=\cost(P'[\pstart{P'},y]) + \cost(P'[y,\pend{P'}]) \\
    &\geq \cost(T_a[\pstart{P'},y]) + \cost(T_a[y,\pend{P'}]) \\
    &\geq \cost(T_a[\pstart{P'},\pend{P'}]) \\
    &\geq \cost(P)
  \end{align*}
  Let $\widehat T = T \setminus T' \cup P$.  By construction,
  $\widehat T$ spans $Q$.  Using that $\cost(P') \geq \cost(P)$, we have that
  $\cost(\widehat T) = \cost(T) - \cost(T') + \cost(P) \leq \cost(T) - \cost(T') + \cost(P') \leq
  \cost(T)$ since $P'$ is a subpath of $T'$.

  Repeating this process for every subtree of $T$ not enclosed by $C$
  results in a tree enclosed by $C$ spanning $Q$ that is no longer
  than $T$.
\end{proof}

\begin{figure}[ht]
  \centering
  \includegraphics{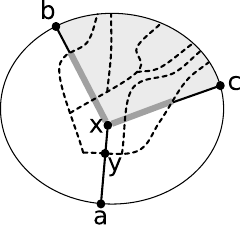}
  \caption[Existence of a tree enclosed by a cycle]{There is a tree
    $\widehat T$  that is just as cheap as $T$ (dotted) and spans
    the terminals between $b$ and $c$ but is enclosed by $C$ (whose interior is
    shaded).  $\widehat T$ is composed of the portion of $T$ enclosed
    by $C$ plus $P$, the thick grey path.}
  \label{fig:tree-enclosed-by-cycle}
\end{figure}

We describe an $O(k^3n)$-time implementation of {\sc Boundary2EC}
(where $k$ is the number of terminals). Compute a shortest-path tree
$T$ rooted at terminal $q_1$ in linear time.  For each $i$, consider
the graph $G_i$ enclosed by $C_i=\partial G[q_i,q_{i+1}] \pathcat
T[q_{i+1},q_i]$.  Compute the minimum Steiner tree spanning $X_i$ in
$G_i$.  By Lemma~\ref{lem:shortest-paths-enclosed}, $T_i$ has the same
cost as the minimum spanning tree spanning $X_i$ in $G$.  Since each
edge of $G$ appears in at most two subgraphs $G_i$ and $G_j$, the
trees $T_i$ can be computed in $O(k^3n)$ time
(by Theorem~\ref{thm:emv87}). 
 
Note: if the requirements are such that $\rvec(v) \in \set{0,2}$ for every
vertex $v$ on the boundary of $G$, then the sets $X_i$ have
cardinality 2.  Instead of computing Steiner trees in Step~3, we need
only compute shortest paths.  The running time for this special case
is therefore linear.

This completes the proof of
Theorem~\ref{thm:2ec-exact}.

\section{A PTAS framework for connectivity problems in planar graphs}
\label{sec:ptas-framework} 
In this section, we review the approach used in~\cite{BKM09} to give a
PTAS {\em framework} for the Steiner tree problem in planar graphs.  While the contents of this section are largely a summary of
the framework, we generalize where necessary for the survivable
network design problem, but refer the reader to the original paper~\cite{BKM09} for
proofs and construction details that are not unique to the focus of
this article.

Herein, denote the set of terminals by $Q$.  $\opt$ denotes an optimal
solution to the survivable network design problem.  We overload this
notation to also represent the cost of the optimal solution.

\subsection{Mortar graph and bricks}

The framework relies on an algorithm for finding a subgraph $\MG$ of
$G$, called the {\em mortar graph}~\cite{BKM09}.  The mortar graph
spans $Q$ and has total cost no more than $9\epsilon^{-1}$ times the
cost of a minimum Steiner tree in $G$ spanning $Q$
(Lemma~6.9 of~\cite{BKM09}).  Since a solution to the survivable network
problem necessarily spans $Q$, the mortar graph has cost at most
\begin{equation}
  9\epsilon^{-1} \cdot \opt.\label{eq:MG-length}
\end{equation}
The algorithm for computing $\MG$ first computes a 2-approximate
Steiner tree ~\cite{Mehlhorn88,Widmayer86,WWW86} and then augments
this subgraph with short paths.  The resulting graph is a grid-like
subgraph (the bold edges in Figure~\ref{fig:pcg}(a)) many of whose subpaths
are $\epsilon$-short:
\begin{definition}
  A path $P$ in a graph $G$ is $\epsilon$-short if for every pair of
  vertices $x$ and $y$ on $P$,
  $$\dist_P(x,y) \leq (1+\epsilon)\dist_G(x,y).$$
That is, the distance from $x$ to $y$ along $P$
  is at most $(1+\epsilon)$ times the distance from $x$ to $y$ in $G$.
\end{definition}

For each face $f$ of the mortar graph, the subgraph of $G$
enclosed by that face (including the edges and vertices of the face
boundary) is called a {\em brick}
(Figure~\ref{fig:pcg}(b)), and the brick's boundary is defined to be
$f$.    The boundary of a brick $B$ is written
$\partial B$.  The {\em interior of $B$} is defined to be the subgraph of
edges of $B$ not belonging to $\partial B$.  The interior of $B$ is
written $\interior(B)$.  

Bricks satisfy the following:
\begin{lemma}[Lemma~6.10~\cite{BKM09}]\label{lemma:brick-properties}
  The boundary of a brick $B$, in counterclockwise order, is the
  concatenation of four paths $W_B, S_B, E_B, N_B$ (west, south,
  east, north) such that:
  \begin{enumerate}
  \item The set of edges $B \setminus \partial B$ is nonempty.
  \item Every vertex of $Q \cap B$ is in $N_B$ or in $S_B$.
  \item $N_B$ is 0-short in $B$, and every proper subpath of $S_B$ is
    $\epsilon$-short in $B$.
  \item There exists a number $t\leq \kappa(\epsilon)$ and vertices
    $s_0, s_1, s_2, \ldots, s_t$ ordered from west to east along $S_B$
    such that, for any vertex $x$ of $S_B[s_i,s_{i+1})$, the distance
    from $x$ to $s_i$ along $S_B$ is less than $\epsilon$ times the
    distance from $x$ to $N_B$ in $B$: $\dist_{S_B}(x,s_i) <
    \epsilon\, \dist_{B}(x,N_B)$.
  \end{enumerate}
\end{lemma}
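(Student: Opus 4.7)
The plan is to build the mortar graph $\MG$ by an explicit construction and then read off each of the four properties from that construction. I would begin by computing a 2-approximate Steiner tree $T$ spanning $Q$, doubling and shortcutting it to produce a cyclic sequence $v_0,v_1,\ldots,v_m$ of terminals visited in order, and taking this sequence (realized by shortest paths in $G$ between consecutive terminals) as the initial ``backbone'' of $\MG$. Since every terminal lies on the backbone, property~2 will hold as long as we arrange that the backbone ends up as the union, over all bricks, of $N_B\cup S_B$, which is natural once the grid-like refinement below is imposed.

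Next I would thicken the backbone into a grid-like graph by successively adding shortest paths. For each face $f$ of the current $\MG$, I would pick a ``top'' path (destined to become $N_B$) and a ``bottom'' path (destined to become $S_B$); replacing the top path by a shortest $G$-path between its endpoints yields the $0$-short half of property~3. For the bottom path $S_B$ I would sweep left-to-right and, whenever a subpath has length exceeding $(1+\epsilon)$ times its $G$-distance, replace it by a shortest $G$-path between its endpoints; iterating until no such subpath remains gives the $\epsilon$-short condition on every proper subpath. Property~1 can be arranged by discarding any candidate face with empty interior: such a face is bounded by a cycle of $\MG$ whose interior contains no edges of $G$, so by Lemma~\ref{lem:empty-cycle} it is irrelevant for connectivity.

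For property~4, I would place portals $s_0,s_1,\ldots,s_t$ on $S_B$ greedily: set $s_0=\pstart{S_B}$ and, having placed $s_i$, let $s_{i+1}$ be the last vertex $x\in S_B$ for which $\dist_{S_B}(x,s_i)<\epsilon\,\dist_B(x,N_B)$ still holds; the next portal is inserted when this condition first fails. To bound $t\leq \kappa(\epsilon)$, I would charge each portal to the piece of $S_B$ it opens and argue that, between consecutive portals, either the accumulated $S_B$-length grows by a factor of at least $1+\Theta(\epsilon)$ relative to the $B$-distance to $N_B$, or the $B$-distance to $N_B$ itself shrinks by a comparable factor; a geometric-series/telescoping argument then caps the number of portals by some $\kappa(\epsilon)=O(\epsilon^{-1}\log\epsilon^{-1})$ depending only on $\epsilon$.

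The hard part will be establishing $t\leq\kappa(\epsilon)$ while simultaneously maintaining the $\epsilon$-shortness of subpaths of $S_B$ demanded in property~3: the two constraints interact, since replacing a subpath of $S_B$ by a shorter $G$-path changes both $\dist_{S_B}(x,s_i)$ and the vertex set of $S_B$, which can force new portals. I would decouple the two by first fixing $N_B$ and $S_B$ as paths (applying the shortcut-rewriting to exhaustion so that property~3 is stable), and only then running the greedy portal placement on the frozen $S_B$; the shortcut rewriting cannot be undone by the portal placement, so the bounds do not interfere. Since this lemma is quoted from~\cite{BKM09}, the detailed geometric accounting for $\kappa(\epsilon)$ can be deferred to that paper; the sketch above indicates the route I would follow to reprove it.
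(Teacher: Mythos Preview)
The paper does not prove this lemma at all: it is quoted verbatim as Lemma~6.10 of~\cite{BKM09} and used as a black box, so there is no ``paper's own proof'' to compare your proposal against. You yourself note this at the end, and defer the details to~\cite{BKM09}; that is exactly what the present paper does.

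That said, your sketch diverges from the actual~\cite{BKM09} construction in ways worth flagging. First, the backbone there is obtained by cutting open the $2$-approximate Steiner tree into an Eulerian strip, not by replacing the tree with shortest paths between consecutive terminals; your replacement would destroy the cost bound needed for Equation~(\ref{eq:MG-length}). Second, the order of operations is reversed: in~\cite{BKM09} one first adds \emph{supercolumns} (short north-to-south paths) to cut the strip into pieces whose southern boundary is $\epsilon$-short, and only then subdivides; you instead propose to rewrite $S_B$ by repeatedly shortcutting long subpaths, which is a different mechanism and would not obviously terminate with bounded total cost. Third, your conjectured $\kappa(\epsilon)=O(\epsilon^{-1}\log\epsilon^{-1})$ does not match the value the paper actually uses (Equation~(\ref{eq:kappa}) gives $\kappa(\epsilon)=4\epsilon^{-2}(1+\epsilon^{-1})$), so your geometric-series heuristic for property~4 is off. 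Finally, invoking Lemma~\ref{lem:empty-cycle} to justify property~1 is a category error: that lemma says edges strictly inside a terminal-free cycle can be discarded from a \emph{solution}, not that faces of $\MG$ with empty interior can be ignored in the \emph{construction}; property~1 in~\cite{BKM09} is simply a definitional convention (a face with empty interior is not called a brick).
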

The number $\kappa(\epsilon)$ is given by:
\begin{equation}
  \label{eq:kappa}
  \kappa(\epsilon) = 4\epsilon^{−2}(1 + \epsilon^{−1})
\end{equation}

The mortar graph has some additional properties.  Let $B_1$ be a
brick, and suppose $B_1$'s eastern boundary $E_{B_1}$ contains at
least one edge.  Then there is another brick $B_2$ whose western
boundary $W_{B_2}$ exactly coincides with $E_{B_1}$.  Similarly, if
$B_2$ is a brick whose western boundary contains at least one edge
then there is a brick $B_1$ whose eastern boundary coincides with
$B_2$'s western boundary.

The paths forming eastern and western boundaries of
bricks are called {\em supercolumns}.

\begin{lemma}[Lemma~6.6~\cite{BKM09}]\label{lem:col-length}
  The sum of the costs of the edges in supercolumns is at most
  $\epsilon\, \opt$.
\end{lemma}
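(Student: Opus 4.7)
The plan is to trace through the construction of the mortar graph that underlies Lemma~\ref{lemma:brick-properties} and charge the cost of each supercolumn against a portion of the boundary walk around the seed Steiner tree. Since supercolumns are inserted as shortest paths obeying a controlled spacing rule, a direct charging should yield the claimed $\epsilon\,\opt$ bound.

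First I would recall the construction. The algorithm begins with a $2$-approximate Steiner tree $T$ on $Q$, and forms the closed walk $W$ of cost at most $2\,\cost(T) \le 4\,\opt$ obtained by doubling the edges of $T$. (Here we use that the optimal Steiner tree on $Q$ has cost at most $\opt$, since any feasible survivable solution must already span $Q$.) The algorithm then iteratively inserts shortest-path supercolumns to carve the region enclosed by $W$ into bricks, each insertion chosen to enforce the spacing rule of property~(4) of Lemma~\ref{lemma:brick-properties} for the newly created bricks.

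Next I would apply a direct charging argument. Each supercolumn $P$ is a shortest path from some split point $s_i$ on a pre-brick's south boundary $S$ to the north boundary $N$, so $\cost(P) \le \dist_B(s_i, N)$. The spacing rule guarantees that for points $x$ between consecutive splits $s_i$ and $s_{i+1}$ we have $\dist_S(x, s_i) < \epsilon'\,\dist_B(x, N)$ for an internal parameter $\epsilon'$. Combining these two inequalities, one can charge $\cost(P)$ to roughly $1/\epsilon'$ times the length of the neighboring south-boundary segment, which is itself a subpath of the original walk $W$. Summing over all supercolumns introduced throughout the recursive construction, and using $\cost(W) \le 4\,\opt$, this yields a total supercolumn cost of $O(\epsilon'\,\opt)$, with a small absolute constant coming from the doubling of $T$ and from each edge of $W$ being charged a bounded number of times. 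Choosing $\epsilon'$ appropriately small relative to $\epsilon$ then gives the claimed bound.

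The hard part will be ensuring that the charges do not compound across the recursion: each inserted supercolumn becomes part of the east or west boundary of two new pre-bricks, and any future supercolumn inserted inside those pre-bricks must be charged back to $W$ rather than to a previously inserted supercolumn. I would handle this by ordering the insertions so that whenever a supercolumn $P'$ is charged to a south-boundary segment, that segment lies along $W$ and not along a previously inserted supercolumn, so the total charge to $W$ is linear in $\cost(W)$. The remaining details — constants, and verifying that the spacing rule can be enforced at each step without exceeding the desired budget — follow the accounting in \cite{BKM09} verbatim.
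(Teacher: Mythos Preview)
This lemma is not proved in the present paper; it is quoted verbatim as Lemma~6.6 of \cite{BKM09}, so there is nothing here to compare against and the question is whether your reconstruction of the argument from \cite{BKM09} is sound.

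It is not. You identify each supercolumn with a shortest path rooted at one of the points $s_i$ appearing in property~(4) of Lemma~\ref{lemma:brick-properties}, and then invoke that property to charge the supercolumn's cost against the adjacent south-boundary segment. There are two problems. First, the $s_i$ are points on the south boundary \emph{inside} a single brick, whereas supercolumns are the east and west boundaries \emph{between} bricks; these are distinct objects in the construction. Second, and more damagingly, the inequality in property~(4) points the wrong way for your charging. It reads $\dist_{S_B}(x,s_i) < \epsilon'\,\dist_B(x,N_B)$: the south segment is \emph{short} relative to the south-to-north distance. A supercolumn has cost on the order of $\dist_B(\cdot,N)$, so property~(4) only gives (south segment) $< \epsilon' \cdot$ (column), i.e.\ (column) $> (1/\epsilon') \cdot$ (segment). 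Summed over all columns this yields at best $\sum \cost(P) \le (1/\epsilon')\,\cost(W)$, not $O(\epsilon'\,\opt)$; the conclusion you state is inconsistent with the inequality you derived one sentence earlier.

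In \cite{BKM09} the bound on total supercolumn cost does not come from property~(4) at all. It comes from the column-placement procedure within each strip, which is designed so that a column is inserted only where its south-to-north length is small compared to the south-boundary distance accumulated since the previous column --- the reverse inequality to property~(4). That is what allows each column to be charged at rate $O(\epsilon)$ to a disjoint piece of the strip boundary. Your worry about recursive compounding also does not arise there: strips are formed first, columns are inserted once per strip, and no further supercolumns are created inside bricks.
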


The mortar graph and the bricks are building blocks of the structural
properties required for designing an approximation scheme.
Borradaile, Klein and Mathieu demonstrated that there is a near-optimal
Steiner tree whose interaction with the mortar graph is ``simple''~\cite{BKM09}.
We prove a similar theorem in Section~\ref{sec:structure-theorem}.  In
order to formalize the notion of ``simple'', we select a subset of
vertices on the boundary of each brick, called {\em portals}, and define
a {\em portal-connected graph}.

\subsection{Portals and simple connections}

We define a subset of $\theta$ evenly spaced vertices along the
boundary of every brick.  The value of $\theta$ depends polynomially
on the precision $\epsilon$ and on $\alpha$, a parameter that represents how complex the solution can be within a single brick ($\alpha$ will be defined precisely in Equation~(\ref{eq:alpha}))
\begin{equation}
  \label{eq:theta}
  \theta(\epsilon)\mbox{ is } O(\epsilon^{-2} \alpha(\epsilon))
\end{equation}
The portals are selected to satisfy the following:
\begin{lemma}[Lemma~7.1~\cite{BKM09}]\label{lem:portal-distance}
  For any vertex $x$ on $\partial B$, there is a portal $y$ such that
  the cost of the $x$-to-$y$ subpath of $\partial B$ is at most $1/\theta$
  times the cost of $\partial B$.
\end{lemma}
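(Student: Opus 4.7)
The plan is to construct the portals by greedy placement around $\partial B$. Let $L=\cost(\partial B)$. I would fix an arbitrary starting vertex $p_0$ on $\partial B$, declare it a portal, then walk clockwise along $\partial B$ maintaining a running sum of traversed edge costs. Whenever the running sum (measured since the previously placed portal) first reaches or exceeds $L/\theta$, I place the current vertex as the next portal $p_{i+1}$ and reset the running sum. The process stops when the walk returns to $p_0$. The resulting sequence $p_0, p_1, \dots, p_{k-1}$ is a ``$L/\theta$-evenly-spaced'' set in the discrete sense; if $k<\theta$, I pad arbitrarily to reach exactly $\theta$ portals.

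To bound $k$, observe that every arc $\partial B[p_i,p_{i+1}]$ other than possibly the one that wraps around to $p_0$ has cost at least $L/\theta$ by construction. Since these arcs partition $\partial B$ and their total cost is $L$, we get $k\le \theta$, matching the intended portal count.

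For the distance claim, fix any vertex $x\in\partial B$ and let $i$ be such that $x$ lies on $\partial B[p_i,p_{i+1}]$. If $x$ is itself a portal set $y=x$. Otherwise $x$ is a proper interior vertex of the arc, and by the defining property that $p_{i+1}$ is the \emph{first} vertex past $p_i$ at which the accumulated cost reaches $L/\theta$, every vertex strictly before $p_{i+1}$ on this arc has prefix cost $\cost(\partial B[p_i,x])<L/\theta$. Thus $y=p_i$ witnesses the inequality.

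The only mild subtlety I anticipate is a single edge of cost exceeding $L/\theta$, which could cause the running total to jump past the threshold in one step; this is not a genuine obstacle since the vertex right before the jump still has prefix cost $<L/\theta$, and the vertex right after is itself the newly placed portal $p_{i+1}$, so both endpoints are within $L/\theta$ of a portal. Thus the lemma reduces to the standard fact that evenly placing $\theta$ checkpoints around a cycle of total measure $L$ keeps every point within $L/\theta$ of some checkpoint; I do not expect any deeper difficulty.
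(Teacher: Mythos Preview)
The paper does not prove this lemma; it is quoted verbatim from~\cite{BKM09} and used as a black box, so there is no in-paper proof to compare against. Your greedy placement is exactly the standard construction behind such statements and your distance argument is sound: every interior vertex of an arc $[p_i,p_{i+1}]$ has prefix cost from $p_i$ strictly below $L/\theta$, and the wrap-around arc has total cost below $L/\theta$, so the lemma's conclusion holds.

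One small slip: your portal-count bound is off by one. With $k$ portals you have $k$ arcs, of which the first $k-1$ each cost at least $L/\theta$; summing gives $(k-1)L/\theta \le L$, hence $k \le \theta+1$, not $k \le \theta$. This does not affect the distance claim at all, and in the surrounding framework $\theta$ versus $\theta+1$ portals is immaterial (one can absorb the constant into the definition of $\theta$, or use threshold $L/(\theta-1)$ instead). So the proposal is essentially correct and matches the intended argument.
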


Recall that, for each face $f$ of the mortar graph $MG$, there is a
corresponding brick $B$, and that $B$ includes the vertices and edges
comprising the boundary of $f$.  The next graph construction starts
with the disjoint union of the mortar graph $MG$ with all the bricks.
Each edge $e$ of the mortar graph is represented by three edges in the
disjoint union: one in $MG$ (the {\em mortar-graph copy} of $e$) and
one on each of two brick boundaries (the {\em brick copies} of $e$).
Similarly, each vertex on the boundary of a brick occurs several times
in the disjoint union. 

 The {\em portal-connected graph}, denoted
$\BC(MG)$, is obtained from the disjoint union as follows: a copy of each brick
$B$ of $MG$ is embedded in the interior of the corresponding face of $MG$, and each
portal $p$ of $B$ is connected by an  artificial edge to the
corresponding vertex of $MG$.  The construction is illustrated in
Figure~\ref{fig:pcg}(c).  The artificial edges are called {\em portal
  edges}, and are assigned zero cost.

Noting that each vertex $v$ on the boundary of a brick occurs
several times in $\BC(MG)$, we identify the original vertex $v$ of $G$
with that duplicate in $\BC(MG)$ that belongs to $MG$.  In particular,
each terminal (vertex in $Q$) is considered to appear exactly once in
$\BC(MG)$, namely in $MG$.  Thus the original instance gives rise to
an instance in $\BC(MG)$: the goal is to compute the optimal solution
w.r.t.\ the terminals on $MG$ in $\BC(MG)$ and then map the edges of
this solution to $G$.  Since $G$ can be
obtained from $\BC(MG)$ by contracting portal edges and identifying
all duplicates of each edge and all duplicates of each vertex, we infer:

\begin{lemma} \label{lem:feasibility}
Let $H$ be a subgraph of $\BC(MG)$ that, for each pair
  of terminals $u,v$, contains at least $\min\set{r(u),r(v)}$
  edge-disjoint $u$-to-$v$ paths.  Then the subgraph of $G$ consisting
  of edges of $H$ that are in $G$ has the same property.
\end{lemma}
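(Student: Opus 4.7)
The plan is to project edge-disjoint paths in $\BC(MG)$ down to edge-disjoint walks in $G$ through the natural quotient map, and then invoke the standard walk-to-path reduction. Define $\pi\colon \BC(MG)\to G$ by contracting every portal edge, identifying every duplicate of a vertex $v\in V(G)$ with $v$, and identifying every brick copy and the $MG$-copy of each edge $e\in E(G)$ with $e$. By the observation immediately preceding the lemma, $G$ is literally the graph obtained from $\BC(MG)$ by this sequence of contractions and identifications, so $\pi$ is well defined. Let $H'$ be the natural multi-subgraph image: the multiplicity of an edge $e\in E(G)$ in $H'$ equals the number of preimages of $e$ (among the $MG$-copy and the two brick copies) that belong to $H$. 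This $H'$ is precisely ``the subgraph of $G$ consisting of edges of $H$ that are in $G$'' in the multi-subgraph sense appropriate to the relaxed problem.

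Next, I would fix terminals $u,v$, set $k=\min\{r(u),r(v)\}$, and take $k$ edge-disjoint $u$-to-$v$ paths $P_1,\ldots,P_k$ in $H$, which exist by hypothesis. Because every terminal appears in $\BC(MG)$ only as its $MG$-duplicate, both endpoints of every $P_i$ are unambiguous. Apply $\pi$ edge-by-edge: portal edges collapse (their two endpoints are identified in $G$), and every other edge of $P_i$ contributes exactly one edge of $G$ to the image walk $W_i$. Thus $W_i$ is a $u$-to-$v$ walk in $H'$. The key point is that the walks $W_1,\ldots,W_k$ are edge-disjoint with multiplicity: for each edge $e\in E(G)$, the total number of traversals of $e$ across all of $W_1,\ldots,W_k$ is at most the number of $\BC(MG)$-copies of $e$ used by the $P_i$'s, which is bounded by the number of copies of $e$ in $H$, i.e.\ by the multiplicity of $e$ in $H'$.

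Finally, I would apply the elementary fact that a collection of $k$ edge-disjoint walks from $u$ to $v$ in a multigraph can be replaced by $k$ edge-disjoint $u$-to-$v$ paths in that same multigraph---for instance by iteratively shortcutting closed subwalks from the disjoint union $W_1\cup\cdots\cup W_k$ (viewed with multiplicity), or equivalently by a Menger/max-flow argument applied to this union. This produces $k$ edge-disjoint $u$-to-$v$ paths in $H'$, as required. The only real subtlety throughout is the multiplicity bookkeeping: a single edge of $G$ can have up to three preimages in $\BC(MG)$, and one must define $H'$ as a genuine multi-subgraph absorbing all of them in order for the edge-disjointness inequality in the previous paragraph to hold. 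Once this convention is pinned down, the remaining argument is routine.
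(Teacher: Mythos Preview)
Your argument is correct and follows the same idea the paper uses: the paper simply observes that $G$ is obtained from $\BC(MG)$ by contracting portal edges and identifying duplicate edges and vertices, and asserts the lemma without further detail. You have filled in exactly the details that justification requires---the projection of paths to walks, the multiplicity bookkeeping, and the walk-to-path reduction via a flow/Menger argument---so your proof is a faithful and complete expansion of the paper's one-sentence sketch.
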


The graph $\CC(MG)$, which we call the {\em brick-contracted graph}, is obtained by contracting each brick in $\BC(MG)$ to
a single vertex, called a {\em brick vertex}, as illustrated in
Figure~\ref{fig:pcg}(d).  This graph will be used in designing the
dynamic program in Section~\ref{sec:dp}.

\begin{figure}[ht]
  \centering
  \subfigure[]{\includegraphics[scale=1.6]{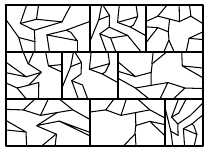}}
  \subfigure[]{\includegraphics[scale=1.6]{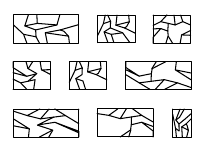}}
  \subfigure[]{\includegraphics[scale=1.6]{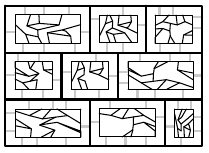}}
  \subfigure[]{\includegraphics[scale=1.6]{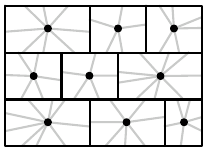}}
  \caption{(a) The mortar graph in bold, (b) the set of bricks,
   (c) the portal-connected graph $\BC(MG)$, and (d) the brick-contracted
    graph $\CC(MG)$.}
  \label{fig:pcg}
\end{figure}

\subsection{Structure Theorem}

Lemma~\ref{lem:feasibility} implies that, to find an approximately
optimal solution in $G$, it suffices to find a solution in $\BC(MG)$
whose cost is not much more than the cost of the optimal solution
in $G$.  The following theorem, which we prove in
Section~\ref{sec:structure-theorem}, suggests that this goal is
achievable.  An equivalent theorem was proven for the Steiner tree
problem~\cite{BKM09}.

\begin{theorem}[Structure Theorem]\label{thm:structure}  
  For any $\epsilon>0$ and any planar instance $(G, \rvec)$ of the
  $\{0,1,2\}$-edge connectivity problem, there exists a feasible
  solution $S$ to the corresponding instance $(\BC(MG), \rvec)$ such
  that 
  \begin{itemize}
  \item the cost of $S$ is at most $(1+c\epsilon)\opt$ where $c$ is
    an absolute 
    constant, and
  \item the intersection of $S$ with any brick $B$ is the union of a
    set of non-crossing trees whose leaves
    are portals.
  \end{itemize}
\end{theorem}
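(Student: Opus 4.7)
The plan is to modify an optimal solution $\opt$ to $(G,\rvec)$ in three stages---minimality reduction, brick-boundary rerouting, and portalization---obtaining the required solution $S$ in $\BC(\MG)$ at only an $O(\epsilon)$ relative blow-up in cost.

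First, I would pass to a minimal feasible subgraph of $\opt$ and invoke the vertex-connectivity lemmas of Section~2.2 (as anticipated in the introduction to that subsection): by Theorem~\ref{thm:2vc-terminal} every cycle of the minimized $\opt$ contains a terminal, and by Theorem~\ref{thm:cycle-and-path} every cycle-to-cycle path contains a terminal. Combined with Lemma~\ref{lemma:brick-properties}(2), which places all terminals of a brick $B$ onto $N_B\cup S_B$, these imply that any cycle of $\opt\cap B$ lying entirely in $\interior(B)$ encloses no terminals and can be removed by Lemma~\ref{lem:empty-cycle}, so after this reduction $\opt\cap B$ is already a forest.

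Second, I would reduce the number of components of $\opt\cap B$ that meet $\partial B$ to at most $\alpha(\epsilon)$. Using the partition $s_0,\ldots,s_t$ of $S_B$ from Lemma~\ref{lemma:brick-properties}(4) (with $t\le\kappa(\epsilon)$), the $0$-shortness of $N_B$, and the $\epsilon$-shortness of proper subpaths of $S_B$, I would group the entries and exits of $\opt$ into the columns of $B$ and reroute them along $N_B$ or $S_B$ to consolidate the connections within each column into a single one, following the strategy of~\cite{BKM09}. The additive cost is bounded by $O(\epsilon)\sum_B \cost(\partial B)=O(\epsilon)\cdot\opt$ via Lemma~\ref{lem:col-length} and Equation~(\ref{eq:MG-length}). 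Third, for each surviving entry point on $\partial B$ I would slide its endpoint to the nearest portal along $\partial B$, paying at most $\cost(\partial B)/\theta$ per endpoint by Lemma~\ref{lem:portal-distance}; since $\theta=\Theta(\epsilon^{-2}\alpha)$, the total portalization cost is again $O(\epsilon)\cdot\opt$. Feasibility of the resulting subgraph in $\BC(\MG)$ then follows from Lemma~\ref{lem:feasibility}, and every brick intersection is by construction a non-crossing forest whose leaves are portals.

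The hard part will be the second step: replacing a segment of $\opt$ through $\interior(B)$ by a detour along $\partial B$ may collide with another segment of $\opt$ that already uses those boundary edges, potentially destroying the edge-disjointness of two paths between terminals far from $B$. The relaxed model helps---multiple copies of an edge are allowed, so a colliding detour can simply double a boundary edge---and Theorems~\ref{thm:2vc-terminal} and~\ref{thm:cycle-and-path} bound how many essentially independent segments can coexist in a single brick, ensuring that only $O(1)$ extra edge copies per column are ever introduced. Making this charging tight against the $\epsilon$-short structure of $N_B$ or $S_B$, and verifying that the resulting subgraph still satisfies the mixed $\{0,1,2\}$ connectivity requirements globally (so that Lemma~\ref{lem:feasibility} delivers a feasible solution in the end), is the technical heart of the proof.
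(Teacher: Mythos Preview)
Your three-stage outline matches the paper's high-level shape, but several load-bearing pieces are missing or misapplied. First, Theorem~\ref{thm:2vc-terminal} and Theorem~\ref{thm:cycle-and-path} concern minimal $S$-\emph{biconnected} graphs, whereas a minimal feasible solution to a $\{0,1,2\}$-edge-connectivity instance is only two-\emph{edge}-connected on the relevant terminal pairs; its cycles need not be simple and need not contain terminals. The paper closes this gap with the \emph{cleave} operations: simplifying cleavings turn two-edge-connectivity into biconnectivity (Lemma~\ref{lem:OPT_C-2VC0}), and lengthening cleavings force the brick-interior components to be trees whose leaves are exactly the joining vertices (Lemma~\ref{lem:OPT_C-tree-within-brick}) and guarantee that distinct pieces of a single cycle land in distinct trees (Lemma~\ref{lem:cycles-preserved}). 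Without cleaving, your claim that $\opt\cap B$ is a forest with boundary leaves does not follow. The paper also needs the \emph{augment} step (adding supercolumn copies) so that all joining vertices lie on $N_B\cup S_B$; you have no analogue of this.

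Second, your restructuring step is where the real new content of the paper lives, and ``following the strategy of~\cite{BKM09}'' does not suffice: for Steiner tree a brick contains a single tree, but here a brick can contain many edge-disjoint trees coming from a common cycle of $\opt$, and you must reduce their total number of joining vertices while keeping their images edge-disjoint so that two-edge-connectivity survives. The paper introduces Lemma~\ref{lem:forest-cycle} precisely for this, mapping a family of non-crossing trees either to a single cycle $\widehat C$ or to edge-disjoint replacement trees; this lemma, together with the $P_0,P_1,\ldots$ decomposition of Section~\ref{sec:restructure}, is what makes the connectivity analysis of Section~5.2 go through. Your cost accounting also slips: $\sum_B \cost(\partial B)$ is $\Theta(\epsilon^{-1})\opt$ by Equation~(\ref{eq:MG-length}), not $O(\opt)$, so charging reroutes to the boundary at rate $O(\epsilon)$ gives only an $O(1)$ factor; the paper instead charges the restructuring to the trees themselves via Lemmas~\ref{lem:fib-tree}, \ref{lem:short-runs}, and~\ref{lem:forest-cycle}. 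Finally, Lemma~\ref{lem:feasibility} goes the wrong way for your purpose: it says feasibility in $\BC(MG)$ implies feasibility in $G$, whereas the Structure Theorem must establish feasibility \emph{in} $\BC(MG)$, which the paper does by explicitly tracking a witnessing path $P_C$ and cycle $C_C$ through all six steps.
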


\subsection{Approximation scheme} \label{sec:ptas}

We assume that the input graph $G$ has degree at most three.  This can
be achieved using a well-known embedding-preserving transformation in
which each vertex of degree $d>3$ is replaced with a cycle of $d$
degree-three vertices, as shown in Figure~\ref{fig:degree-cycle}.
Making this assumption simplifies the dynamic program using for Step~5
below.

\begin{figure} 
\centerline{\includegraphics{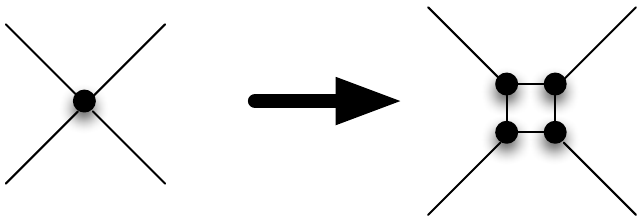}}
\caption{Each vertex of degree greater than $3$ is replaced with a
  cycle of degree-three vertices}
\label{fig:degree-cycle}\end{figure}

The approximation
scheme consists of the following steps.
\begin{enumerate}[Step 1:]
\item Find the mortar graph $\MG$.
\item Decompose $\MG$ into ``parcels'', subgraphs with the
  following properties: \vspace{-2.5mm}
  \begin{enumerate}[(a)]
  \item The parcels partition the 
    faces of $\MG$.  Since each edge of $\MG$ belongs to the
    boundaries of exactly two faces, it follows that each edge belongs
    to at most two parcels.
  \item The cost of all boundary edges (those edges belonging to two
    parcels) is at most $\frac{1}{\eta} \cost(\MG)$.  We choose $\eta$ so
    that this bound is $\frac{\epsilon}{2}\cost(OPT)$:
    \begin{equation}\label{eq:eta}
      \eta=\eta (\epsilon)=\lceil 20\epsilon^{-2}\rceil
    \end{equation}
  \item The planar dual of each parcel has a spanning tree of depth at
    most $\eta+1$.
  \end{enumerate}
  Each parcel $P$ corresponds to a subgraph of $G$, namely the
  subgraph consisting of the bricks corresponding to the faces making
  up $P$.  Let us refer to this subgraph as the {\em filled-in} version of $P$.
\item Select a set of ``artificial'' terminals on the boundaries of
  parcels to achieve the following:
\begin{itemize}
\item for each filled-in parcel, there is a solution that is feasible with
  respect to original and artificial terminals whose cost
  is at most that of the parcel's boundary plus the cost of the intersection
  of OPT with the filled-in parcel, and
\item  the union over all parcels of  such feasible solutions is a
  feasible solution for the original graph.
\end{itemize}
\item Designate portals on the boundary of each brick.
\item For each filled-in parcel $P$, find a optimal solution in the
  portal-connected graph, $\BC(P)$. Output the union of these
  solutions.
\end{enumerate}
Step~1 can be carried out in $O(n \log n)$ time~\cite{BKM09}.  Step~2 can be done
in linear time via breadth-first search in the
planar dual of $\MG$, and then applying a ``shifting'' technique in
the tradition of Baker~\cite{Baker94}.  Step~3 uses the fact that each
parcel's boundary consists of edge-disjoint, noncrossing cycles.  If
such a cycle separates terminals, a vertex $v$ on the cycle is
designated an artificial terminal.  We set $r(v) = 2$ if the cycle
separates terminals with requirement 2 and $r(v) = 1$ otherwise.
Under this condition, any feasible solution for the original graph
must cross the cycle; by adding the edges of the cycle, we get a
feasible solution that also spans the artificial terminal.  Step~3 can
be trivially implemented in linear time.  Step~5 is achieved in linear time
using dynamic programming (Section~\ref{sec:dp}).

\section{Proof of the Structure Theorem} \label{sec:structure-theorem}

We are now ready to prove the Structure Theorem for $\{0,1,2\}$-edge
connectivity, Theorem~\ref{thm:structure}.  
In order to formalize the notion of
connectivity across the boundary of a brick, we use the following
definition:

\begin{definition}[Joining vertex]
  Let $H$ be a subgraph of  $G$ and $P$ be a subpath of $\partial G$.
  A {\em joining vertex} of $H$ with $P$ is a vertex of $P$ that is
  the endpoint of an edge of $H \setminus P$.
\end{definition}

We will use the following structural lemmas in simplifying \opt.  The
first two were used in proving a Structure Theorem for the Steiner
tree PTAS~\cite{BKM09}; in these, $T$ is a tree and $P$ is an
$\epsilon$-short path on the boundary of the graph in which $T$ and
$P$ are embedded.  The third is in fact a generalization of the second
lemma that we require for maintaining two connectivity.

\begin{lemma}[Simplifying a tree with one root,
  Lemma~10.4~\cite{BKM09}] \label{lem:fib-tree} 

  Let $r$ be a vertex of $T$.  There is another tree $\widehat T$ that
  spans $r$ and the vertices of $T \cap P$ such that $\cost(\widehat T)
  \leq (1+4 \cdot \epsilon)\cost(T)$ and $\widehat T$ has at most $11
  \cdot \epsilon^{-1.45}$ joining vertices with $P$.
\end{lemma}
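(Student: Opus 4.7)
The plan is to reduce the number of joining vertices by replacing parts of $T$ with subpaths of $P$ (which is cheap because $P$ is $\epsilon$-short). Root $T$ at $r$, and let $v_1,\ldots,v_m$ be the joining vertices in order along $P$. For each $v_i$, let $T_i$ denote the off-$P$ portion of $T$ hanging below $v_i$, with cost $s_i$. I will select a subset $J\subseteq\{v_1,\ldots,v_m\}$ of ``surviving'' joining vertices; $\widehat T$ will consist of (i) the subtrees $T_j$ for $j\in J$, glued to $r$ by the appropriate portions of $T$ lying above the joining vertices, together with (ii) subpaths of $P$ used to pick up all remaining vertices of $V(T)\cap V(P)$ and to merge consecutive non-surviving joining vertices onto a surviving neighbor. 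The $\epsilon$-shortness of $P$ guarantees that any such subpath $P[v_i,v_j]$ has cost at most $(1+\epsilon)\dist_G(v_i,v_j)$, and hence at most $(1+\epsilon)$ times the cost of the corresponding $v_i$-to-$v_j$ path in $T$.

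The core of the argument is a clustering rule for picking $J$. Sweep the joining vertices along $P$, maintaining a current cluster of accumulated subtree cost $C$. For the next joining vertex $v_{i+1}$ with subtree cost $s_{i+1}$, merge it into the current cluster (adding the subpath $P[v_i,v_{i+1}]$ to $\widehat T$) provided
\[
\cost\bigl(P[v_i,v_{i+1}]\bigr)\;\le\;\lambda\,\min\{C,\,s_{i+1}\},
\]
and otherwise start a new cluster (keeping $v_{i+1}$ in $J$). With $\lambda$ chosen proportional to $\epsilon$, the total cost charged to $P$-subpaths telescopes against $\sum_i s_i\le\cost(T)$, giving the desired $(1+4\epsilon)$ overall cost bound. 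In the other direction, whenever a new cluster is opened at some $v_{i+1}$, the violation of the merging condition together with $\epsilon$-shortness of $P$ forces $s_{i+1}$ to be at least a fixed fraction of the sum of the last two surviving subtree costs. This is exactly a Fibonacci-style growth condition: the $k$-th surviving subtree has cost $\gtrsim \phi^k \cdot (\text{unit})$.

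The Fibonacci recursion lets me count surviving joining vertices. After partitioning the subtree costs into $O(\log(1/\epsilon))$ geometric scales, the Fibonacci growth permits at most $O(1/\log\phi)$ surviving vertices per doubling of cost, giving a total of $O(\epsilon^{-1-\log_\phi 2})=O(\epsilon^{-1.45})$; the constant 11 absorbs rounding and the contribution of the single cluster containing the path from $r$ down to $P$. The main obstacle is tuning $\lambda$ and the bookkeeping around the root: when $r\notin P$, surviving subtrees share common ancestral paths in $T$, so one has to be careful not to double-count the ``spine'' from $r$ toward $P$ when charging $P$-subpath costs against $\sum s_i$. Once this charging is set up correctly and the merge-rule constants are pinned down to match $(1+4\epsilon)$ on the cost side and $11\epsilon^{-1.45}$ on the count side, the construction yields $\widehat T$ with the claimed properties.
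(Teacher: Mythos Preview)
The paper does not prove this lemma; it is imported verbatim from \cite{BKM09} (their Lemma~10.4) and used as a black box, so there is no in-paper proof to compare against.

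That said, your outline is recognizably the \cite{BKM09} argument: absorb runs of joining vertices into subpaths of the $\epsilon$-short path $P$, and bound the number of surviving clusters by a Fibonacci-type growth estimate, which is precisely where the exponent $\log_\phi 2\approx 1.44$ originates. Two places need tightening. First, the exponent arithmetic slipped: you wrote $O(\epsilon^{-1-\log_\phi 2})=O(\epsilon^{-1.45})$, but $1+\log_\phi 2\approx 2.44$; the intended bound is $O(\epsilon^{-\log_\phi 2})$, coming directly from $\phi^k\lesssim 1/\epsilon$. Second, and more substantively, the sentence ``violation of the merging condition \ldots\ forces $s_{i+1}$ to be at least a fixed fraction of the sum of the last two surviving subtree costs'' does not follow from your merge rule as stated. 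Failing $\cost(P[v_i,v_{i+1}])\le\lambda\min\{C,s_{i+1}\}$ only tells you the $P$-gap is large relative to \emph{one} of $C$ or $s_{i+1}$; it says nothing about $s_{i+1}$ itself being large, and the $P$-gap is controlled only by the tree distance $\dist_T(v_i,v_{i+1})$, which may climb arbitrarily far toward $r$ and is not bounded by nearby subtree costs. In \cite{BKM09} the Fibonacci recursion is obtained not by a linear sweep along $P$ but by recursively grouping leaves according to the branching structure of $T$ (its ``super-edges''), so that the cost of a group and the cost of the tree path separating it from its sibling group are comparable by construction. Your sweep can be salvaged, but only after rerouting the charging through that tree structure; as written, the link from ``merge rule fails'' to ``next subtree is heavy'' is the missing step.
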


\begin{lemma}[Simplifying a tree with two roots,
  Lemma~10.6~\cite{BKM09}] \label{lem:short-runs} Let $p$ and $q$ be two
  vertices of $T$.  There is another tree $\widehat{T}$ that spans $p$
  and $q$ and the vertices of $T \cap P$ such that $\cost(\widehat{T})
  \leq (1+c_1\epsilon)\cost(T)$ and $\widehat T$ has at most $c_2 \cdot
  \epsilon^{-2.5}$ joining vertices with $P$, where $c_1$ and $c_2$
  are constants.
\end{lemma}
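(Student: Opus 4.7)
The plan is to reduce this lemma to Lemma~\ref{lem:fib-tree} by decomposing $T$ along its $p$-to-$q$ path. Let $\pi$ denote the unique $p$-to-$q$ path in $T$, and let $T_v$ denote the subtree of $T$ hanging off vertex $v \in \pi$ (i.e., the connected component of $T \setminus E(\pi)$ that contains $v$). Each $T_v$ is rooted at the single vertex $v$, so Lemma~\ref{lem:fib-tree} applies: for each $v$ there exists a replacement tree $\widehat{T}_v$ spanning $v$ and $T_v \cap P$, with $\cost(\widehat{T}_v) \leq (1+4\epsilon)\cost(T_v)$ and at most $11\epsilon^{-1.45}$ joining vertices with $P$.

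Next, the spine $\pi$ must itself be simplified, which is the main obstacle. Naively, $\pi$ can alternate between $P$ and the interior of the host graph arbitrarily often, and the resulting joining vertices along $\pi\cap P$ would swamp the bound. My approach is to collapse ``unimportant'' excursions of $\pi$ off $P$ onto $P$. For each maximal subpath $\pi' = \pi[x,y]$ disjoint from $P$ (with $x, y \in P$), the $\epsilon$-shortness of $P$ yields $\cost_P(x,y) \leq (1+\epsilon)\cost(\pi')$, so replacing $\pi'$ by $P[x,y]$ inflates cost by at most $(1+\epsilon)$. I would perform this replacement for every excursion except those hosting a ``heavy'' subtree $T_v$ (one whose simplification actually contributes to the joining vertex count) or carrying non-negligible cost. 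A bucketing argument in the style of the proof of Lemma~\ref{lem:fib-tree} in BKM09, grouping excursions logarithmically by their cost relative to $\cost(\pi)$, shows that the number of heavy excursions is $O(\epsilon^{-1})$.

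Combining the two parts gives the final tree $\widehat{T}$. The cost inflation is at most $(1+4\epsilon)(1+\epsilon) \leq (1+c_1\epsilon)$ for a suitable constant $c_1$, provided the two inflations are charged to disjoint portions of $T$ (the subtrees vs. the spine). The joining-vertex count decomposes as: at most $O(\epsilon^{-1})$ heavy subtrees, each contributing $11\epsilon^{-1.45}$ joining vertices by Lemma~\ref{lem:fib-tree}, plus $O(1)$ joining vertices from the simplified spine at each of the retained heavy-excursion endpoints, totalling $O(\epsilon^{-2.45}) \leq c_2 \epsilon^{-2.5}$ for a suitable constant $c_2$. The main difficulty in carrying out this program rigorously is performing the bucketing argument cleanly, so that the same portions of $\cost(T)$ are not double-charged across the per-subtree applications of Lemma~\ref{lem:fib-tree} and the spine-flattening step; this is what forces the exponent to degrade from $1.45$ in the one-root case to $2.5$ here.
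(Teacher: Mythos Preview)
This paper does not prove Lemma~\ref{lem:short-runs}; it is quoted verbatim as Lemma~10.6 of~\cite{BKM09} and used as a black box. There is therefore no proof in the present paper to compare your proposal against.

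As to the proposal itself: the high-level shape---split $T$ along the $p$-to-$q$ spine $\pi$, apply the one-root lemma to each hanging subtree, and then simplify the spine so that only $O(1/\epsilon)$ subtrees survive---is the natural reduction and is consistent with the jump in exponent from $\epsilon^{-1.45}$ to $\epsilon^{-2.5}$. The gap is in the spine step. When you flatten an excursion $\pi[x,y]$ onto $P[x,y]$, every subtree $T_v$ rooted at an interior vertex $v$ of that excursion is severed from the new spine; since $\widehat T$ must still span $T_v\cap P$, you must also absorb those $T_v$ into $P$. You acknowledge this with the heavy/light distinction, but you never give a concrete heaviness criterion that simultaneously (i) limits the number of retained subtrees to $O(1/\epsilon)$ and (ii) guarantees that flattening every light subtree and excursion onto $P$ costs only a $(1+O(\epsilon))$ factor in total. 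The $\epsilon$-shortness of $P$ bounds each individual replacement subpath by $(1+\epsilon)$ times the piece it replaces, but the replacement subpaths for distinct light pieces can overlap one another and the retained spine, so a naive sum overcounts. The ``bucketing argument in the style of Lemma~\ref{lem:fib-tree}'' you invoke is precisely where the real work sits, and as written it is a placeholder rather than an argument.
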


\begin{lemma}\label{lem:forest-cycle}
  Let $\mathcal F$ be a set of non-crossing trees whose leaves are vertices
  of $\epsilon$-short boundary paths $P$ and $Q$ and such that each
  tree in the forest has leaves on both these paths.  There is a cycle
  or empty set $\widehat C$, a
  set $\widehat{\mathcal F}$ of trees, and a mapping $\phi: \mathcal F
  \longrightarrow \widehat{\mathcal F}\cup \set{\widehat C}$ with the following properties
  \begin{itemize}
  \item For every tree $T$ in $\mathcal F$, $\phi(T)$ spans $T$'s leaves.
  \item For two trees $T_1$ and $T_2$ in $F$, if $\phi(T_i)\neq
    \widehat C$ for at least one of $i=1,2$ then $\phi(T_1)$ and
    $\phi(T_2)$ are edge-disjoint (taking into account edge multiplicities).
  \item The subgraph $\bigcup \widehat {\mathcal F} \cup \set{\widehat C}$ has $o(\epsilon^{-2.5})$ joining
    vertices with $P \cup Q$.
  \item $\cost(\widehat C)+ \sum \set{\cost(T)\ :\ T \in \widehat
      {\mathcal F}} \le 3\cost(Q) + (1+d \cdot
    \epsilon)\sum \set{\cost(T)\ : T\in \mathcal F}$ where $d$ is an absolute constant.
  \end{itemize}
\end{lemma}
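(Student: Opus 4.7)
The plan is to consolidate all of $\mathcal{F}$ into a single cycle $\widehat{C}$ that contains every leaf, with $\widehat{\mathcal{F}} = \emptyset$ and $\phi(T) = \widehat{C}$ for every $T$. Because the trees in $\mathcal{F}$ are non-crossing and each joins $P$ to $Q$, they admit a left-to-right planar order. Let $p_L$ and $p_R$ (respectively $q_L$ and $q_R$) be the leftmost and rightmost leaves on $P$ (respectively on $Q$) among all trees in $\mathcal{F}$, and let $T_L, T_R \in \mathcal{F}$ be the trees containing these extremes. Pick a $p_L$-to-$q_L$ path $\pi_L \subseteq T_L$ and a $p_R$-to-$q_R$ path $\pi_R \subseteq T_R$, and set $\widehat{C} = P[p_L, p_R] \pathcat \pi_R \pathcat \rev{Q[q_L, q_R]} \pathcat \rev{\pi_L}$.

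Three of the four required properties will follow immediately. First, every leaf of every $T \in \mathcal{F}$ lies on $P[p_L, p_R] \cup Q[q_L, q_R] \subseteq \widehat{C}$, so $\widehat{C}$ spans $T$'s leaves. Second, the edge-disjointness clause is vacuous because no tree maps into $\widehat{\mathcal{F}}$. Third, the only edges of $\widehat{C}$ not lying on $P \cup Q$ are those of $\pi_L$ and $\pi_R$, and these meet $P \cup Q$ only at the four vertices $p_L, p_R, q_L, q_R$, so $\widehat{C}$ has at most four joining vertices with $P \cup Q$, comfortably within the $o(\epsilon^{-2.5})$ budget.

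The bulk of the work is the cost bound. Using the $\epsilon$-shortness of $P$, I would estimate $\cost(P[p_L, p_R]) \leq (1+\epsilon)\dist_G(p_L, p_R)$ and bound the $G$-distance by the envelope route through $T_L, Q[q_L, q_R], T_R$ to get $\dist_G(p_L, p_R) \leq \cost(T_L) + \cost(Q) + \cost(T_R)$. Summing the four pieces of $\widehat{C}$ yields $\cost(\widehat{C}) \leq (2+\epsilon)\cost(Q) + (2+\epsilon)(\cost(T_L) + \cost(T_R))$. The $Q$-contribution fits inside the $3\cost(Q)$ budget for $\epsilon \leq 1$; the extra $(1+\epsilon)(\cost(T_L) + \cost(T_R))$ above what is already available in $(1+d\epsilon)\sum_{T \in \mathcal{F}}\cost(T)$ is the chief obstacle. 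The hard part is showing that this overage amortizes into $(1+d\epsilon)\sum\cost(T)$ when $\mathcal{F}$ is large relative to its two extreme trees. In the residual pathological case where $\mathcal{F}$ is small and dominated by one or two heavy trees, I would instead take $\widehat{C} = \emptyset$ and populate $\widehat{\mathcal{F}}$ by invoking Lemma~\ref{lem:short-runs} separately on each heavy tree (with roots at its extreme $Q$-leaves and with $P$ treated as the $\epsilon$-short target), which contributes $O(\epsilon^{-2.5})$ joining vertices in total at cost factor $(1+c_1\epsilon)$ per tree.
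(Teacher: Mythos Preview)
Your cycle construction and the verification of the first three bullets are fine; the gap is in the cost analysis and, specifically, in the dichotomy you set up to handle the overage $(1+\epsilon)(\cost(T_L)+\cost(T_R))$.

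The two cases you describe do not exhaust the possibilities. Consider $\mathcal F$ with one very heavy leftmost tree $T_L$ of cost $\tfrac12\cost(\mathcal F)$ together with, say, $100/\epsilon$ additional trees each of cost $\tfrac{\epsilon}{200}\cost(\mathcal F)$. Here $\cost(T_L)+\cost(T_R)$ is a constant fraction of $\cost(\mathcal F)$, so your cycle pays $\Theta(\cost(\mathcal F))$ in overage that cannot be absorbed into a $(1+d\epsilon)$ factor; yet $\mathcal F$ is not ``small and dominated by one or two heavy trees'' --- there are $\Theta(1/\epsilon)$ trees, and treating each with Lemma~\ref{lem:short-runs} would cost $\Theta(\epsilon^{-1})\cdot\Theta(\epsilon^{-2.5})=\Theta(\epsilon^{-3.5})$ joining vertices, blowing the $o(\epsilon^{-2.5})$ budget. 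So neither branch of your plan applies.

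The paper's fix is exactly the missing idea: do \emph{not} build the cycle from the extreme trees. Instead, when $k>1/\epsilon$, pick the smallest index $a$ and largest index $b$ with $\cost(T_a),\cost(T_b)\le\epsilon\cost(\mathcal F)$ (such indices exist by averaging), and form $\widehat C$ from $T_a$, $T_b$, and the spanning subpaths of $P$ and $Q$. Now the overage is $O(\epsilon)\cost(\mathcal F)$ by the choice of $a,b$, and the trees left outside the cycle ($T_1,\dots,T_{a-1},T_{b+1},\dots,T_k$) each cost more than $\epsilon\cost(\mathcal F)$, so there are fewer than $2/\epsilon$ of them. Each such survivor $T_i$ is then simplified using Lemma~\ref{lem:fib-tree} (one root on $Q$, $o(\epsilon^{-1.5})$ joining vertices with $P$) together with the minimal subpath $Q_i$ of $Q$ covering its $Q$-leaves; the $Q_i$'s are pairwise disjoint from each other and from the cycle's $Q$-segment, so their total cost is at most $\cost(Q)$. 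This is what makes the joining-vertex count come out to $O(\epsilon^{-1})\cdot o(\epsilon^{-1.5})=o(\epsilon^{-2.5})$ and the cost to $(2+\epsilon)\cost(Q)+(1+O(\epsilon))\cost(\mathcal F)$.
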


\begin{proof}
  View the embedding of the boundary such that $P$ is on top and $Q$
  is at the bottom.  Let $T_1, \ldots, T_k$ be the trees of $F$
  ordered according the order of their leaves from left to right.

\bigskip

There are two cases.

Case 1) $k > 1/\epsilon$.  In this case, we reduce the number of trees
by incorporating a cycle $\widehat C$.  Let $a$
  be the smallest index such that $\cost(T_a) \le \epsilon \cost(F)$ and let
  $b$ be the largest index such that $\cost(T_b) \le \epsilon \cost(F)$.  We
  will replace trees $T_a, T_{a+1}, \ldots, T_b$ with a cycle.
  Let $Q'$ be the minimal subpath of $Q$
  that spans the leaves of $\bigcup_{i = a}^b T_i$ on $Q$.  We likewise
  define $P'$.  Let $L$ be the leftmost $Q$-to-$P$ path in $T_a$ and
  let $R$ be the rightmost $Q$-to-$P$ path in $T_b$.  Since $P$ is
  $\epsilon$-short, 
  \begin{equation}
    \cost(P') \le (1+\epsilon)\cost(L \cup Q' \cup R).\label{eq:1}
  \end{equation}
  To obtain $\widehat{F}$ from $F$, we  replace the trees $T_a, \ldots, T_b$ with the cycle
  $\widehat C = P'\cup L \cup Q' \cup R$ and set $\phi(T_a), \ldots,
  \phi(T_b)$ to $\widehat C$.  By construction $\widehat C$ spans the
  leaves of $\cup_{i = a}^b T_i$.

Case 2)  $k \leq 1/\epsilon$.  In this case,  the number of trees is
already bounded.    We set $a=2, b=1$ so as to not eliminate any
trees, and we set $\widehat C$ to be the empty set.

\bigskip

In both cases,
  for each remaining tree $T_i$ ($i \ne a,a+1, \ldots, b$) we do the
  following.  Let $T_i'$ be a minimal subtree of $T_i$ that spans all
  the leaves of $T_i$ on $P$ and exactly one vertex $r$ of $Q$.  Let
  $Q_i$ be the minimal subpath of $Q$ that spans the leaves of $T_i$
  on $Q$. We replace $T_i$ with the tree $\widehat T_i$ that
  is the union of $Q_i'$ and the tree guaranteed by
  Lemma~\ref{lem:fib-tree} for tree $T_i'$ with root $r$ and
  $\epsilon$-short path $P$.  By construction $\widehat T_i$ spans
  the leaves of $T_i$.  We set $\phi(T_i) = \widehat T_i$ for $i \ne a, \ldots, b$. 

$\widehat C$  has at most four joining vertices with $P \cup Q$.  Each tree
  $\widehat T_i$ has one joining vertex with $Q$ and, by
  Lemma~\ref{lem:fib-tree}, $o(\epsilon^{-1.5})$ joining vertices with
  $P$.  By the choice of $a$ and $b$, there are at most $2/\epsilon$
  of the trees in the second part of the construction.  This yields
  the bound on joining vertices.

  The total cost of the replacement cycle is:
  \begin{eqnarray*}
  \cost(\widehat C) & \le & \cost(P')+\cost(L)+\cost(Q')+\cost(R)\\
  & \le & (2+\epsilon)(\cost(L)+\cost(Q')+\cost(R)) \qquad \mbox{by Equation~\eqref{eq:1}}\\
  & \le & (2+\epsilon)(\cost(T_a)+\cost(Q')+\cost(T_b)) \qquad
  \mbox{since $L$ and $R$ are paths in $T_a$ and $T_b$}\\
  & \le & (2+\epsilon)(2\epsilon \cost(F)+\cost(Q')) \qquad \mbox{by the
    choice of $a$ and $b$}\\
  & \le & (4\epsilon+2\epsilon^2)\cost(F)+(2+\epsilon)\cost(Q')
  \end{eqnarray*}
  The total cost of the replacement trees is:
  \begin{eqnarray*}
  \sum_{i = 1, \ldots, a-1, b+1 \ldots k} \cost(\widehat T_i) & \le & \sum_{i = 1, \ldots, a-1, b+1 \ldots k}
  \cost(Q_i')+(1+4\epsilon)\cost(T_i') \qquad \mbox{by Lemma~\ref{lem:fib-tree}}\\
  & \le & \sum_{i = 1, \ldots, a-1, b+1 \ldots k}
  \cost(Q_i')+(1+4\epsilon)\cost(T_i) \qquad \mbox{since $T_i'$ is a subtree
    of $T_i$}
  \end{eqnarray*}
  By the ordering of the trees and the fact that they are
  non-crossing, $Q'$ and the $Q_i'$'s are disjoint.  Combining the
  above gives the bound on cost.
\end{proof}

\subsection{Construction of a new solution}

We start with a brief overview of the steps used to prove the
structure theorem.  We start with an edge multiset forming an
optimal solution, $\opt$.  Each 
step modifies either the input graph $G$ or a subgraph thereof while
simultaneously modifies the solution.  The
graphs and edge multisets resulting from these steps are denoted by
subscripts.  Details are given in subsequent sections.

\begin{description}
\item[Augment]   We add two copies of each supercolumn, obtaining $G_A$
  and $\opt_A$.  We consider the two copies to
  be interior to the two adjacent bricks. This step allows us, in the {\em restructure} step, to
  concern ourselves only with connectivity between the north and south boundaries of a
  brick.  

\item[Cleave] {\em Cleaving} a vertex refers to splitting it into two
  vertices and adding an artificial edge between the two vertices.
  In the {\em cleave} step, 
  we modify $G_A$ (and so in turn modify $MG_A$ and
  $\opt_A$) to create $G_C$ (and $MG_C$ and $\opt_C$) by cleaving
  certain vertices while maintaining a planar embedding.  Let $J_C$ be
  the set of artificial edges introduced.  Note that $X_C / J_C =  X_A$. 
  The artificial edges are assigned zero cost so the metric
  between vertices is preserved.  The artificial edges are added to
  the solution, possibly in multiplicity, so connectivity is preserved.

\item[Flatten] In this step, for each brick $B$, we consider the
  intersection  of the solution with $\interior(B)$; we replace some of the
  connected components of the intersection with subpaths of the
  boundary of $B$.  We denote the resulting solution by $\opt_F$.

\item[Map] We map the edges of $\opt_F$ to
  $\mathcal{B}^+(MG)$ creating $\opt_M$. This step temporarily
  disconnects the solution.

\item[Restructure] In this step, we modify the solution $\opt_M$.  For
  each brick $B$ in $MG_C$, the part of $\opt_M$ strictly interior to
  $B$ is replaced with another subgraph that has few joining vertices
  with $\partial B$.  We denote the resulting solution by $\opt_S$.

\item[Rejoin] In order to re-establish connections broken in the
  {\em Map} step, we add edges to $\opt_S$.  Next, we contract
  the artificial edges added in the {\em cleave} step.  We denote the
  resulting solution by  $\widehat{OPT}$.
\end{description}

Note that the solutions
$\opt$, $\opt_A$, and so on are multisets; an edge can occur more than
once.  We now describe these steps in greater detail.

\subsubsection{Augment} 

Recall that a supercolumn is the eastern boundary of one brick and the
western boundary of another, and that the sum of costs of all
supercolumns is small.  In the {\em Augment} step, for each supercolumn $P$,
we modify the graph as
shown in Figure~\ref{fig:st-augment}:
\begin{itemize}
\item Add to the graph two copies of $P$, called $P_1$ and $P_2$,
  creating two new faces, one bounded by $P_1$ and $P$ and the other
  bounded by $P$ and $P_2$.  
\item Add $P_1$ and $P_2$ to $\opt$.
\end{itemize}
The resulting graph
is denoted $G_A$, and the resulting solution is denoted $\opt_A'$.
We consider $P_1$ and $P_2$ to be internal to the two bricks.  Thus
$P$ remains part of the boundary of each of the bricks, and $MG$
contains $P$ but not $P_1$ or $P_2$.  Since $P_1$ and $P_2$ share no
internal vertices with $P$, the joining vertices of $\opt_A'\cap B$ with
 $\partial B$ belong to $N_B$ and $S_B$.

\begin{figure}[ht]
  \centering
  \subfigure[]{\includegraphics[scale=.7]{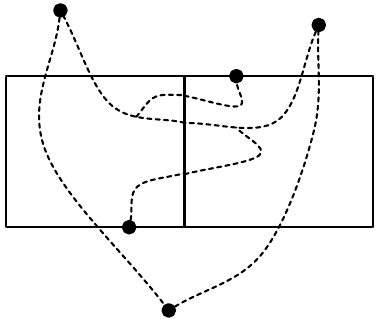}}
  \subfigure[]{\includegraphics[scale=.7]{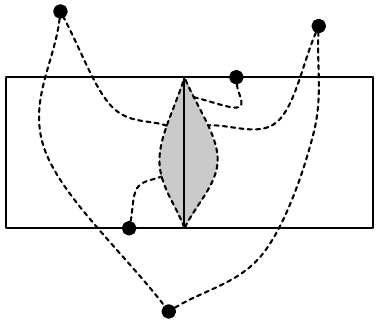}}
  \caption{Adding the column between two adjacent bricks (solid) in
    the {\em augment} step.  The dotted edges represent $\opt$ in (a)
    and $\opt_A$ in (b).}
  \label{fig:st-augment}
\end{figure}

We perform one more step, a {\em minimality-achieving step}:
\begin{itemize}
\item We remove edges from $\opt_A'$ until it is a minimal set of edges
  achieving the desired connectivity between terminals.
\end{itemize}
Let $\opt_A$ be the resulting set.  We get: 

\begin{lemma}\label{lem:joinNS}
 For every brick $B$, the joining vertices of $\opt_A\cap B$ with
 $\partial B$ belong to $N_B$ and $S_B$.
\end{lemma}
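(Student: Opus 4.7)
The plan is to prove the lemma by contradiction against the minimality of $\opt_A$. Suppose some brick $B$ contains an edge $e=(v,u)$ of $\opt_A$ with $v$ in the interior of a supercolumn $P\in\{E_B,W_B\}$ and $u\notin\partial B$, and let $B'$ be the brick on the opposite side of $P$.

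The first step is to record the available structure. By Lemma~\ref{lemma:brick-properties}(2), every terminal of $B$ lies on $N_B\cup S_B$, which is disjoint from the interior of any supercolumn, so $v$ is not a terminal. Meanwhile, the augment step placed copies $P_1\subset B$ and $P_2\subset B'$ of $P$ into $\opt'_A$, each connecting the corners $a,b$ of $P$ (both of which lie on $N_B\cup S_B$) and sharing no internal vertex with $P$. Consequently $P_1\cup P_2$ is a non-self-crossing cycle $C$ in $\opt'_A$, and by the augment's embedding $C$ strictly encloses only internal vertices of $P$ together with those interior vertices of $B\cup B'$ that the augment had to route inside $C$ in order to avoid crossings with edges of $G$ incident to $\text{int}(P)$. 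In particular $C$ strictly encloses no terminal.

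The main step is to show that $e$ can be removed from $\opt_A$ without violating any connectivity requirement, contradicting minimality. If $P_1\cup P_2\subseteq \opt_A$, the Empty Cycle Lemma (Lemma~\ref{lem:empty-cycle}) applied to $C\subseteq\opt_A$ does this immediately: $C$ strictly encloses no terminal, and $e$ lies strictly inside $C$ by the augment's embedding (so $u$ is also strictly inside), hence the pairwise two-connectivity between terminals of $\opt_A$ survives the deletion of $e$.

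I expect the main obstacle to be handling the case in which the minimality step has already removed some edges of $P_1\cup P_2$, so $C$ is not intact in $\opt_A$. To deal with this I would argue by induction on the number of removed edges: each prior removal from $P_1\cup P_2$ must have been justified by an alternative routing in $\opt_A$ preserving the relevant cut; combining these alternative routings with the surviving portions of $P_1\cup P_2$ yields two edge-disjoint $a$-to-$b$ paths in $\opt_A$ that avoid $v$. Since $v$ is not a terminal, any $s$-to-$t$ witness in $\opt_A$ that uses $e$ passes through $v$ only as a Steiner vertex and can therefore be rerouted through $a$ or $b$ via these alternative paths. The resulting collection of edge-disjoint $s$-to-$t$ paths in $\opt_A\setminus\{e\}$ contradicts the minimality of $\opt_A$, completing the proof.
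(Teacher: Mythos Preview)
You have missed the point of the Augment construction and, as a result, are proving something that the paper obtains for free. In the paper, the sentence immediately preceding the minimality step already asserts that ``the joining vertices of $\opt_A'\cap B$ with $\partial B$ belong to $N_B$ and $S_B$'' --- this is a property of the \emph{graph} $G_A$, not of the solution: after inserting $P_1$ and $P_2$ (which share no internal vertices with $P$) so that the regions bounded by $P\cup P_1$ and $P\cup P_2$ are faces of $G_A$, no edge of $B\setminus\partial B$ is incident to an interior vertex of $E_B$ or $W_B$ at all. The lemma for $\opt_A$ then follows in one line: $\opt_A$ is obtained from $\opt_A'$ by deleting edges, and deletion can only remove joining vertices, never create new ones. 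No appeal to minimality, the Empty Cycle Lemma, or any rerouting is needed.

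Beyond being unnecessary, your argument has a genuine gap in the case $P_1\cup P_2\not\subseteq\opt_A$. The claim that ``each prior removal from $P_1\cup P_2$ must have been justified by an alternative routing'' and that these routings can be combined into two edge-disjoint $a$-to-$b$ paths in $\opt_A$ avoiding $v$ is not substantiated. Minimality only says an edge is removable if its deletion preserves all terminal-pair requirements; it does not hand you a specific detour, and certainly not one that avoids a designated non-terminal vertex $v$. The corners $a,b$ need not be terminals, so there is no requirement forcing two edge-disjoint $a$-to-$b$ paths to survive in $\opt_A$ at all. Your induction sketch does not close this hole. The fix is not to repair this case but to discard the whole approach: reread the Augment step, note that the hypothesised edge $e=(v,u)$ cannot exist in $G_A$, and observe that the lemma is immediate from $\opt_A\subseteq\opt_A'$.
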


\subsubsection{Cleave} \label{sec:cleave}

We define
a graph operation, {\em cleave}.  Given a vertex $v$ and a bipartition
$A, B$ of the edges incident to $v$, $v$ is cleaved by
\begin{itemize}
\item splitting $v$ into two vertices, $v_A$ and $v_B$,
\item mapping the endpoint $v$ of edges in $A$ to $v_A$,
\item mapping the endpoint $v$ of edges in $B$ to $v_b$, and
\item introducing a zero-cost edge $e_v = v_Av_B$.
\end{itemize}
This operation is illustrated in Figure~\ref{fig:cleave}(a) and~(b).
If the bipartition $A,B$ is non-interleaving with respect to the
embedding's cycle of edges around $v$ then the construction maintains a planar embedding.

\setcounter{subfigure}{0}
\begin{figure}[ht]
  \centering
  \subfigure[]{\includegraphics{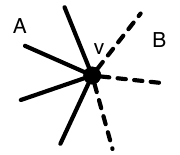}}
  \subfigure[]{\includegraphics{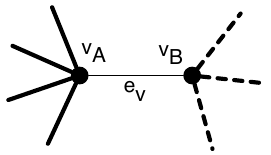}}\\
  \subfigure[]{\includegraphics{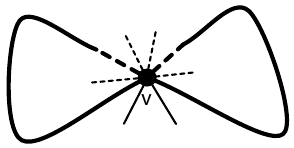}}
  \subfigure[]{\includegraphics{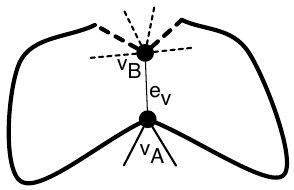}}
  \\
   \subfigure[]{\includegraphics{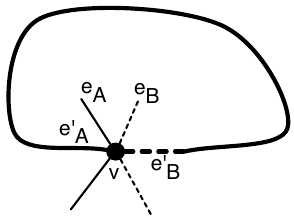}}
   \subfigure[]{\includegraphics{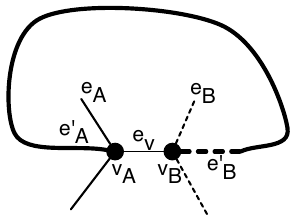}}
   \caption{Cleavings illustrated.  The bipartition of the edges
     incident to $v$ is given by the dashed edges $A$ and solid edges
     $B$. (a) Starting with this bipartition, (b) the result of
     cleaving vertex $v$ according to this bipartition. A simplifying
     cleaving of vertex $v$ with respect to a cycle (bold) before (c)
     and after (d).  A lengthening cleaving of a cycle (e) before and
     (f) after.  }
  \label{fig:cleave}
\end{figure}

We use two types of cleavings:
\begin{description}
\item[Simplifying cleavings] Refer to Figures~\ref{fig:cleave}(c)
  and~(d). Let $C$ be a clockwise non-self-crossing, non-simple cycle
  that visits vertex $v$ twice.  Define a 
  bipartition $A,B$ of the edges incident to $v$ as follows: given the
  clockwise embedding of the edges incident to $v$, let $A$ start and
  end with consecutive edges of $C$ and contain only two edges of
  $C$.  Such a bipartition exists because $C$ is non-self-crossing.
  
 \item[Lengthening cleavings] Refer to Figures~\ref{fig:cleave}(e)
   and~(f). Let $ C$ be a cycle, let $v$ be a vertex
   on $C$ with two edges $e_A$ and $e_B$ adjacent to $v$ embedded
   strictly inside $ C$, and let $e_A'$ and $e_B'$ be
   consecutive edges
   of $ C$ adjacent to $v$ such that the following bipartition is
   non-crossing with respect to the embedding: $A,B$ is a bipartition
   of the edges adjacent to $v$ such that $e_A, e_A' \in A$ and $e_B,
   e_B' \in B$.
\end{description}

We perform simplifying cleavings for non-simple cycles of $\opt_A$
until every cycle is simple; the artificial edges introduced are not included in
$\opt$.  The following lemma does not use planarity and shows that
(since cycles get mapped to cycles in this type of cleaving)
simplifying cleavings preserve two-edge connectivity.

\begin{lemma} \label{lem:OPT_C-2EC} Let $e$ be an edge in a graph $H$.
  Let $\widehat{H}$ be the graph obtained from $H$ by a simplifying
  cleaving.  Then $e$ is a cut-edge in $H$ iff it is a cut-edge in
  $\widehat{H}$.
\end{lemma}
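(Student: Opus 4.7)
My plan is to reduce the lemma to a standard fact about edge contraction. The first observation is that $\widehat H$ is obtained from $H$ by the local operation of splitting $v$ into $v_A, v_B$ and adding the artificial edge $e_v$, so contracting $e_v$ in $\widehat H$ re-identifies $v_A$ with $v_B$ and yields $H$ back: $\widehat H / e_v = H$. Any edge $e \in E(H)$ is thus naturally identified with a corresponding edge of $\widehat H$ distinct from $e_v$.

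The heart of the proof is to show that $e_v$ is itself not a cut-edge in $\widehat H$. This is where the precise definition of a simplifying cleaving is used. By construction, $A$ contains exactly two of the four $C$-edges incident to $v$, and these two edges are consecutive in the cyclic order around $v$, so they are precisely the two $C$-edges used at one of $C$'s two visits to $v$ (the non-self-crossing hypothesis on $C$ is what makes such a consecutive pair exist). The other visit's two $C$-edges therefore lie in $B$. Consequently, traversing $C$ in $\widehat H$ we pass through $v_A$ exactly once (via the two $A$-edges of $C$) and through $v_B$ exactly once (via the two $B$-edges of $C$); the underlying edge set of $C$ is still a simple cycle in $\widehat H$. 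Splitting this cycle at $v_A$ and $v_B$ yields a $v_A$-to-$v_B$ path $P_1$ (and its complement $P_2$); then $P_1 \cup \{e_v\}$ is a cycle in $\widehat H$ containing $e_v$, so $e_v$ is not a cut-edge.

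Finally I invoke the standard edge-contraction fact: for any graph $G$ and any non-cut-edge $f$, an edge $e \neq f$ is a cut-edge in $G$ if and only if it is a cut-edge in $G / f$. The forward direction holds because if $G - e$ is disconnected, the endpoints of $f$ lie in the same component (since $f$ is still present in $G - e$), so contracting $f$ does not merge the components of $G - e$; the reverse direction holds because contracting an edge never changes the number of connected components of a graph, so $(G/f) - e = (G - e)/f$ has the same number of components as $G - e$. Applying this fact to $G = \widehat H$ and $f = e_v$, and using $\widehat H / e_v = H$, we conclude that $e$ is a cut-edge in $\widehat H$ if and only if $e$ is a cut-edge in $H$.

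I expect the only delicate step is the verification that the two visits of $C$ to $v$ land at \emph{distinct} vertices $v_A$ and $v_B$ in $\widehat H$ (so that $C$ yields the desired cycle through $e_v$); the rest is a routine use of the contraction fact and requires no planarity, matching the lemma's claim that simplifying cleavings preserve two-edge-connectivity in a purely combinatorial way.
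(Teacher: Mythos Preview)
Your proof is correct and takes a different route from the paper's. The paper argues directly: writing $e = u_1u_2$, it observes that $e$ fails to be a cut-edge precisely when each $u_i$ has an $e$-avoiding path to $C$, and that such paths exist in $H$ iff they exist in $\widehat H$ (since the cleaving only touches incidences at $v\in C$, and $C$ survives as a connected cycle in $\widehat H$). You instead reduce to the standard graph-minor fact that contracting an edge preserves which other edges are bridges, via $\widehat H/e_v = H$. Both arguments hinge on the same structural point you isolate---that the bipartition in a simplifying cleaving sends one visit of $C$ to $v_A$ and the other to $v_B$, so the edge set of $C$ forms a single cycle through both new vertices in $\widehat H$---but your packaging is more modular. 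Two minor remarks. First, the phrase ``consecutive edges of $C$'' in the definition means consecutive \emph{along} $C$ (a visit pair), not merely adjacent among $C$-edges in the rotation at $v$; your conclusion that $A$ carries exactly one visit is correct, but it follows from this reading directly rather than from consecutiveness in the rotation. Second, your contraction fact actually holds for \emph{any} edge $f\neq e$, since $(G/f)-e=(G-e)/f$ and contraction never changes the component count; so the verification that $e_v$ lies on a cycle is, strictly speaking, unnecessary for your argument as written. It becomes essential only when $\widehat H$ is taken to be the solution subgraph \emph{without} $e_v$ (which is how the paper applies the lemma to $\opt$, since simplifying cleavings do not add $e_v$ to $\opt$), and that is exactly what the paper's own proof is implicitly using.
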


\begin{proof} Let $u_1, u_2$ be the endpoints of $e$ and let $C$ be
  the cycle w.r.t.\ which a simplifying cleaving was performed.  If
  $H$ contains an $e$-avoiding $u_i$-to-$C$ path for $i=1,2$ then $e$
  is not a cut-edge in $H$, and similarly for $\widehat{H}$.  Suppose
  therefore that removing $e$ separates $u_i$ from $C$ in $H$.
  Then the same is true in $\widehat{H}$, and conversely.
\end{proof}

\begin{corollary} \label{cor:OPT_C-2EC} For $k=1,2$, if two vertices
  are $k$-edge connected in $H$ then any of their copies are
  $k$-edge connected in $H_C$.
\end{corollary}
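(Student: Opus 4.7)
The plan is to induct on the number of simplifying cleavings, reducing to the case of a single cleaving at a vertex $w$ that produces $\widehat{H}$ with new vertices $w_A$, $w_B$ and artificial edge $e_w$. Under this reduction it suffices to show that $k$-edge connectivity between $u$ and $v$ in $H$ implies $k$-edge connectivity between any copies of $u$ and $v$ in $\widehat{H}$.

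For $k=1$ I would lift a $u$-to-$v$ path $P$ in $H$ to $\widehat{H}$ directly. If $P$ avoids $w$, it lifts verbatim; if $P$ visits $w$ once via two incident edges lying on the same side of the bipartition $A,B$, the lift passes through the corresponding copy $w_A$ or $w_B$; and if those two edges lie on opposite sides, the lift uses $e_w$ to cross between $w_A$ and $w_B$. In every case we obtain a walk in $\widehat{H}$ between some copies of $u$ and $v$, and since $e_w$ joins the two copies of $w$, any copy of $u$ can reach any copy of $v$.

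For $k=2$ I would take two edge-disjoint $u$-to-$v$ paths $P_1, P_2$ in $H$ and lift each as above. Because each $H$-edge maps to a unique edge of $\widehat{H}$ (with at most one endpoint relabelled), the $H$-edges of the two lifts remain disjoint; the only possible clash is over the artificial edge $e_w$, which each lifted path uses at most once (since each $P_i$ is simple and visits $w$ at most once). When both lifts transition across the cleaving, I would invoke the convention stated in the cleave-step description, namely that artificial edges are added with multiplicity two in order to preserve two-edge connectivity, so the two lifts may be assigned distinct copies of $e_w$ and remain edge-disjoint.

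The main obstacle is precisely this last case in which both edge-disjoint paths in $H$ must cross the cleaving; the multiplicity convention for the zero-cost artificial edges is exactly what allows both lifts to coexist. I do not expect to need more than this path-lifting argument together with Lemma~\ref{lem:OPT_C-2EC}, the latter serving as a sanity check that no $H$-edge becomes a new bridge in $\widehat{H}$ that could separate the copies of $u$ and $v$.
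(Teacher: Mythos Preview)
Your path-lifting argument has a genuine gap in the $k=2$ case. You resolve the potential collision over the artificial edge $e_w$ by invoking a ``multiplicity-two convention,'' but no such convention applies to simplifying cleavings: the paper states that for these cleavings ``the artificial edges introduced are not included in $\opt$,'' so $e_w$ is not available even once, let alone twice. (The phrase ``possibly in multiplicity'' in the overview pertains to lengthening cleavings, handled afterwards via Lemma~\ref{lem:OPT_C-2VC}.) Your lifted paths therefore have no legitimate way to cross from the $A$-side to the $B$-side of the cleave.

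The paper's argument is precisely the one you demote to a ``sanity check.'' Two vertices are $2$-edge connected iff no cut edge separates them, and Lemma~\ref{lem:OPT_C-2EC} says the set of cut edges is unchanged by a simplifying cleaving. What makes this work without $e_w$ is the parenthetical remark ``since cycles get mapped to cycles in this type of cleaving'': the bipartition $A,B$ is chosen so that the defining non-simple cycle $C$ becomes, after the cleave, a cycle through both $w_A$ and $w_B$ that does not use $e_w$. Hence $w_A$ and $w_B$ are already $2$-edge connected in $\widehat H$ via $C$, so no new bridge can appear and the $2$-edge-connected components are preserved. You could attempt to repair your path-lifting by routing the two crossings along the two arcs of $C$ rather than through copies of $e_w$, but then you must argue those arcs are edge-disjoint from the portions of $P_1,P_2$ already used; untangling that essentially reproduces the cut-edge argument.
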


Moreover, after all the simplifying cleavings, every cycle is simple, so:
\begin{lemma} \label{lem:OPT_C-2VC0}
  Vertices that are two-edge-connected in $\opt_C$ are biconnected.
\end{lemma}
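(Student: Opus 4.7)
The plan is to package any two edge-disjoint paths between a pair of vertices of $\opt_C$ into a single closed walk and then appeal to the hypothesis, just established by the simplifying cleavings, that every cycle of $\opt_C$ is simple.

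First I would fix vertices $u$ and $v$ that are two-edge-connected in $\opt_C$ and choose any pair of edge-disjoint $u$-to-$v$ paths $P_1$ and $P_2$ in $\opt_C$. The concatenation $W = P_1 \pathcat \rev{P_2}$ is a closed walk from $u$ back to $u$ whose edge multiset is $E(P_1) \cup E(P_2)$, and these edges are all distinct because $P_1$ and $P_2$ are paths with disjoint edge sets. In the paper's terminology --- recall from Figure~\ref{fig:crossing}(d) that cycles are allowed to revisit vertices --- $W$ is therefore a cycle of $\opt_C$.

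Next I would invoke the standing assumption recorded immediately before the lemma: after all simplifying cleavings, every cycle of $\opt_C$ is simple. Applied to $W$, this says that $W$ visits no vertex more than once apart from its coinciding start and end $u$. In particular, the internal vertices of $P_1$ and the internal vertices of $P_2$ are disjoint, so $P_1$ and $P_2$ are a pair of internally vertex-disjoint $u$-to-$v$ paths, witnessing that $u$ and $v$ are biconnected.

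The only subtle point, and what I expect to be the main conceptual hurdle, is making the correct choice of ``cycle'' in the appeal: if one read cycle in the narrower sense of simple cycle, the hypothesis would be vacuous when applied to $W$ and one would then have to argue separately that $W$ itself cannot repeat a vertex. Because the paper adopts the more permissive notion (a closed walk with distinct edges, possibly revisiting vertices), the hypothesis applies to $W$ directly and the lemma follows immediately.
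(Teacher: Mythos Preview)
Your argument is correct and matches the paper's approach exactly: the paper states the lemma as an immediate consequence of the assertion that after all simplifying cleavings every cycle of $\opt_C$ is simple, and you simply unpack that implication by exhibiting the closed walk $P_1\pathcat\rev{P_2}$ and applying the hypothesis to it. The only point worth flagging is that the whole argument rests on reading ``cycle'' in the paper's permissive sense (closed trail, possibly revisiting vertices), which you correctly identify and justify via Figure~\ref{fig:crossing}(d).
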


Next we perform lengthening cleavings w.r.t.\ the boundary of a brick and
edges $e_A$ and $e_B$ of $\opt_C$; we include in $\opt_C$ all the
artificial zero-cost edges introduced.  
Lengthening cleavings clearly maintain connectivity.  Suppose that
vertices $x$ and $y$ are biconnected in $\opt_C$, and consider performing
a lengthening cleaving on a vertex $v$.  Since there are two
internally vertex-disjoint $x$-to-$y$ paths in $\opt_C$, $v$ cannot appear on
both of them.  It follows that there remain two internally
vertex-disjoint $x$-to-$y$ paths after the cleaving.  We obtain the
following lemma.

\begin{lemma} \label{lem:OPT_C-2VC}
  Lengthening cleavings maintain biconnectivity.
\end{lemma}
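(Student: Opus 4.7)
The plan is to formalize the two-sentence argument given in the text immediately preceding the lemma: if $x$ and $y$ are biconnected in $\opt_C$ then two internally vertex-disjoint $x$-to-$y$ paths $P_1, P_2$ together visit each internal vertex at most once, so the cleaved vertex $v$ lies internally on at most one of the two paths, and everything else can be preserved or patched using the artificial edge $e_v$.

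I would first treat the generic case $v \notin \{x,y\}$. Take two internally vertex-disjoint $x$-to-$y$ paths $P_1, P_2$ in $\opt_C$; by internal vertex-disjointness, at most one of them, say $P_2$, uses $v$ as an internal vertex. The path $P_1$ uses no edge incident to $v$, so it lifts unchanged to the cleaved graph. For $P_2$, let $f_1, f_2$ be the two edges of $P_2$ at $v$; if both belong to $A$ (resp.\ both to $B$), $P_2$ lifts to a path through $v_A$ (resp.\ $v_B$). If one belongs to $A$ and the other to $B$, splice the zero-cost artificial edge $e_v$ into $P_2$ at $v$, producing a lifted path that uses the subpath $v_A \pathcat e_v \pathcat v_B$ (or its reverse). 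In every subcase the lifted $P_2$ remains an $x$-to-$y$ path, and since $P_1$ touches neither $v_A, v_B$ nor $e_v$ (its edges at other vertices are unaffected by cleaving), the lifted $P_1$ and lifted $P_2$ remain internally vertex-disjoint.

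Next I would handle the endpoint case $v\in\{x,y\}$, following the convention (from Corollary~\ref{cor:OPT_C-2EC}) that the claim is about any copies of the two vertices in the cleaved graph. Say $v=x$, and fix a target copy, say $v_A$. The two $v$-to-$y$ paths $P_1,P_2$ leave $v$ via edges $f_1, f_2$ from the bipartition; after cleaving, $P_i$ starts at $v_A$ if $f_i\in A$ and at $v_B$ otherwise. For any $P_i$ that starts at $v_B$, prepend the artificial edge $e_v$, turning it into a $v_A$-to-$y$ path whose only new internal vertex is $v_B$. Because the original $P_1, P_2$ are internally vertex-disjoint and share no edge at $v$, at most one of the two lifted paths uses $v_B$ internally (namely the one whose $f_i$ is in $B$); the other uses only $v_A$ at its starting end and is vertex-disjoint from $e_v, v_B$. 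Thus the two lifted paths are internally vertex-disjoint $v_A$-to-$y$ paths. The argument for the target copy $v_B$ and for $v=y$ is symmetric.

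The main (minor) obstacle is bookkeeping in the endpoint case — ensuring that when $e_v$ is prepended to one lifted path, the other lifted path does not also need to touch $v_B$, and verifying that the ``splice in $e_v$'' trick respects internal vertex-disjointness. This is handled exactly by the observation that in any of the two scenarios at most one of the two original paths uses an edge from each side of the bipartition at $v$. Nothing about the definition of a lengthening cleaving beyond the bipartition structure (in particular, the planarity conditions involving $e_A,e_B,e_A',e_B'$ and the cycle $C$) is needed; those were only required to keep the embedding planar in the first place.
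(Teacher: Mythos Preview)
Your generic case ($v\notin\{x,y\}$) is correct and is exactly the paper's argument, just made explicit: the paper's entire proof is the observation that $v$ can be an internal vertex of at most one of the two internally vertex-disjoint paths, so both paths survive the cleave (possibly after splicing in $e_v$). That part is fine.

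Your endpoint case has a gap. From ``$P_1,P_2$ are internally vertex-disjoint and share no edge at $v$'' you conclude ``at most one of the two lifted paths uses $v_B$ internally (namely the one whose $f_i$ is in $B$).'' But $f_1\neq f_2$ does not force $f_1$ and $f_2$ into different sides of the bipartition; nothing prevents $f_1,f_2\in B$. In that case both lifted paths start at $v_B$, prepending $e_v$ to both makes $v_B$ an internal vertex of \emph{both}, and your disjointness claim fails. Worse, in that scenario $v_A$ genuinely need not be biconnected to $y$: its only route into the rest of $\opt_C$ may be through $v_B$, so the ``any copy'' reading you borrowed from Corollary~\ref{cor:OPT_C-2EC} (which concerns \emph{edge}-connectivity under \emph{simplifying} cleavings, where the situation is different) is simply too strong for biconnectivity here.

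The paper sidesteps this entirely: its proof implicitly takes $x,y$ distinct from $v$, which is all that is needed downstream (the lemma is invoked only for pairs of terminals, and the artificial edges are contracted away in the Rejoin step). So your generic-case argument already proves what the paper proves; the endpoint case should either be dropped or weakened to the claim that \emph{some} copy remains biconnected to $y$ (take $v_B$ if both $f_i\in B$, etc.), which is easy and true but not what you wrote.
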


Lengthening cleavings are performed while there are still multiple
edges of the solution embedded in a brick that are incident to a
common boundary vertex.  Let $J_C$ be the set of artificial edges
that are introduced by simplifying and lengthening cleavings.
We denote the resulting graph by $G_C$, we denote the resulting mortar
graph by $MG_C$, and we denote the resulting solution by $\opt_C$.

As a result of the cleavings, we get the following:
\begin{lemma} \label{lem:OPT_C-tree-within-brick}
Let $B$ be a brick in $G_C$ with respect to $MG_C$.
The intersection $\opt_C \cap \interior(B)$ is a forest whose joining
vertices with $\partial B$ are the leaves of the forest.
\end{lemma}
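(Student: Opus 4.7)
The plan is to establish three properties of $F := \opt_C \cap \interior(B)$: (i) each vertex of $\partial B$ is incident to at most one edge of $F$; (ii) $F$ contains no cycle; and (iii) every leaf of $F$ lies on $\partial B$. Together these give that $F$ is a forest whose leaves are exactly the joining vertices of $F$ with $\partial B$. Property (i) is immediate from the termination condition of the lengthening cleavings, which are iterated precisely while some boundary vertex of some brick retains more than one incident solution edge in that brick's interior.

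For (ii), I would assume for contradiction that $F$ contains a cycle $C'$. Since each vertex of $C'$ has degree at least two in $F$, property~(i) forces every vertex of $C'$ to lie strictly inside $B$, so $C'$ is strictly enclosed by $\partial B$. Because the cleavings act only on vertices of $\partial B$ and all artificial edges they introduce lie on brick boundaries, $C'$ is also a cycle of $\opt_A$. By Lemma~\ref{lemma:brick-properties}(2) all terminals of $B$ lie on $N_B \cup S_B \subseteq \partial B$, so $C'$ contains no terminal. I would then combine the minimality of $\opt_A$ with Lemma~\ref{lem:OPT_C-2VC0} (two-edge-connectivity coincides with biconnectivity once all cycles are simple) to show that the subgraph $H'$ of non-bridge edges of $\opt_A$ is a minimal $S$-biconnected graph for $S$ the set of requirement-$2$ terminals: the two edge-disjoint paths between any requirement-$2$ pair must avoid bridges (a bridge could be used by at most one of them, yet its removal would disconnect the graph), so $H'$ is $S$-biconnected, and any non-bridge edge must be essential for the biconnectivity of some requirement-$2$ pair (it cannot serve a mere connectivity constraint, since removing a non-bridge leaves the graph connected). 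Since the edges of $C'$ are non-bridges, $C' \subseteq H'$, and Theorem~\ref{thm:2vc-terminal} then forces $C'$ to contain a vertex of $S$, contradicting the terminal-freeness of $C'$.

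For (iii), any leaf $v$ of $F$ not on $\partial B$ would lie strictly inside $B$ and therefore would not be a terminal. All edges of $\opt_C$ incident to $v$ lie in $\interior(B)$, and by the leaf hypothesis $v$ has exactly one such edge; removing it would preserve every required connectivity between terminals, contradicting the minimality of $\opt_A$. Combined with~(i), the leaves of $F$ are exactly the joining vertices of $F$ with $\partial B$.

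I expect the most delicate step to be the invocation of Theorem~\ref{thm:2vc-terminal} in~(ii): the minimality of $\opt_A$ is formulated for the mixed $\{0,1,2\}$-edge-connectivity requirements rather than for $S$-biconnectivity directly. Extracting the minimal $S$-biconnected hypothesis needed by the theorem relies on the structural observation that cycle edges of $\opt_A$ are precisely its non-bridge edges, and that no non-bridge edge can be essential for a mere connectivity requirement; this pins down $H'$ as the minimal $S$-biconnected subgraph containing every cycle of $\opt_A$.
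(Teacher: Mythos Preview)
Your three-part decomposition—(i) from the termination of lengthening cleavings, (ii) from Theorem~\ref{thm:2vc-terminal} plus minimality, (iii) from minimality—is exactly the paper's approach, and you correctly flag step~(ii) as the delicate point; the paper itself invokes Theorem~\ref{thm:2vc-terminal} together with the minimality of $\opt_A$ without spelling out the passage to a minimal $S$-biconnected graph.

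Your attempt to make that passage explicit via the non-bridge subgraph $H'$ is reasonable, but there is a slip in where you stand. You carry out the argument in $\opt_A$ while invoking Lemma~\ref{lem:OPT_C-2VC0}, whose content (two-edge-connectivity coincides with biconnectivity) requires all cycles to be simple—a property that holds only \emph{after} the simplifying cleavings. In $\opt_A$ cycles need not be simple, so two-edge-connectedness of a requirement-$2$ pair in $H'$ does not immediately give biconnectivity there, and Theorem~\ref{thm:2vc-terminal} cannot be applied as stated. Relatedly, your assertion that ``cleavings act only on vertices of $\partial B$'' is false for simplifying cleavings, which may split interior vertices of non-simple cycles (though they add no edges to the solution). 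The repair is straightforward: run your argument in the solution obtained after the simplifying cleavings but before the lengthening ones. That intermediate solution has the same edge multiset as $\opt_A$ (hence is still minimal), all its cycles are simple so your $H'$ is genuinely $S$-biconnected, and since $C'$ is strictly interior by~(i) and contains no artificial solution edges, it sits unchanged in that graph. With this adjustment your reduction to Theorem~\ref{thm:2vc-terminal} goes through, and the rest of the proof is as in the paper.
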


\begin{proof}  Let $H$ be a connected component of $\opt_C \cap \interior(B)$.
  As a result of the lengthening cleavings, the joining vertices of
  $H$ with $\partial B$ have degree 1 in $H$.  Suppose otherwise; then there is
  a vertex $v$ of $H \cap \partial B$ that has degree $> 1$ in $H$.
  Hence $v$ is a candidate for a lengthening cleaving, a contradiction.

  By
  Theorem~\ref{thm:2vc-terminal} and the
  minimality-achieving step of the Augment step,  any cycle in $H$ must
  include a terminal $u$ with $r(u) = 2$ by 
  Theorem~\ref{thm:2vc-terminal}.  Since there are no terminals
  strictly enclosed by bricks, $u$ must be a vertex of $\partial B$.
  However, that would make $u$ a joining vertex of $H$ with $\partial
  B$.  As argued above, such vertices are leaves of $H$, a
  contradiction to the fact that $u$ is a vertex of a cycle in $H$.
  Therefore $H$ is acyclic.

  Furthermore, leaves of $H$ are vertices of $\partial B$ since $\opt_C$ is
  minimal with respect to edge inclusion and terminals are not
  strictly internal to bricks.
\end{proof}

\begin{lemma} \label{lem:cycles-preserved}
Let $C$ be a cycle in $\opt_C$.  Let
$B$ be a brick.  Distinct connected components of $C \cap \interior(B)$ belong
to distinct components of $\opt_C \cap \interior(B)$.
\end{lemma}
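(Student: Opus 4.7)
The plan is a proof by contradiction that mirrors the argument for Lemma~\ref{lem:OPT_C-tree-within-brick}, with Theorem~\ref{thm:cycle-and-path} taking the role there played by Theorem~\ref{thm:2vc-terminal}. The goal is to exhibit, inside the hypothetical common tree component, a $C$-to-$C$ path of $\opt_C$ whose every vertex lies strictly in $\interior(B)$, and therefore contains no terminal, contradicting Theorem~\ref{thm:cycle-and-path}.

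Suppose two distinct components $K_1$ and $K_2$ of $C \cap \interior(B)$ both lie in the same component $H$ of $\opt_C \cap \interior(B)$. By Lemma~\ref{lem:OPT_C-tree-within-brick}, $H$ is a tree whose vertices on $\partial B$ are precisely its leaves, and each $K_j$ is a subpath of the cycle $C$ whose two endpoints lie on $\partial B$ and are therefore leaves of $H$; distinct $K_j$'s are automatically vertex-disjoint. I would then let $P$ be a shortest path in $H$ from $K_1$ to some other component $K_j$ of $C \cap \interior(B)$ that also lies in $H$ (which exists by virtue of $K_2$).

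The key step is to show that every vertex of $P$ lies in $\interior(B) \setminus \partial B$. Any path in $H$ starting at a leaf $v$ of $H$ must begin with the unique $H$-edge at $v$, which is already an edge of the component $K_i$ containing $v$; so the endpoints of $P$ cannot be leaves of $H$. They are therefore branching vertices of $H$ of degree at least three, and by Lemma~\ref{lem:OPT_C-tree-within-brick} such non-leaf vertices of $H$ lie strictly inside $B$. By minimality of $P$, its internal vertices avoid every $K_j$, and hence all of $C \cap \interior(B)$; being non-leaf vertices of $H$ they too lie strictly inside $B$. By Lemma~\ref{lemma:brick-properties}, every terminal inside $B$ sits on $N_B \cup S_B \subseteq \partial B$, so $P$ contains no terminal while being a $C$-to-$C$ path in $\opt_C$, contradicting Theorem~\ref{thm:cycle-and-path}.

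The main obstacle is the same one tacitly handled in the proof of Lemma~\ref{lem:OPT_C-tree-within-brick}: one must verify that Theorem~\ref{thm:cycle-and-path} is applicable to $\opt_C$. The minimality-achieving step of the \emph{Augment} step enforces edge-minimality relative to the full $\{0,1,2\}$-edge-connectivity requirements rather than to plain $S$-biconnectivity, so one has to check that enough minimality survives the simplifying and lengthening cleavings for the theorem to be invoked; assuming this, the contradiction above completes the proof.
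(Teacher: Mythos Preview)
Your proposal is correct and follows essentially the same approach as the paper: assume two components of $C\cap\interior(B)$ lie in a common component of $\opt_C\cap\interior(B)$, extract a $C$-to-$C$ path in $\interior(B)$ that avoids $\partial B$ and hence contains no terminal, and contradict Theorem~\ref{thm:cycle-and-path}. The paper's proof is terser---it argues directly that any vertex of such a path lying on $\partial B$ would still be a candidate for a lengthening cleaving---whereas you route the same conclusion through Lemma~\ref{lem:OPT_C-tree-within-brick}; the concern you flag about applying Theorem~\ref{thm:cycle-and-path} to $\opt_C$ is handled in the paper just as tacitly as in Lemma~\ref{lem:OPT_C-tree-within-brick}, by appeal to the minimality-achieving step of the Augment phase.
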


\begin{proof} Assume the lemma does not hold.  Then there is a
  $C$-to-$C$ path $P$ in $\interior(B)$.  Each vertex of
  $P$ that is strictly interior to $B$ is not a terminal.  A vertex of
  $P$ that was on $\partial(B)$ would be a candidate for a lengthening
  cleaving, a contradiction.  Therefore $P$ includes no terminals.
  This contradicts Theorem~\ref{thm:cycle-and-path}.
\end{proof}

\subsubsection{Flatten}

For each brick $B$, consider the edges of $\opt_C$ that are strictly
interior to $B$.  By Lemma~\ref{lem:OPT_C-tree-within-brick}, the
connected components are trees.  By Lemma~\ref{lem:joinNS}, for each
such tree $T$, every leaf is either on $B$'s northern boundary $N_B$
or on $B$'s southern boundary $S_B$.  For each such tree $T$ whose
leaves are purely in $N_B$, replace $T$ with the minimal subpath of
$N_B$ that contains all the leaves of $T$.  Similarly, for each such
tree $T$ whose leaves are purely in $S_B$, replace $T$ with the
minimal subpath of $S_B$ that contains all the leaves of $T$.

Let $\opt_F$ be the resulting solution.  Note that $\opt_F$ is a
multiset.  An edge of the mortar graph can appear with multiplicity
greater than one.

\subsubsection{Map}

This step is illustrated in Figures~\ref{fig:map}(a) and~(b).
In this step, the multiset $\opt_F$ of edges resulting from the flatten step
is used to select a set $\opt_M$ of edges of $\BC(MG_C)$.  Recall
that every edge $e$ of $MG_C$ corresponds in $\BC(MG_C)$ to three
edges: two brick copies (one in each of two bricks) and one
mortar-graph copy.  In this step, for every edge $e$ of $MG_C$, we
include the mortar-graph copy of $e$ in $\opt_M$ with multiplicity
equal to the multiplicity of $e$ in $\opt_F$.  At this point, none of
the brick copies are represented in $\opt_M$.

Next, recall that in the augment step, for each supercolumn $P$, we
created two new paths, $P_1$ and $P_2$, and added them to $\opt$.  The
edges of these two paths were not considered part of the mortar
graph, so mortar-graph copies were not included in $\opt_M$ for these
edges.  Instead, for each such edge $e$, we include the brick-copy of
$e$ in $\opt_F$ with multiplicity equal to the multiplicity of $e$ in
$\opt_F$.

Finally, for each edge $e$ interior to a brick, we include $e$ in
$\opt_M$ with the same multiplicity as it has in $\opt_F$.

\begin{figure}[ht]
  \centering
  \subfigure[]{\includegraphics{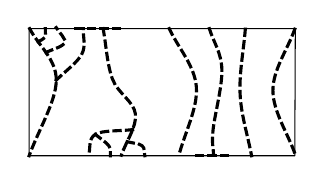}}
  \subfigure[]{\includegraphics{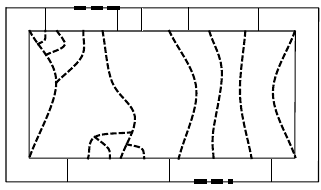}}
  \caption{Map (a) The intersection of $\opt_C$
    with a brick, dashed. (b) The same brick in the portal connected
    graph with portal edges (double-lines) connecting the brick to the
    corresponding face (outer boundary) of the mortar graph.}
  \label{fig:map}
\end{figure}

\subsubsection{Restructure} \label{sec:restructure}

Let $B$ be a brick.  For simplicity, we write the boundary paths of
$B$ as $N, E, S, W$. 
Let $F$ be the multiset of edges of $\opt_M$ that are in the interior
of $B$.  $F$ is a forest
(Lemma~\ref{lem:OPT_C-tree-within-brick}).  As a result of the {\em
  flatten} step, each component of $F$ connects $S$ to $N$.  We will replace $F$ with another subgraph $\widehat F$ and map each component $T$ of $F$ to a subgraph $\phi(T)$ of $\widehat F$ where $\phi(T)$ spans the leaves of $T$ and is a tree or a cycle.  Distinct components of $F$ are mapped by $\phi$ to edge-disjoint subgraphs (taking into account multiplicities).

Refer to Figure~\ref{fig:decomposition-paths}.  We inductively define  $S$-to-$N$ paths $P_0,P_1,\ldots $ and
corresponding integers $k_0, k_1, \ldots$.
Let $s_0, \ldots, s_t$ be the vertices of $S$ guaranteed
  by Lemma~\ref{lemma:brick-properties} (where $s_0$ is the vertex
  common to $S$ and $W$ and $s_t$ is the vertex common to $S$ and
  $E$).  Let $P_0$ be the easternmost path in $F$ from $S$ to $N$.
  Let $k_0$ be the integer such that $\pstart{P_0)}$ is in $S[s_{k_0}, s_{k_0+1})$.
 Inductively, for $i \geq 0$, let $P_{i+1}$ be the easternmost path in
$F_{N \wedge S}$ from $S[s_0,s_{k_i})$ to $N$ that is vertex-disjoint
from $P_i$.  Let $k_i$ be the integer such that $\pstart{P_i} \in S[s_{k_i}, s_{k_i+1})$.
This completes the inductive definition of $P_0,P_1, \ldots$.  Note
that the number of paths is at most $t$, which in turn is at most
$\kappa(\epsilon)$ as defined in Equation~\ref{eq:kappa}.

We use these paths to decompose $F$, as illustrated in
Figure~\ref{fig:decomposition-paths}.  
 Let $F_i$ be the set of edges of $F\setminus P_{i+1}$
  enclosed by the cycle formed by $P_i$, $P_{i+1}$, $N$ and $S$.
  Clearly $F = \cup_i F_i$.  If $P_i$ is connected to $P_{i+1}$, they
  share at most one vertex, $w_i$.  If they are not connected, we say
  $w_i$ is undefined.

There are two cases: either $F_i$ is connected or not.

\begin{figure}[ht]
  \centering
  \includegraphics[scale=2.5]{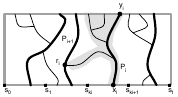}
  \caption{Paths used to decompose $F$.  The brick
    boundary is given by the rectangle.  The paths $P_0, P_1, \ldots$
    are bold.  $F_i$ is given by the shaded background.}
  \label{fig:decomposition-paths}
\end{figure}

\paragraph{Connected case:}  There are two subcases.
 Either $F_i$ spans vertices of $S[\cdot, s_{k_i})$ or not.

\bigskip

Suppose $F_i$ spans vertices of $S[\cdot, s_{k_i})$.  
Let $T_S$ be a
minimal subtree of $F_i$ that spans $F_i \cap S$ and let $T_N$ be a
minimal subtree of $F_i$ that spans $F_i \cap N$.  Let $r_N$ be the
first vertex of $P_i$ in $T_N$ and let $r_S$ be the last vertex of
$P_i$ in $T_S$.  A path in $F_i$ from $T_S$ to $N$ that does not go
through $r_S$ contradicts the choice of $P_{i+1}$ as there would be a
path in $F_i$ from $S[\cdot, s_{k_i})$ to $N$ that is disjoint from
$P_i$.  It follows that $T_S$ and $T_N$ are edge disjoint: if they
intersect they may only do so at $r_S = r_N$.  If $w_i$ is defined,
then there is a path $Q$ from $w_i$ to $P_i$; $Q$ intersects $P_i$
between $r_N$ and $r_S$, for otherwise there would be a superpath of
$Q$ that contradicts the choice of $P_{i+1}$.

If $w_{i-1}$ is defined and $w_{i-1} \in T_N$, then we replace $T_N$
with the tree guaranteed by Lemma~\ref{lem:short-runs} with roots
$r_N$ and $w_{i-1}$.  Otherwise, we replace $T_N$ with the tree
guaranteed by Lemma~\ref{lem:fib-tree} with root $r_N$. We do the
same for $T_S$. 

\bigskip

Suppose $F_i$ does not span vertices of $S[\cdot, s_{k_i})$.   
Let
$T_N$ be a minimal connected subgraph of $F_i \cup S[s_{k_i},\pstart{P_i}]$ that spans
$F_i \cap N$.  Let $r_N$ be the first vertex of $P_i$ in $T_N$.  If
$w_i$ is defined, then there is a path $Q$ from $w_i$ to $P_i \cup
S[s_{k_i},\pstart{P_i}]$ and $Q$'s intersection with $P_i$ belongs to
$P_i[\cdot, r_N]$, for otherwise
there would be a superpath of $Q$ that contradicts the choice of
$P_{i+1}$.  If $w_{i-1}$ is defined and $w_{i-1} \in T_N$, then we
replace $F_i$ with the tree guaranteed by Lemma~\ref{lem:short-runs}
with roots $r_N$ and $w_{i-1}$ along with $Q$, $P_i[\cdot, r_N]$, and
$S[s_{k_i},\pstart{P_i}]$.  Otherwise we replace $F_i$ with the tree guaranteed
by Lemma~\ref{lem:fib-tree} with root $r_N$ along with $Q$,
$P_i[\cdot, r_N]$, and $S[s_{k_i},\pstart{P_i}]$.

\bigskip

In both cases, we define $\phi'(F_i)$ to be the resulting tree that replaces $F_i$.  By construction, $\phi'(F_i)$ spans the leaves of $F_i$ and $w_{i-1}$ and $w_i$ (if defined).

\paragraph{Disconnected case:} In this case, by the definition of
$P_{i+1}$, $F_i \cap S$ is a subset
of the vertices of $S[s_{k_i},\pstart{P_i}]$, for otherwise there would be a
path to the right of $P_{i+1}$ that connects to $N$ and is disjoint
from $P_i$.

If $F_i$ is connected to $F_{i+1}$, then the western-most tree $T_W$ is a
tree with root $w_i$ and leaves on $S$ and does not connect to $N$ as
that would contradict the choice of $P_{i+1}$; if this is the case, let
$\widehat T_W$ be the tree guaranteed by Lemma~\ref{lem:fib-tree} and define $\phi'(T_W) = \widehat T_W$. 

If $F_i$ is connected to $F_{i-1}$, let $S'$ be the subpath of $S$ that
spans the eastern-most tree $T_E$'s leaves on $S$.  Let $\widehat T_E$ be
the tree guaranteed by Lemma~\ref{lem:short-runs} that spans the
eastern-most tree's leaves on $N$ and roots $w_{i-1}$ and $\pstart{P_i}$ and define $\phi'(T_E) = \widehat T_E$.

Let $\mathcal F$ be the set of remaining trees, let $P =
N$, and let $Q = S[s_{k_i},\pstart{P_i}]$ in
Lemma~\ref{lem:forest-cycle}.  Let $\widehat C$, $\widehat{\mathcal
  F}$, and $\phi$ be the cycle (or empty set), set of trees, and
mapping that satisfy the properties stated in the lemma.

We define $\widehat F_i$ to consist of the trees of $\widehat{\mathcal F}$ and the
cycle $\widehat C$ (and $\widehat T_W$, $S'$ and $\widehat
T_E$ if defined).

\bigskip 

\noindent We replace every $F_i$ with $\widehat F_i$, as described
above, for every brick, creating $\opt_S$.  This is illustrated in
Figure~\ref{fig:restructure}(a).  Now we define $\phi$ in terms of
$\phi'$.  A component $T$ of $F$ is partitioned into adjacent trees
in this restructuring: namely $T_1, \ldots, T_k$, $k \ge 1$. 
$T_1$ and $T_k$ may be restructured via the disconnected case and all others are restructured via the connected case.  Define $\phi(T) = \cup_{i=1}^k \phi'(T_i)$.  If $k > 1$, then consecutive trees $T_i$ and $T_{i+1}$ share a vertex $w_i$ and by construction $\phi'(T_i)$ and $\phi'(T_i)$ also share this vertex.  Since $\phi'(T)$ spans the leaves of $T$, we get that $\phi(T)$ spans the leaves of $T$, as desired.  Also by construction, the submapping of $\phi'$ of trees to trees (and not cycles) is bijective; the same holds for $\phi$.

\paragraph{Number of joining vertices} In both the connected and
disconnected case, the number of leaves is the result of a constant
number of trees resulting from
Lemmas~\ref{lem:fib-tree},~\ref{lem:short-runs}
and~\ref{lem:forest-cycle}.  Therefore, $\widehat F_i$ has
$o(\epsilon^{-2.5})$ joining vertices with $N$ and $S$.  Since $i \le
\kappa(\epsilon) = O(\epsilon^{-3})$, $\opt_S$ has
$o(\epsilon^{-5.5})$ joining vertices with the boundary of each
brick.  This is the number of connections required to allow a solution to be nearly optimal and affects the number of portals required in Equation~(\ref{eq:theta}):
\begin{equation}
  \label{eq:alpha}
  \alpha(\epsilon) \mbox{ is } o(\epsilon^{-5.5})
\end{equation}

This will allow us to prove the second part of the Structure Theorem.

\begin{figure}[ht]
  \centering
  \subfigure[]{\includegraphics{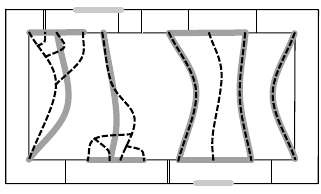}}
  \subfigure[]{\includegraphics{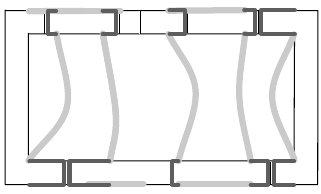}}
  \caption{Continuing from Figure~\ref{fig:map}, restructure and
    rejoin: (a) Restructured version (dark grey) of the intersection
    of $\opt_M$ with the brick (dashed).  (b) Connecting the
    restructured solution inside the brick to the mortar graph through
    portals (via dark grey edges).}
  \label{fig:restructure}
\end{figure}

\subsubsection{Rejoin}

In this step, we make inter-brick connections for parts that were
disconnected in the {\em mapping} step.  Since joining vertices
represent the ends of all disconnected parts, it suffices to connect
joining vertices of $\opt_S$ with $\partial B$ to their mortar-graph
counterparts via portal edges.

This is illustrated in Figure~\ref{fig:restructure}~(d): We first move
the edges of $\opt_S \cap \partial B$ to $MG$ for every brick $B$.
Such edges may have been introduced in the restructure step: for
every brick $B$, we connect $\opt_S \cap B$ to the mortar graph.  For
every joining vertex $v$ of $\opt_S \cap B$, we find the nearest portal
$p_v$, add the subpath of $\partial B$ and $MG$ connecting $v$ and
$p_v$ and add the portal edge corresponding to $p_v$.  (We need at
most two copies of each portal edge.) Finally we
contract the edges introduced in the cleaving step.  This
produces a solution $\widehat{OPT}$ of $\mathcal{B}^+(MG)$.

\subsection{Analysis of connectivity} \label{sec:connectivity-analysis}

In the augment step, because the added paths $P_1$ and $P_2$ form a
cycle, this transformation preserves two-connectivity and connectivity
between terminals.  Cleaving clearly preserves connectivity and, by
Lemmas~\ref{lem:OPT_C-2VC0} and~\ref{lem:OPT_C-2VC}, terminals that
require two-connectivity are biconnected in $\opt_C$.  Therefore, for
terminals $x$ and $y$ requiring connectivity, there is a path $P_C$ in
$\opt_C$ connecting them.  If $x$ and $y$ require two-connectivity,
there is a simple cycle $C_C$ in $\opt_C$ connecting them.  We follow
$P_C$ and $C_C$ through the remaining steps.

\begin{description}
\item[Flatten] Consider a tree $T$ that is replaced by a subpath $Q$
  of a northern or southern boundary of a brick that spans $T$'s
  leaves.  $Q$ spans any terminals that $T$ spanned, since there are
  no terminals internal to bricks.  

  $P_C \cap T$ is therefore a (set of) leaf-to-leaf paths and so $(P_C
  - T) \cup Q$ contains an $x$-to-$y$ path.  It follows that there is
  an $x$-to-$y$ path $P_F$ in $\opt_F$.

  By Lemma~\ref{lem:cycles-preserved}, $C_C \cap T$ is a single path
  and, by the above reasoning, is a leaf-to-leaf path.  Therefore
  $(C_C -T ) \cup Q$ contains a cycle through $x$ and $y$.  It follows
  that there is a cycle $C_F$ through $x$ and $y$ in $\opt_F$

\item[Map] $P_F$ ($C_F$) gets mapped to a sequence ${\cal P}_M =
  (P_M^1, P_M^2, \ldots )$, (a cyclic sequence ${\cal C}_M = (C_M^1,
  C_M^2, \ldots )$) of paths of $\opt_M$ in $\BC(MG_C)$ such that each
  path either consists completely of mortar-graph edges or consists
  completely of brick-copy edges.  The last vertex of one path and the
  first vertex of the next path are copies of the same vertex of
  $G_C$, and that vertex belongs to a north or south boundary of
  $MG_C$.  By Lemma~\ref{lem:OPT_C-tree-within-brick}, each path in
  ${\cal P}_M$ or ${\cal C}_M$ that consists of brick-copy edges
  starts and ends at the northern or southern boundary of a brick.
  
\item[Restructure] We define a mapping $\hat \phi$, based in part on the map
  $\phi$ defined in
  Section~\ref{sec:restructure}.
 For a path $Q$ in ${\cal P}_M$ or ${\cal C}_M$ that
  uses mortar-copy edges, define $\hat \phi(Q) = Q$.  For a path $Q$
  in ${\cal P}_M$ or ${\cal C}_M$ that uses brick-copy edges, let $T$
  be the tree in $\opt_M$ that contains $Q$ and define $\hat \phi(Q) =
  \phi(T)$.

  Let ${\cal C}_S$ be the cyclic sequence of trees and cycles to which
  ${\cal C}_M$ maps by $\phi$.  Since $\phi(T)$ spans the leaves
  (joining vertices) of $T$, consecutive trees/cycles in ${\cal C}_S$
  contain copies of the same vertex.  By Lemma~\ref{lem:cycles-preserved} and
  that, within a brick, the preimage of the set of trees mapped to by
  $\phi$, the trees of ${\cal C}_S$ are edge disjoint.  (The 
  cycles may be repeated.)

  Likewise we define ${\cal P}_S$ having the same properties except for the fact that the sequence is not cyclic.

\item[Rejoin] This step reconnects ${\cal P}_S$ and ${\cal C}_S$.

  Consider a tree $T$ in either of these sequences that contains
  brick-copy edges. The rejoin step first moves any edge of $T$ that
  is in a brick boundary to the mortar copy. It then connects 
  joining vertices to the mortar by way of detours to portals and
  portal edges. Therefore, $T$ is mapped to a tree $T_J$ that connects
  the mortar copies of $T$'s leaves.

  Consider a cycle $C$ in either ${\cal P}_S$ or ${\cal C}_S$.  $C$
  contains a subpath of $N$ and $S$ whose edges are moved to their
  mortar copy and two $N$-to-$S$ paths whose joining vertices are
  connected to their mortar copies.  Therefore $C$ is mapped to a
  cycle $C_J$ through the mortar copies of the boundary vertices of $C$.

  Let the resulting sequence and cyclic sequence of trees and cycles
  be ${\cal P}_J$ and ${\cal C}_J$.  Since consecutive trees/cycles in ${\cal C}_S$
  contain copies of the same vertex, in ${\cal C}_J$ consecutive trees/cycles
contain 
  common mortar vertices.  We have that ${\cal C}_J$
  is a cyclic sequence of trees and cycles, through $x$ and $y$, the
  trees of which are edge-disjoint.  Therefore the union of these
  contains a cycle through $x$ and $y$.

  Similarly, we argue that the union of the trees and cycles in ${\cal
    P}_J$ contains and $x$-to-$y$ path.

\end{description}

\subsection{Analysis of cost increase}

By Lemma~\ref{lem:col-length}, the total costs of all the east and
west boundaries of the bricks is an $\epsilon$ fraction of $\opt$, so
we have
\begin{equation}
  \label{eq:OPT1}
  \cost(\opt_A) \leq (1+2\epsilon)\cost(\opt).
\end{equation}
The cleaving step only introduces edges of zero cost, so
\begin{equation}
  \label{eq:OPT2}
  \cost(\opt_C) = \cost(\opt_A).
\end{equation}
The flatten step replaces trees by $\epsilon$-short paths, and so can
only increase the cost by an $\epsilon$ fraction, giving:
\begin{equation}
  \label{eq:OPT3}
  \cost(\opt_F) \le (1+\epsilon) \cost(\opt_C).
\end{equation}
The mapping step does not introduce any new edges, so
\begin{equation}
  \label{eq:OPT4}
  \cost(\opt_M) = \cost(\opt_F).
\end{equation}
The restructure step involves replacing disjoint parts of $\opt_M \cap
B$ for each brick by applying
Lemmas~\ref{lem:fib-tree},~\ref{lem:short-runs},
and~\ref{lem:forest-cycle}.  This increases the cost of the solution
by at most an $O(\epsilon)$ fraction.  Further we add subpaths of
$S[s_{k_i},\pstart{P_i}]$ where $\opt_M$ contains disjoint subpaths $P_i$ from
$\pstart{P_i}$ to $N$ and by the brick properties, $\cost(S[s_{k_i},\pstart{P_i}]) \le
\cost(P_i)$.  This increases the the cost of the solution
by at most another $O(\epsilon)$ fraction.  We get, for some constant $c$:
\begin{equation}
  \label{eq:OPT5}
  \cost(\opt_S) \leq (1+c\epsilon)\cost(\opt_M).
\end{equation}
In the rejoin step, we add two paths connecting a joining vertex to
its nearest portal: along the mortar graph and along the boundary of
the brick.  The cost added per joining vertex is at most twice the
interportal cost: at most $2\cost(\partial B)/\theta$ for a joining
vertex with the boundary of brick $B$, by
Lemma~\ref{lem:portal-distance}.  Since each mortar graph edge appears
as a brick-boundary edge at most twice, the sum of the costs of the
boundaries of the bricks is at most $18\epsilon^{-1}\opt$
(Equation~(\ref{eq:MG-length})). Since there are $\alpha(\epsilon)$
joining vertices of $\opt_S$ with each brick, the total cost added
due to portal connections is at most $36\frac{\alpha}{\theta\epsilon} \opt$.  Replacing $\alpha$ and $\theta$ via Equations~(\ref{eq:alpha}) and~(\ref{eq:theta}) gives us that the total cost added due to portal connections is
\begin{equation}
  \label{eq:OPT6}
  O(\epsilon \cost(\opt))
\end{equation}
Combining equations~(\ref{eq:OPT1}) through~(\ref{eq:OPT6}),
$\cost(\widehat\opt) \leq (1+c'\epsilon)\cost(\opt)$ for some constant $c'$,
proving Theorem~\ref{thm:structure}.

\section{Dynamic Program} \label{sec:dp}

In this section we show that there is a dynamic program that finds an
approximately optimal solution in the portal-connected graph $\BC(P)$.

\subsection{Structure of solution} \label{sec:structure-of-solution}

We start by showing that we can restrict our attention to solutions
whose intersection with each brick is a small collection of trees
whose leaves are portals.

\begin{lemma} \label{lem:number-of-trees} Let $S$ be a minimal
  solution to the instance $(\BC(MG), \rvec)$ of the $\set{0,1,2}$-edge
  connectivity problem.  Let $B$ be a brick, and suppose that the
  intersection of $S$ with $B$ is the union of a set of non-crossing
  trees whose leaves are portals.  Then the number of such trees is at
  most three times the number of portals.
\end{lemma}

\begin{proof} Let the portals be $v_1, \ldots, v_k$ in
  counterclockwise order.  Each tree induces a subpartition on the set of
  pairs $\set{(i, i+1)\ : i \in \set{1,\ldots, k-1}}$
  as follows: if the leaves of the tree are $v_{i_1}, v_{i_2}, \ldots,
  v_{i_p}$ where $i_1 < i_2 < \cdots < i_p$, then the parts are
$$\set{(i_1,i_1+1), \ldots, (i_2-1, i_2)}, \set{(i_2, i_2+1), \ldots,  (i_3-1,i_3)}, \ldots, \set{(i_{p-1},i_{p-1}+1), \ldots, (i_p-1, i_p)}$$
Because the trees are non-crossing, the corresponding subpartitions
are non-crossing as well: for each pair $T_1, T_2$ of trees, either
each part of $T_2$ is a subset of some part of $T_1$, or vice
versa.

Therefore the trees themselves form a rooted tree $\cal T$ according
to the nesting relation of the corresponding subpartitions.  Furthermore, by
minimality of $S$, no three trees have the same sets of leaves.  This
shows that a node of $\cal T$ with only one child has a parent with at
least two children.  The
number of leaves of $\cal T$ is at most the number of pairs $(i,i+1)$,
which is at most $k-1$.  This shows that the number of trees is at most $3(k-1)$.
\end{proof}

\begin{corollary} \label{lem:crossing-trees} There exists a function
  $f$ that maps each brick $B$ to a cardinality-at-most-$3\theta$ subpartition
  $f(B)$ of the portals of $B$ with the following property:\\
  for each brick $B$, for each part $P$ of $f(B)$, let $T_P$ be any
  minimum-cost tree in $B$ whose leaves are $P$, and let $H_B$ be the
  union $\bigcup_{P\in f(B)} T_P$.  Then there is a feasible solution
  $S'$ to the instance $(\BC(MG), \rvec)$ such that
\begin{itemize}
  \item the cost of $S'$ is at most $(1+c\epsilon)\opt$ where $c$ is
    an absolute 
    constant, and
  \item the intersection of $S$ with any brick $B$ is $H_B$.
  \end{itemize}
\end{corollary}

\begin{proof} Theorem~\ref{thm:structure} states that there is a
  feasible solution $S$ of cost at most $(1+c\epsilon)\opt$ such that
  the intersection of $S$ with any brick $B$ is the multi-set union of a family
  ${\cal T}_B$ of
  non-crossing trees whose leaves are portals. We assume that
  $S$ is a minimal feasible solution.  
  Lemma~\ref{lem:number-of-trees} shows that $|{\cal T}_B| \leq
  3\theta$.

  Now we construct the multi-subgraph $S'$ from $S$.  For each brick
  $B$, replace each tree in ${\cal T}'_B$ with a minimum-cost tree having the
  same leaves.  Clearly the cost of $S'$ is at most that of $S$.  It
  remains to show that $S'$ is a feasible solution.

  Let $u$ and $v$ be two terminals.  Let $P$ or $C$ be a minimal
  $u$-to-$v$ path or minimal cycle containing $u$ and $v$ in $\BC(MG)$
using edges of $S$.  We obtain a $u$-to-$v$ path $P'$
  in $S'$ or a cycle $C'$ in $S'$ containing $u$ and $v$ as follows.
  For each brick $B$, the intersection of $P$ or $C$ with $B$ is a set
  of paths $P_1, \ldots, P_k$ through the trees in ${\cal T}_B$ such
  that each path $P_i$ starts and ends on a portal.  For each path
  $P_i$, the tree in ${\cal T}_B$ connecting the endpoints of $P_i$ is
  replaced in $S'$ by another tree that includes the same endpoints,
  so that tree contains a path $P_i'$ with the same endpoints.  We
  replace each path $P_i$ with the path $P_i'$.  Let $P'$ and $C'$ be
  the subgraphs obtained by performing these transformations for each
  brick $B$.  The transformations ensure that $P'$ and $C'$ are in
  $S'$.  The path $P'$ shows that $u$ and $v$ are connected in $S'$.

  For a cycle $C$ in $S$, we still need to show that $C'$ is a cycle.
  In particular, we need to show that, for each brick $B$, the paths
  $P_1, \ldots, P_k$ forming the intersection of $C$ with $B$ all
  belong to different trees in ${\cal T}'_B$.  Assume for a
  contradiction that $P_i$ and $P_j$ belong to a single tree $T\in
  {\cal T}'_B$.  By the assumption that the degree is at most three,
 $P_i$ and $P_j$ cannot share a
  vertex.  Therefore there is a $P_i$-to-$P_j$ path in $T$ containing
  at least one edge.  However, since $C$ is a cycle containing
  $P_i$ and $P_J$, an edge of the $P_i$-to-$P_j$ path can be removed
  while preserving feasibility, a contradiction.
\end{proof}

\subsection{The tree used to guide the dynamic program} \label{sec:tree}

 Recall that $\CC(P)$ is the brick-contracted graph, in which each brick of $\BC(P)$ is contracted to a vertex.
Recall that a parcel $P$ is a subgraph of $MG$ and defines a set of
bricks contained by the faces of $P$.  The planar dual of the parcel
has a spanning tree $T$ of depth $\eta+1$.  The algorithm transforms $T$ into a
spanning tree $T'$ of the planar dual of $\CC(P)$ as follows.  Each
brick is a vertex of $T$; replace that vertex with a cycle minus one
edge consisting of the duals of the portal edges, as shown in
Figure~\ref{fig:from-tree-to-tree}.  

\begin{figure}
\centerline{\includegraphics[scale=0.6]{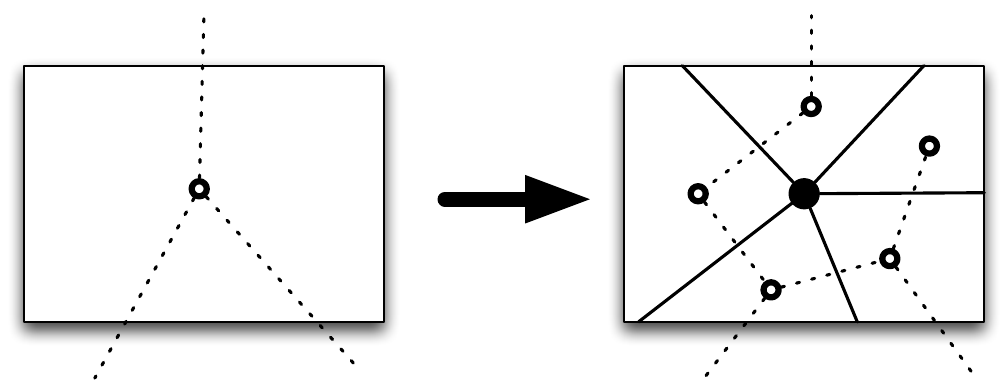}}
\caption{The spanning tree of the dual of the parcel is transformed to
be a spanning tree of the planar dual of $\CC(P)$.  For each brick,
all but one of the portal edges is included in the new spanning tree.}
\label{fig:from-tree-to-tree}
\end{figure}

Since each brick has at most
$\theta$ portal edges, it follows that the spanning tree $T'$ of the dual of $\CC(P)$ has 
depth at most $(\eta+1)\theta$.   Next, the algorithm defines $\widehat
T$ to be the set of edges of $\CC(P)$ whose duals are not in $T'$.  A
classical result on planar graphs implies that $\widehat T$ is a
spanning tree of $\CC(P)$.  The construction ensures that each vertex
of $\CC(P)$ that 
corresponds to a brick has only one incident edge in $\widehat T$.
By our assumption that each vertex in the input graph has degree
at most three, every vertex of $\CC(P)$ that appears in $G$ has degree
at most three in $\widehat T$.   Thus $\widehat T$ has degree at most three.
The bound on the depth of $T'$ ensures that, 
for each vertex $v$ of $\widehat T$, the graph $\CC(P)$ contains at most $2(\eta+1)(\theta+1)+1$ edges between
descendents of $v$ and non-descendents.

\subsection{The dynamic programming table} \label{sec:dp-table}

The algorithm considers the spanning tree $\widehat T$ of $\CC(P)$ as
rooted at an arbitrary leaf.  By our assumption that the input graph
has degree three, each vertex $v$ of $\widehat T$ has at most two
children.  Let $\widehat T(v)$ denote the
subtree of $\widehat T$ rooted at $v$.  For each vertex $v$ of $\widehat
T$, define
$$f(v) =   \left\{
\begin{array}{ll}
 B & \mbox{if $v$ is the result of contracting a brick } B\\
 v & \mbox{otherwise}
\end{array}
\right.$$ and define $W(v)$ to be the subgraph of $\BC(P)$ induced by
$\bigcup \set{f(w) : w \in \widehat T(v)}$.  Let $\delta(S)$ be the
subset of edges with exactly one endpoint in the subgraph $S$ (i.e.~a
cut).  It follows that the cut $\delta(W(v))$ in $\BC(P)$ is equal to
the cut $\delta(\widehat T(v))$ in $\CC(P)$ and so has $O(\theta\eta)$
edges.  The dynamic programming table will be indexed over
configurations, defined below.

\subsubsection{Configurations} 

Let $L = \delta(H)$ for a connected, vertex-induced subgraph $H$ of $\BC(P)$.  We define a {\em
  configuration} $K_L$ corresponding to $L$, illustrated in
Figure~\ref{fig:2-ec-configs}.  First, for each edge $e\in L$, we
defined two
  edges $e^1$ and $e^2$ to reflect the fact that an edge can be used twice.
Let $\hat L$ be the resulting set of edges, $\hat L=\bigcup_{e\in L}\set{e^1,e^2}$.
 A {\em configuration}
is a forest with no degree-2 vertices whose leaf edges are a
subset of $\hat L$, and such that, for each edge $e\in L$, if the forest
contains both $e^1$ and $e^2$ then these two edges are incident to the
same vertex in the forest, together with a $\set{1,2}$ labeling of the internal
vertices.
We denote the set of all configurations on edge set $L$ by
${\cal K}_L$.

\begin{figure}[ht]
  \centering
  \subfigure[]{\includegraphics{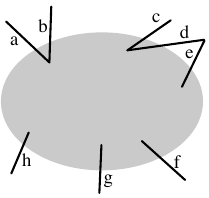}}\qquad
  \subfigure[]{\includegraphics{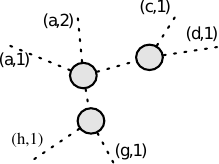}}\qquad
  \subfigure[]{\includegraphics{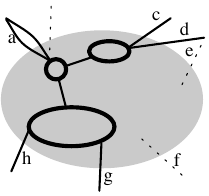}}
  \caption{(a) The cut edges  (black) of a subgraph $H$ (shaded).
    (b) A configuration (dotted forest) for $L=\delta(H)$.  (c) A subgraph (bold) that meets the
    configuration.}
  \label{fig:2-ec-configs}
\end{figure}

\begin{lemma}\label{lem:dp-size}
  The number of configurations for $H$ where $n=|\delta(H)|$  is at most $16^n(2n)^{2n-2}$
  and these trees can be computed in $O(16^n(2n)^{2n-2})$ time.

\end{lemma}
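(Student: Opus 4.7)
The plan is to parametrize a configuration by two pieces of data and bound each piece separately. First, a configuration is determined by (i) the subset $S \subseteq L'$ of edges chosen to appear as the leaf edges of the forest, and (ii) the forest structure that has these edges as its leaves. Since $|L'| = |L| = n$, the number of choices in~(i) is at most $2^n$.

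For~(ii), fix a subset $S$ with $|S| = k \leq n$ and count forests whose leaf edges are exactly $S$. Each edge of $S$ corresponds to a unique leaf (degree-$1$) vertex, so there are exactly $k$ leaf vertices; the remaining vertices are Steiner points of degree at least $3$. If $s$ denotes the number of Steiner points, the handshake inequality gives
\begin{equation*}
k + 3s \;\leq\; 2(k + s - 1),
\end{equation*}
so $s \leq k - 2$ in the connected case, and the same bound holds componentwise for forests. Thus the total number of vertices is at most $2k - 2 \leq 2n - 2$.

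Now it remains to count the forest structures on at most $2n - 2$ vertices where the leaf vertices are labeled by the chosen edges of $S$ and the Steiner points are assigned arbitrary distinct labels. By (a standard corollary of) Cayley's formula, the number of labeled forests on $m$ vertices is at most $(m+1)^{m-1}$, so with $m \leq 2n - 2$ this is at most $(2n)^{2n - 2}$. Combining with the $2^n$ factor from~(i) gives the claimed bound $2^n (2n)^{2n-2}$. Overcounting via labeled Steiner points only inflates the count, which is fine for an upper bound.

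For the construction, we enumerate all $2^n$ subsets of $L'$ and, for each, enumerate all labeled forests on at most $2n - 2$ vertices, checking in time polynomial in $n$ per forest whether the degree constraint holds and whether the leaf labels match $S$. The dominating cost is $O(2^n (2n)^{2n-2})$, matching the stated running time. The main (mild) obstacle is being careful that unlabeled Steiner points do not produce an undercount; labeling them and noting the resulting bound is an overcount resolves this cleanly.
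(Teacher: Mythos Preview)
Your proposal is correct and follows essentially the same route as the paper: bound the number of vertices of the forest via the no-degree-two constraint, then invoke Cayley's formula (or its forest corollary) together with a $2^n$ factor for the choice of leaf-edge subset. The only notable difference is that the paper cites a dedicated tree-enumeration result (Nakano--Uno) to get $O(1)$ amortized time per tree, whereas you enumerate labeled forests directly and filter; both yield the stated $O(2^n(2n)^{2n-2})$ running time.
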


\begin{proof}
 A configuration can be selected as follows.  First, for each
  of the $n$ edges, decide whether edge $e^1$ is to be included and whether
  $e^2$ is to be included.  (There are $4^n$ choices.) Let $n'$ be the
  number of edges $e$ for which either $e^1$ or $e^2$ is to be
  included.  Next, select a tree with the selected edges as leaf
  edges.  It follows from Cayley's formula that the number of such
  trees is $2n'^{2n'-2}$, which is at most $2n^{2n-2}$.    Next,
  delete some subset of non-leaf edges of the tree.  There are at most
  $2^n$ ways of selecting such a subset.  Finally, select a
  $\set{1,2}$ labeling of the internal vertices.  There are at most
  $2^n$ such labelings.

  The set of trees can be computed in $O(1)$ amortized time per
  tree~\cite{NU03}.
\end{proof}

\paragraph{Connecting} A configuration $K_L \in {\cal K}_L$ is {\em
  connecting} if, for each internal vertex $v$ of $K_L$, if $v$ has
label $c$ then $K_L$ contains $c$ paths from $v$ to leaves.

\paragraph{Compatibility} Configurations $K_A \in {\cal K}_A$ and $K_B
\in {\cal K}_B$ are {\em compatible} if for every edge $e \in \hat A \cap
\hat B$ either $e \in K_A \cap K_B$ or $e \notin K_A \cup K_B$.

\paragraph{Compressed bridge-block forest}

For a graph $G$, a {\em block} is a maximal subgraph such that, for
every two vertices $u$ and $v$ in the subgraph, there are two
edge-disjoint $u$-to-$v$ paths.  Contracting each block of $G$ to a
single vertex yields the {\em bridge-block} forest of $G$.  It is easy
to see that it is indeed a forest.  An edge $e$ of $G$ is an edge of this
forest if $e$ is a bridge of $G$, i.e. if the endpoints of $e$ are in
different connected components of $G-e$.

We define the {\em compressed} bridge-block forest to be the forest
obtained from the bridge-block forest by substituting an edge for each maximal path
of internal degree two.  We denote the compressed bridge-block forest
of $G$ by $\widetilde{EC}(G)$.

\paragraph{Consistency}

We say a configuration $K_A$ is {\em consistent} with a set of
mutually compatible configurations $\set{K_{A_1}, K_{A_2}, \ldots}$ if
\begin{itemize}
\item there is an isomorphism between $K_A$ and $\widetilde{EC}(\cup_i
  K_{A_i})$ that preserves the identity of leaf edges of $K_A$, and 
\item for each vertex $x$ of $\cup_i K_{A_i}$ that is labeled~2, $x$ is
in a block of $\cup_i K_{A_i}$ that corresponds to a vertex in $K_A$
that is labeled~2.
\end{itemize}

\paragraph{Meeting} Let $H$ be a connected, vertex-induced subgraph of
$G$.  Let $M$ be a minimal solution to an instance $(G, \rvec)$.  Let $M_H$ be the graph obtained from $M$ as 
follows.  Remove edges not in $H \cup
\delta(H)$.  Next, for each vertex $v$ outside of $H$, if $v$ has $k$
incident edges, replace $v$
with $k$ copies of $v$, one incident to each of these edges. 

We say $M$ {\em meets} a configuration
$K_{\delta(H)}$ if $\widetilde{EC}(M_H) = K_{\delta(H)}$ and if, for each terminal $x$ in $H$,
$M_H$ either contains $r(x)$ edge-disjoint paths to vertices
outside of $H$ or contains $\min\set{r(x), r(y)}$ edge-disjoint paths
to every other terminal $y$.

\paragraph{DP table entry}
The dynamic program constructs a table $DP_v$ for each vertex $v$ of
$\widehat T$.  The table $DP_v$ is indexed by the configurations of
$\delta(W(v))$.  We showed in Section~\ref{sec:structure-of-solution}
that we can restrict our attention to solutions with the following
property: 
\begin{quote}
the intersection
with each brick is a cardinality-at-most-$3\theta$ collection of minimum-cost trees
whose leaves are portals. 
\end{quote}
For each configuration $K$ of
$\delta(W(v))$, the entry $DP_v[K]$ is the minimum cost of a subgraph
of $W(v)$ that meets configuration $K$ and has the above property.
We do not count the cost of edges of $\delta(W(v))$.

\subsubsection{The filling procedure}

If $u$ is not a leaf of $\widehat T$, then we populate the entries of
$DP_u$ with the procedure {\sc fill}.  We use the shorthand
$K$ for $K_{\delta(\set{u})}$ and $K_i$ for $K_{\delta(W(u_i))}$.
The cuts are with respect to the graph $\BC(P)$.

\begin{tabbing}
  {\sc fill}$(DP_u)$ \\
  \qquad \= Initialize each entry of $DP_u$ to $\infty$. \\
  \> Let $u_1, \ldots, u_s$ be the children of $u$. \\
  \> For every set of connecting, mutually compatible
  configurations $K, K_1, \ldots, K_s$,\\
  \> \qquad \= For every connecting configuration $K_0$ that is
  consistent with $K, K_1, \ldots, K_s$, \\
  \> \> \qquad \= $\mbox{cost} \leftarrow \cvec(K \cap(\cup_{i=1}^s
  K_i))+(\cvec(K_1 \cap K_2) \text{ if } s=2 \text{ else } 0)
+ \sum_{i=1}^s DP_{u_i}[K_i]$. \\
  \> \> \> $DP_u[K_0] \leftarrow \min \{DP_u[K_0], \mbox{cost}\}$
\end{tabbing}

Since $\widehat T$ has degree at most three, the number $s$ of
children of each vertex $u$ in $\widehat T$ is at most two.

If $u$ is a leaf of $\widehat T$ and $u$ does not correspond to a
brick (i.e.~$f(u) = u$), the problem is trivial.  Each configuration
$K$ is a star: $u$ is the center vertex, and the edges of
$\delta(\set{u})$ are the edges of $K$.  Since the cost of any
subgraph that is induced by $\set{u}$ is zero, the value of $DP_u[K]$ is zero for every $K$.  

Suppose $u$ is a leaf of $\widehat T$ and $f(u)$ is a brick $B$.
Recall that we restrict our attention to solutions whose intersection with
$B$ is a collection of at most $3\theta$ minimum-cost trees whose leaves are portals.
The algorithm populates the table $DP_u$ as follows.  Initialize each
entry of $DP_u$ to $\infty$.  Next, iterate over cardinality-at-most-$3\theta$
families of subsets of the portals.  For each family $\cal F$,
\begin{itemize}
\item  define a
subgraph $H_{\cal F}$ to be the multi-set union over subsets $P$ in $\cal F$ of
a minimum-cost tree spanning $P$, 
\item find the configuration $K$ corresponding to $H_{\cal F}$, and
\item set $DP_u[K] \leftarrow \min \set{DP_u[K], \text{cost of }
    H_{\cal F}}$.
\end{itemize}
The minimum-cost tree spanning a subset of portals can be computed in
$O(\theta^3 n)$ time using
the algorithm of Theorem~\ref{thm:emv87}.

\subsubsection{Running time}

Consider the time required to populate the $DP_u$ for all the leaves
$u$ of $\widehat T$.  We need only consider non-crossing partitions of
subsets of $\delta(W(v))$ since $H_K$ is the union of non-crossing
trees.  The number of non-crossing partitions of an $n$ element
ordered set is the $n^{th}$ Catalan number, which is at most
$4^n/(n+1)$.  Therefore, the number of non-crossing sub-partitions is
at most $4^n$.  It follows that the time to populate $DP_v$ for $v$ a
brick vertex is $O(\theta^4 4^\theta |B|)$ which is
$O(2^{\mathrm{poly}(1/\epsilon)}|B|)$ since $\theta$ depends
polynomially on $1/\epsilon$.  Since a vertex appears at most twice in
the set of bricks, the time needed to solve all the base cases in
$O(2^{\mathrm{poly}(1/\epsilon)}n)$ where $n$ is the number of
vertices in the parcel.

Consider the time required to populate the $DP_u$ for all the internal
vertices $u$ of $\widehat T$. The number of edges in $\delta(W(v))$ in
$\BC(P)$ is $O(\theta\eta)$.  By Lemma~\ref{lem:dp-size}, it follows
that the corresponding number of configurations is
$O(2^{\mathrm{poly}(1/\epsilon)})$ since $\theta$ and $\eta$ each
depend polynomially on $1/\epsilon$.  There are $O(n)$ vertices of the
recursion tree and so the time required for the dynamic program, not
including the base cases is $O(2^{\mathrm{poly}(1/\epsilon)}n)$.

The total running time of the dynamic program is
$O(2^{\mathrm{poly}(1/\epsilon)}n)$ where $n$ is the number of
vertices in the parcel.

\subsection{Correctness}\label{sec:approximation-scheme-correctness}
  
The connecting property guarantees that the final solution is feasible
(satisfying the conectivity requirements).  The definitions of
compatible and consistent guarantee the inductive hypothesis.

We show that the procedure {\sc fill} correctly computes the cost of a
minimum-cost subgraph $H_{u}$ of $W(u)$ that meets the configuration
$K_0$.  We have shown that this is true for the leaves of the
recursion tree.  Since $K$ is the configuration corresponding to the
cut $\delta(\set{u_0})$, $K$ is a star.  Therefore $\cost(K)$ is the
cost of the edges of $\delta(\set{u_0})$: $K$ is both the
configuration and a minimum-cost subgraph that meets that
configuration. Further, $\cost(K \cap(\cup_{i=1}^s K_i))$ is the cost
of the edges of $K$ that are in $K_i$ (for $i = 1, \ldots, s$).
$w(\cap_{i=1}^{s} K_i)$ is equal to the cost of the edges common to
$K_1$ and $K_2$ if $s=2$ and zero otherwise.  By the inductive
hypothesis the cost computed is that of a $H_u$: the subgraph of
$W(u)$ of a minimum-cost graph that meets this configuration.
  
Consider the entries of $DP_r$ where $r$ is the root of $\widehat T$.
Since $\delta(W(r))$ is empty, there is only one
configuration corresponding to this subproblem: the trivial
configuration.  Therefore, the dynamic program finds the optimal
solution in $\BC(P)$.

As argued in Section~\ref{sec:ptas}, combining parcel solutions forms
a valid solution in our input graph.  We need to compare the cost of
the output to the cost of an optimal solution.

Recall that new terminals are added at the parcel
boundaries to guarantee connectivity between the parcels; let
$\rvec^+$ denote the requirements including these new terminals.  Let
$S(G,\rvec)$ denote the optimal solution in graph $G$ with
requirements $\rvec$.

For each parcel $P$, there is a (possibly empty) solution $S_P$ in
$\BC(P)$ for the original and new terminals in $P$ consisting of edges
of $S(\PRG,\rvec) \cup \partial {\cal H}$ (where $\cal H$ is the set
of parcels and $\partial {\cal H}$ is the set of boundary edges of all
parcels).  We
have:
\[ \cost(S(\PRG,\rvec)\cap \BC(P)) \leq \cost(S_P)=\cost(S_P\setminus
\partial {\cal H})+\cost(S_P\cap\partial {\cal H} ).\] Every edge of
$S(\PRG,\rvec)$ not in $\partial {\cal H}$ appears in $S_P$ for
exactly one parcel $P$, and so 
\[\sum_{P \in {\cal H}}
\cost(S_P\setminus
  \partial {\cal H}) \leq \cost(S(\PRG,\rvec)).\]
 Every edge of
$\partial {\cal H}$ appears in at most two parcels, and so 
\[\sum_{P \in {\cal
    H}} \cost(S_P\cap \partial {\cal H}) \leq 2\cdot
\cost(\partial {\cal H}).\]
Since a feasible solution for the original and new terminals in $\PRG$
can be obtained by adding a subset of the edges of $\partial{\cal H}$
to $S(\PRG,\rvec)$, the cost of the output of our algorithms is at most
$$\cost(\partial {\cal H})+\sum_{P \in {\cal H}} {S(\BC(P), \rvec^+)} \leq \cost(S(\PRG, \rvec)) + 3 \cost(\partial {\cal
    H}).$$
Combining the cost of the parcel boundaries, the definition of
$\eta$, and the cost of the mortar graph, we obtain
$\cost(\partial {\cal H}) \leq \frac{1}{2} \epsilon \cost(S(G, \rvec))
= \frac{1}{2} \epsilon \opt$.
Finally, by Theorem~\ref{thm:structure}, the cost of the output is
at most $(1+c\epsilon)\ \opt$.  This gives:
\begin{theorem}
  There is an approximation scheme for solving the $\set{0,1,2}$-edge
  connectivity problem (allowing duplication of edges) in planar
  graphs.  The running time is $O(2^{\mathrm{poly}(1/\epsilon)}n + n
  \log n)$.
\end{theorem}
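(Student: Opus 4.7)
The plan is to assemble the pieces already built in Sections 4, 5, and 6 into a single argument that verifies feasibility, approximation ratio, and running time for the full algorithm. Concretely, I would instantiate the five-step framework of Section~\ref{sec:ptas}: (i) build $\MG$ via the $O(n\log n)$ algorithm of \cite{BKM09}; (ii) partition $\MG$ into parcels using the dual BFS plus Baker-style shifting with parameter $\eta$ from Equation~\eqref{eq:eta}, so boundary edges have cost at most $\tfrac{1}{\eta}\cost(\MG)\le\tfrac{\epsilon}{2}\opt$; (iii) add artificial terminals on parcel boundary cycles that separate real terminals, assigning $\rvec(v)=2$ or $\rvec(v)=1$ as described, which makes per-parcel solutions recombinable into a global feasible solution; (iv) place $\theta$ portals per brick via Lemma~\ref{lem:portal-distance}; (v) on each filled-in parcel $P$, run the dynamic program of Section~\ref{sec:dp} on $\BC(P)$ to find an exact optimal feasible solution.

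For correctness of the output as a feasible solution for $(G,\rvec)$, I would invoke Lemma~\ref{lem:feasibility} on each parcel separately (the dynamic program returns a subgraph of $\BC(P)$ meeting the connecting configurations, hence satisfies requirements after contraction of portal edges), and then observe that the artificial terminals introduced in Step~3 force any union of per-parcel feasible solutions to reconnect across shared boundaries so the global union is feasible in $G$.

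For the approximation ratio, the argument proceeds as in Section~\ref{sec:approximation-scheme-correctness}: for each parcel $P$, the restriction of the solution $\widehat{\opt}$ guaranteed by Theorem~\ref{thm:structure} (intersected with $\BC(P)\cup \partial\mathcal{H}$) is a candidate feasible solution for the parcel's instance, so the dynamic program returns something at most as costly. Summing over parcels, each non-boundary edge of $\widehat\opt$ is charged once and each boundary edge at most twice, so the total output cost is at most
\[
\cost(\widehat\opt)+2\cost(\partial\mathcal{H}) \le (1+c\epsilon)\opt + 2\cdot\tfrac{\epsilon}{2}\opt = (1+(c+1)\epsilon)\opt,
\]
using Theorem~\ref{thm:structure} for the first summand and the bounds from Step~2 combined with Equation~\eqref{eq:MG-length} for the second. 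Rescaling $\epsilon$ gives the claimed $(1+\epsilon)$-approximation.

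For the running time, Step~1 is $O(n\log n)$, Steps~2--4 are $O(n)$, and Step~5 runs the dynamic program on each parcel. The running time analysis at the end of Section~\ref{sec:dp} shows the per-parcel cost is $O(2^{\mathrm{poly}(1/\epsilon)}|V(P)|)$; since each vertex lies in at most two parcels, summing over parcels gives $O(2^{\mathrm{poly}(1/\epsilon)}n)$, and adding Step~1 yields the total $O(2^{\mathrm{poly}(1/\epsilon)}n+n\log n)$. The only slightly delicate accounting step, and the place I would be most careful, is the double-charging of parcel-boundary edges and the choice of constants in $\eta$, $\theta$, and $\alpha$ in Equations~\eqref{eq:eta}, \eqref{eq:theta}, \eqref{eq:alpha}, so that the additive $O(\epsilon)\opt$ contributions from boundary doubling, from the Structure Theorem, and from portal detours all combine into a single $(1+c'\epsilon)$ factor that can be absorbed by rescaling.
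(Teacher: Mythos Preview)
Your proposal is correct and follows essentially the same approach as the paper's argument in Section~\ref{sec:approximation-scheme-correctness}: assemble Steps~1--5 of Section~\ref{sec:ptas}, use the Structure Theorem to bound the per-parcel optima, and sum the dynamic-program running times over parcels. The only minor discrepancy is in the boundary-edge accounting---the paper obtains $3\cost(\partial\mathcal{H})$ rather than your $2\cost(\partial\mathcal{H})$, because edges of the reference solution that lie on $\partial\mathcal{H}$ are themselves counted in two parcels in addition to the explicitly added parcel boundaries---but since both quantities are $O(\epsilon)\opt$ this does not affect the conclusion.
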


\paragraph{Comments}
The PTAS framework used is potentially applicable to problems where
(i) the input consists of a planar graph $G$ with edge-costs and a
subset $Q$ of the vertices of $G$ (we call $Q$ the set of {\em
  terminals}), and where (ii) the output spans the terminals.  Steiner
tree and two-edge-connectivity have been solved using this framework.
The PTAS for the subset tour problem~\cite{Klein06} (which was the
inspiration for this framework) can be reframed using this technique.
Since the extended abstract of this work first appeared, Borradaile,
Demaine and Tazari have also this framework to give PTASes for the
same set of problems in graphs of bounded genus~\cite{BDT12}, Bateni,
Hajiaghayi and Marx~\cite{BHM09} have extended the framework to the
Steiner forest problem and Bateni~et~al.~\cite{BCEHKM11} have extended
the framework to prize collecting problems.

\paragraph{Acknowledgements} The authors thank David Pritchard for
comments on early versions of this work and discussions with Baigong Zheng regarding Theorems~\ref{thm:2vc-terminal} and ~\ref{thm:cycle-and-path}.
  This material is based upon
work supported by the National Science Foundation under Grant Nos.~CCF-0964037, CCF-0963921, CCF-14-09520 and by a Natural Science and Research
Council of Canada Postdoctoral Fellowship.

\end{document}